\documentclass[11pt]{article}
\usepackage[margin=1in]{geometry}
\usepackage[T1]{fontenc}
\usepackage{tgtermes}
\usepackage{setspace}
\onehalfspacing

\usepackage{amsmath,amsfonts,amssymb,mathtools}
\usepackage{amsthm,thmtools}
\declaretheoremstyle[headfont=\normalfont\rmfamily\bfseries\scshape]{ORhead}

\usepackage{algorithm}
\usepackage{algpseudocode}
\usepackage{tikz}

% Natbib setup for author-number style
\usepackage{natbib}
 \bibpunct[, ]{(}{)}{,}{a}{}{,}%
 %
 %
 %
 %
 %

% Choose a citation style by commenting/uncommenting the appropriate line:
\setcitestyle{acmnumeric}
% \setcitestyle{authoryear}

\usepackage{graphicx}
\usepackage{array}
\usepackage{multirow, hhline}
\usepackage{paralist}

\usepackage{algpseudocode}
\usepackage{xcolor}
\usepackage{wrapfig}
\usepackage[english]{babel}		% Allows hiphenization of words and other stuff
\usepackage{comment}
\usepackage{chngpage}			% Allows width to be changed dynamically
\usepackage{color}
\usepackage{blindtext}
\usepackage{multicol}
\usepackage{color}

\usepackage{pgfplots}
\usepackage{pgfgantt}
%\pgfplotsset{compat=1.15}
\usepackage{tikz}
\usetikzlibrary{arrows}
\usetikzlibrary{calc}
\usepackage{algpseudocode}
\usepackage{subcaption}
\usepackage{tabto}
\usepackage{csquotes} %quotations
\usepackage{multirow}

% \usepackage{titlesec}
% \usepackage{titling}
% \usepackage{fontspec}
% % Specify different font for section headings
% \newfontfamily\headingfont[]{Gill Sans}
% \titleformat*{\section}{\LARGE\headingfont}
% \titleformat*{\subsection}{\Large\headingfont}
% \titleformat*{\subsubsection}{\large\headingfont}
% \renewcommand{\maketitlehooka}{\headingfont}

% Links nas refs e bookmarks
% \usepackage[colorlinks,urlcolor=black,citecolor=black,linkcolor=black,menucolor=black]{hyperref}
\usepackage{hyperref}
\usepackage[capitalise]{cleveref} 
\newcommand\pcref[1]{(\cref{#1})}

\crefname{enumi}{}{}

\setlength{\columnseprule}{1pt}

% \graphicspath{ {./} }

% --- Paragraph formatting ---
% \usepackage{parskip} 				% Allows paragraphs to have spacing
% \usepackage{indentfirst}			% Indents first paragraph
% \usepackage{titlesec}				% Allows control of spacing on sections and paragraphs
% \setlength{\parindent}{0.25in} 			% Sets the indent
% \setlength{\parskip}{3pt}			% Sets the spacing
% \titlespacing{\paragraph}{%
%   0pt}{%              left margin
%   0.5\baselineskip}{% space before (vertical)
%   1em}%               space after (horizontal)

% \newtheorem{theorem}{Theorem}
% \newtheorem*{theorem*}{Theorem}
% \newtheorem{acknowledgment}[theorem]{Acknowledgment}
% \newtheorem{corollary}[theorem]{Corollary}
% \newtheorem{definition}[theorem]{Definition}
% \newtheorem{remark}[theorem]{Remark}
% \newtheorem{lemma}[theorem]{Lemma}
% \newtheorem{problem}[theorem]{Problem}
% \newtheorem{fact}[theorem]{Fact}
% \newtheorem{proposition}[theorem]{Proposition}
% \newtheorem{observation}[theorem]{Observation}
% \newtheorem{claim}[theorem]{Claim}
% \newtheorem{conjecture}[theorem]{Conjecture}
\declaretheorem[style=ORhead]{theorem}
\declaretheorem[style=ORhead,numbered=no,name=Theorem]{theorem*}
\declaretheorem[style=ORhead]{lemma}

\declaretheorem[style=ORhead]{definition}
\declaretheorem[style=ORhead]{claim}
\declaretheorem[style=ORhead]{fact}
\declaretheorem[numbered=no]{remark}

\newcommand{\argmax}{{\operatorname{\mathrm{arg\,max}}}}

\newcommand{\R}{\mathbb{R}}

\DeclarePairedDelimiterX{\inp}[2]{\langle}{\rangle}{#1, #2}
 
\DeclarePairedDelimiterX{\cb}[1]{\{}{\}}{#1}

\newcommand{\RR}{\mathbb{R}}

\newcommand{\infd}{pole}

\spacing{1.5}

\title{\fontfamily{qhv}\selectfont\bfseries Competitive Equilibrium for Chores: from Dual Eisenberg-Gale to a Fast, Greedy, LP-based Algorithm}
% \author{Anonymous Authors}
\author{Bhaskar Ray Chaudhury\thanks{University of Illinois at Urbana-Champaign, USA} \and Christian Kroer\thanks{Columbia University, USA} \and Ruta Mehta\footnotemark[1] \and Tianlong Nan\footnotemark[2]}
% \author{Bhaskar Ray Chaudhury}
% \affiliation{University of Illinois at Urbana-Champaign, USA}
% \author{Christian Kroer}
% \affiliation{Columbia University, USA}
% \author{Ruta Mehta}
% \affiliation{University of Illinois at Urbana-Champaign, USA}
% \author{Tianlong Nan}
% \affiliation{Columbia University, USA}
\date{}

\begin{document}

\maketitle

\paragraph{{\fontfamily{qhv}\selectfont\bfseries  Abstract.}}
% Short abstract:
We study the computation of competitive equilibrium for Fisher markets with $n$ agents and $m$ divisible chores. Competitive equilibria for chores are known to correspond to the nonzero KKT points of a program that \emph{minimizes} the product of agent disutilities, which is a non-convex program whose zero points foil iterative optimization methods.
We introduce a dual-like analogue of this program, and show that a simple modification to our ``dual'' program avoids such zero points, while retaining the correspondence between KKT points and competitive equilibria. This allows, for the first time ever, application of iterative optimization methods over a convex region for computing competitive equilibria for chores. We next introduce a greedy Frank-Wolfe algorithm for optimization over our program and show a new state-of-the-art convergence rate to competitive equilibrium. Moreover, our method is significantly simpler than prior methods: each iteration of our method only requires solving a simple linear program. We show through numerical experiments that our method is extremely practical: it easily solves every instance we tried, including instances with hundreds of agents and up to 1000 chores, usually in 10-30 iterations, is simple to implement, and has no numerical issues.

\section{Introduction} 

We study competitive equilibrium (CE) for Fisher markets with chores, where a set of divisible chores need to be allocated among a set of agents. Each agent incurs \emph{disutility} from the chores assigned to her, and is compensated with payments for the chores she undertakes. We study the most classical utility model, where the disutility of each agent is linear in the chores assigned to her. Each agent is assumed to have a fixed earning requirement.
CE is a central solution concept in microeconomics which specifies how markets can assign prices and allocations in a way that satisfies demand and supply.
In the case of Fisher markets with chores, prices and allocations form a CE if (i) each agent gets her disutility-minimizing bundle that achieves her earning requirement, and (ii) all chores are fully allocated. 

% \footnote{Condition (i) distinguishes CEEI from CE in other economic models like Fisher markets and exchange markets.}.
A classical motivation for the Fisher market model with goods, is the \emph{competitive equilibrium with equal incomes (CEEI)}, which is a mechanism for fairly and efficiently dividing a set of goods among a set of agents~\citep{varian1974equity}.
The CEEI idea naturally extends to the case of chores, and still provides similar fairness and efficiency properties.
Examples of fair chore allocation problems include family members dividing household chores, workers dividing job shifts, faculty dividing teaching loads, peer review assignment, and so on.

%\BRC{Plan: cite the EG program and the long line of algorithmic work aimed at simplicity and interpretability of the algorithms-- justify the necessity of practically relevant algorithms in the goods case, and relevant work.}

In a seminal paper, \citet{BogomolnaiaMSY17} showed that CE exists in chores Fisher markets with 1-homogeneous convex disutility functions\footnote{\citet{BogomolnaiaMSY17} measure \emph{utility}, in which case the utility function is concave; for our purposes it is easier to work with the equivalent \emph{disutility} of an agent, in which case the corresponding function is convex.}, and gave an optimization characterization of the set of CE.
They also showed that, despite the resemblance to CE with goods, CE with chores has surprisingly different structural and computational properties. Firstly, the set of CE can be disconnected~\citep{BogomolnaiaMSY17}, in contrast to CE with goods where all equilibria form a convex set. 
% Secondly, computing CE in exchange markets with chores is PPAD-hard~\citep{ChaudhuryGMM22}, while there exist strong polynomial time algorithms in the goods setting~\citep{GV19}. 
Secondly, the CE set for goods is the set of optimal solutions to a convex program known as the \emph{Eisenberg-Gale} convex program, which is a program that maximizes the product of agent utilities.
In the case of chores, the set of CE is still captured by an analogous program which attempts to \emph{minimize} the product of agent \emph{disutilities}.
\citet{BogomolnaiaMSY17} show that every KKT point of this program (call it the EG program for chores) where every agent receives strictly positive distuility corresponds to a CE. Importantly, KKT points where some agent receives zero disutility do \emph{not} correspond to CE, yet such points are the global minimizes of the program. This effectively means that EG for chores has an open constraint, and the global infimum is achieved at the closure of this open constraint. This is highly problematic from an optimization perspective, because these zero points may attract iterative optimization methods.
% Unfortunately, \citet{boodaghians2022polynomial} show that this is indeed the case: first-order (and higher-order) optimization methods are attracted towards the limit points where the smallest disutility is zero, and thus do not converge to a CE. 
Due to this algorithmic difficulty, polynomial-time algorithms are only known for computing approximate CEEI~\citep{boodaghians2022polynomial, ChaudhuryGMM22}, and existing methods all rely on highly specialized subroutines that are not very practical.
%, or special cases~\citep{BS23, GargM20} (constant number of agents or constant number of chores).

As is the case for the EG program for goods, we will work with the logarithm of the % geometric mean 
product
of agent disutilities, which converts the % geometric mean 
product
of disutilities into a concave function, albeit one that we need to minimize.
% For ease of presentation, we choose a different representation of the same problem in this paper: we change the objective from minimizing the weighted product of disutilities to minimizing the weighted sum of logarithms of the disutilities. 
Note that the domain of the logarithm function automatically rules out allocations with zero disutility for some agent, and thus all KKT points of this objective correspond to CE. 
However, this does not solve the problem of applying iterative optimization methods: under this formulation the objective has infeasible points on the boundary, corresponding to zero disutility allocations, such that the objective converges to negative infinity as we approach these points. We refer to these points as 
\emph{\infd}s, in analogy to their usage in complex analysis. 
In this formulation, starting from almost every feasible points, standard iterative optimization methods move towards these {\infd}s and the objective tends to negative infinity~\citep{boodaghians2022polynomial}. 

To circumvent the poles issue, sophisticated and highly specialized iterative methods have been developed~\citep{boodaghians2022polynomial, ChaudhuryGMM22} for finding a nonzero KKT point. However, both methods require exactly solving a non-linear convex program at each iteration -- the projection subroutine in~\citep{boodaghians2022polynomial}, and the balanced flow subroutine in~\citep{ChaudhuryGMM22}. 
Exactly solving a nonlinear convex program at every iteration can be expensive, and thus these methods are not practical for large-scale problems.
% Given that general-purpose non-linear convex program solvers are seldom effective for large instances in practice~\citep{charkhgard2018linear}, this necessitates the study of simple, practically implementable approaches that do not rely on solving non-linear programs. 
A major roadblock for developing more practical algorithms is the presence of poles in the EG for chores program. 
This opens up a natural question: 
\begin{quote}
\emph{
Does there exists an optimization formulation of the chores problem that avoids the poles issue?
}
\end{quote}
An affirmative answer to this question would open up the CE for chores problem to more standard iterative optimization methods, making it more tractable both theoretically and practically.
As we discuss below, we show that such a formulation indeed exists, and it leads to significantly faster algorithms.

\paragraph{Organization.} The remainder of this section discusses our contributions and related work. \cref{Sec:prelim} gives preliminaries for Fisher markets with chores and existing mathematical programs. \cref{Sec:convex-pgm} introduces two new (dual EG) programs capturing CE with chores. \cref{Sec:Alg} provides with a new greedy, fast, LP-based algorithm to compute a CE, and proves its convergence properties. \cref{sec:experiments} demonstrates practical efficiency of our new algorithm and~\cref{sec:conclusion-and-discussion} includes conclusion and discussion.

\subsection{Our Contributions}

\paragraph{Novel program with no poles.} As our first main result, we give a novel program that maximizes a convex function subject to simple linear constraints. We show that our program circumvents the issue of poles of the EG program for chores, while maintaining the one-to-one correspondence between CE and  KKT points. To the best of our knowledge, this is the first program that makes the chores problem amenable to standard gradient-descent-type algorithmic approaches.

%As our first main result, we give a novel program such that there is a one-to-one correspondence between a CE and a KKT point of the program. However, unlike the EG program for chores, our program does not have an unbounded objective direction.

%\RM{In my mind, infinite rays may correspond to KKT points but do not correspond to CEEI. Is that correct? For our exterior point work, we advertised the "open constraint" as one of the difficulties/annoying issue. Why don't we just say that ours does not have open constraints like the previous program?}

The inspiration for our new program comes from the dual of the EG program in the context of goods. The dual EG program for goods has been used extensively for deriving effective algorithms, as we discuss in \cref{sec:related-work}. 
% algorithmically by \citet{birnbaum2011distributed} to investigate the convergence of \emph{proportional response dynamics}, and by \citet{gao2021online} for computing a CEEI with goods online. So naturally, we investigate whether a similar dual formulation for CE with chores exists. 
In a formal sense, there is no duality theory for the chores setting, since we are dealing with nonconvex programs.
Nonetheless, we develop a natural ``dual'' formulation of the EG program for chores. 
However, just as the primal EG program for chores, the {\em natural} dual EG program for chores also has poles. 
% [removed this sentence, since ``unbounded-objective direction'' has been introduced]}
Surprisingly, we show that the poles of our new program can be removed by adding a linear constraint. This constraint not only gets rid of all \infd s, but also preserves all KKT points without introducing any new KKT points. 
% \ntl{The following could probably be ignored in the introduction, in particular, \eqref{chores dual redundant} is in a main technical section} 
We call the new program \eqref{chores dual redundant}. 
We observe that, although a non-convex optimization problem, \eqref{chores dual redundant} has strong duality-like structural properties: % like (i) it is a strictly convex maximization program subject to only linear constraints, and (ii) 
its objective {\em matches} that of the primal EG program at the KKT points. %The full details of the novel program, its properties, and its connections with the primal program are discussed in~\cref{Sec:convex-pgm}. 

% \begin{theorem*}
%     \label{theorem-intro-convexpgm}
%     Given an instance of chores allocation, there exists a one-to-one correspondence between CE and the KKT points of~\eqref{chores dual redundant}. Furthermore,~\eqref{chores dual redundant} exhibits no \infd s.
% \end{theorem*}

\paragraph{Greedy Frank Wolfe algorithm.} Armed with this new program, it is natural to attempt a first-order method to compute a CE with chores. In this paper, we adopt the greedy variant of the famous \emph{Frank-Wolfe} (FW) method~\citep{frank1956algorithm}. 
% We next give the reader a high-level intuition on why our GFW technique works. Since our objective function $f$ is strictly convex, we have,
% \begin{align*}
%     f(x_k) - f(x_{k-1}) > \langle \nabla f(x_{k-1}), x_{k} - x_{k-1} \rangle
% \end{align*}
% Note that since $s_k$ is the maximizer for $\langle \nabla f(x_{k-1}), x_{k} - x_{k-1} \rangle$, and $x_{k-1}$ is also a feasible point, we have 
% % \begin{align*}
% %   \langle \nabla f(x_{k-1}), x_{k} - x_{k-1} \rangle  &= \langle \nabla f(x_{k-1}), s_{k} - x_{k-1} \rangle\\
% %   &\geq \langle \nabla f(x_{k-1}), x_{k-1} - x_{k-1} \rangle \\ 
% %   &= 0.
% % \end{align*}
% \begin{equation*}
%   \langle \nabla f(x_{k-1}), x_{k} - x_{k-1} \rangle  = \langle \nabla f(x_{k-1}), s_{k} - x_{k-1} \rangle
%   \geq \langle \nabla f(x_{k-1}), x_{k-1} - x_{k-1} \rangle 
%   = 0.
% \end{equation*}
% Therefore, we have a sequence of iterates that improves our objective function. Furthermore, since $$\max_{s \in \mathcal{X}} \; \inp*{\nabla f(x_{k-1})}{s - x_{k-1}}$$ is an LP\footnote{Recall that $\mathcal{X}$ is a polytope as we have only linear constraints in \eqref{chores dual redundant}}, we can assume that each $x_k$ is a vertex of $\mathcal{X}$, implying that after finitely many iterations, our objective will not improve, i.e., for all $s \in \mathcal{X}$, we have $\langle \nabla f(x_{k-1}), s - x_{k-1} \rangle = 0$, which would correspond to a KKT point of \eqref{chores dual redundant}. 
In~\cref{Sec:Alg}, we show that this method converges to an $\epsilon$-approximate CE (formally defined in Definition~\eqref{def:appx-CEEI}) in $\mathcal{\tilde{O}}(\frac{n}{\epsilon^2})$ iterations, 
% \footnote{This hides polynomial factors in the number of agents and number of chores}, 
matching the convergence guarantees in~\citet{boodaghians2022polynomial, ChaudhuryGMM22}, in terms of dependence on $\epsilon$, while improving the dependence on the number of buyers and chores. 
To be specific, the exterior point method by~\citet{boodaghians2022polynomial} requires $\mathcal{\tilde{O}}(\frac{n^3}{\epsilon^2})$ iterations, and the combinatorial algorithm by~\citet{ChaudhuryGMM22} requires $\mathcal{\tilde{O}}(\frac{nm}{\epsilon^2})$ iterations.
%This is where the main technical bulk of the paper lies.
% \ck{explicitly write dependence on number of buyers and items}
% \begin{theorem*}
%     \label{theorem-intro-alg}
%     The GFW method finds a $\epsilon$-approximate CEEI in $\mathcal O(n \log (mD)/\epsilon^2)$ iterations, where $n,m$ and $D$ represent the number of agents, number of chores and the largest disutility an agent can have for one unit of any chore.
% \end{theorem*}
We emphasize that, unlike the algorithms in~\citet{boodaghians2022polynomial, ChaudhuryGMM22},  an iteration of GFW involves solving only a single LP, as opposed to more general convex programs in prior methods. %($\max_{s \in \mathcal{X}} \langle \nabla f(x_{k-1}), s - x_{k-1} \rangle$). 
Further, the dependence on problem size in the bound on the number of iterations for our GFW method is significantly smaller than prior methods~\citep{boodaghians2022polynomial, ChaudhuryGMM22}. 
Combining these facts, we expect GFW to be considerably faster in practice, especially for large instances. We confirm this intuition through a comprehensive set of empirical results.

\paragraph{Convergence for relative strongly convex maximization.}
Consider a problem of maximizing a relative-strongly-convex function $f$ (w.r.t. some reference function $h$) over a nonempty closed convex set. 
% , say maximize $f (x)$ subject to $ x \in \mathcal{X}$, let $\mathcal D_h(p \ \Vert \ q) = h(p) - h(q) - \langle \nabla h(q), p-q \rangle$, denote the Bregman divergence between $p,q \in F$, under a strictly convex \emph{reference function} $h$ (details in~\cref{Sec:prelim})\footnote{$f$ is assumed strongly convex relative to $h$ and $F$ is a polytope}. 
We show that after $T \in \mathcal{O}\Big(\frac{f^* - f^{0}}{\alpha\epsilon}\Big)$\footnote{Here, $f^*$ and $f^0$ are the optimal and initial objective values, respectively. $\alpha$ denotes the modulus of the relatively strong convexity.} iterations of the GFW algorithm, there exists an iterate $k \in [T]$, such that  the distance
% $\mathcal D_h(x_k \ \Vert \ x_{k-1}) \leq \epsilon^2$
between $x_k$ and $x_{k-1}$ is small compared to $\epsilon$, as measured by the Bregman divergence induced by the reference function. 
See~\cref{sec:general} for our generalized results. 
% Intuitively, for a strictly convex function $h$, the small Bregman divergence between $x_k$ and $x_{k-1}$, indicates close ``proximity'' between $x_k$ and $x_{k-1}$. 
%\footnote{Note that this may not be true for, say, linear functions. Say $h(x) = x$, then $D_h(x \ \Vert \ y) = 0$ for all $y \in \mathbb{R}$, even if $|y-x|$ is huge. For strictly convex functions $D_h( x \ \Vert \ y) = 0$ only if $x=y$.}.
We use this result, along with a particular choice of $h$, to show that GFW finds an $\epsilon$-approximate CE. 
We believe that this general approach could lead to interesting guarantees for other problems of maximizing a convex function (e.g. the approximate $\ell_0$-penalized PCA\footnote{Principal component analysis.} problem~\citep{luss2013conditional}, multiplicative problems~\citep{benson1995concave}, etc.). 
In the appendix, we also present an alternative primal-dual analysis of GFW specifically for the \eqref{chores dual redundant} program, with mildly stronger guarantees (see~\cref{sec:primal-dual-convergence}). 
% This analysis also gives us additional insights: for instance, any iteration where the solution of the GFW is not a good approximation of a CE, will see a large improvement in the objective function, implying that GFW iterations with bad approximate solutions are small. 

\paragraph{Empirical evaluation.} 
We compare our GFW method to the state-of-the-art existing method: the exterior-point method (EPM) from \citet{boodaghians2022polynomial}, on a large number of instances sampled from five different classes of random disutility distributions, as well as on a semi-synthetic dataset derived from a conference review bidding dataset.
Our GFW algorithm solves every instance that we tried it on, usually in 10-30 iterations, and in less than a minute for instances with several hundred buyers and chores. In contrast, the prior EPM fails on a large number of instances (for some disutility distributions it fails on every instance once the number of buyers and chores is on the order of a few hundred each), though it is usually competitive with our method in the cases where it does not fail.
We conclude that our GFW method is the first highly practical method: it is extremely robust and simple to implement, and can computes an exact CE within a minute for every instance we tried, even with upwards of 300 buyers and chores.

\subsection{Related Work}
\label{sec:related-work}

CE with goods has seen a long line of research since the 1950s. Early work~\citep{ArrowD54} showed the existence of CE, and also formulated convex programs~\citep{eisenberg1959consensus} that capture all CE under specific utility functions. In particular,~\citet{eisenberg1959consensus} show that any allocation that maximizes the product of the utility functions of the agents corresponds to a CE. Despite the existence of the convex program, and polynomial bounded rationality of the optimal solutions, there has been a long study on a variety of algorithms to compute CE under linear utilities, including interior point methods~\citep{jain2007polynomial, Ye08}, combinatorial methods~\citep{devanur2008market, Orlin10, DuanM15, GV19, ChaudhuryM18}, and first-order methods and dynamics~\citep{birnbaum2011distributed, zhang2011proportional,gao2020first}. One of the reasons for designing special-purpose algorithms is attributed to the lack of practically fast convex program solvers for large instances~\citep{Vazirani12}-- for instance, convex program solvers are significantly slower than LP solvers, and are practical only for medium-size instances (see benchmarks in Mittelmann). Furthermore, the broad algorithmic literature has also revealed several structural and economically valuable insights on CE~\citep{DuanGM16,DevanurGV13,GV19}.  
% The items to be allocated can be goods that give happiness / utility, or chores that give pain / lack utility. 
% Throughout this paper, we assume that the agents have linear utility/ disutility functions, which is also the most extensively studied case~\citep{goldman2015spliddit}
Finally, CE has been used in a variety of applications, especially in the case of linear utility or disutility functions~\citep{goldman2015spliddit,conitzer2022pacing,allouah2023fair}.
% As we said, CE has received widespread attention due to its wide applications in fair allocation. 
% Although a variety of utility functions have been considered so far, 
For example, the online platform Spliddit has seen tens of thousands of users for the past few years~\citep{goldman2015spliddit}. Spliddit uses additive preferences which is the parallel of linear utility functions in the context of sharing indivisible items. CE with linear utilities has also seen extensive use in online fair allocation~\citep{azar2016allocate,banerjee2022online,gao2021online,benade2023fair}.

The problem of computing a CE involving chores has attracted significant attention since the characterization result of \citet{BogomolnaiaMSY17}. 
There are polynomial-time algorithms if the number of agents or the number of chores is a constant: in this special case,~\citet{BS23} showed CE with chores can be computed in strongly polynomial time and \citet{GargM20} showed a polynomial-time algorithm for computing CE in settings that include both goods and chores.
\citet{chaudhury2021competitive} developed a simplex-like algorithm based on Lemke’s scheme for additively separable piecewise linear concave utilities, and showed that it can be implemented for small instances with 10-20 agents and chores. 
As mentioned previously~\citet{boodaghians2022polynomial} and \citet{ChaudhuryGMM22} developed polynomial-time algorithms for computing approximate CE with chores, where their iteratations require solving non-linear convex programs. 
% \ntl{For CE with indivisible chores, we refer readers to []. (I am not sure if we should include some papers on indivisible chores here.)}
% CK: IMO no; we do not discuss it for goods either. Also it's a huge literature to cover, and it's fairly tangential.

\section{Preliminaries}
\label{Sec:prelim}

%\ntl{Do we need a paragraph for notations here? As a reminder, we need to differentiate notations between vector $x_i$ and coordinate/entry $x_i$.}

\subsection{Fisher Market with Chores} 

Analogous to \emph{Fisher markets} in the goods setting, a chores market comprises of a set of $n$ agents, and a set of $m$ \emph{divisible} chores. Each agent $i$ has a disutility of $d_{ij}$ for one unit of chore $j$. Given a \emph{bundle} of chores $x_i \in \mathbb{R}^m_{\geq 0}$, where $x_{ij}$ denotes the amount of chore $j$, agent $i$'s total linear disutility for bundle $x_i$ is $\sum_{j \in [m]} d_{ij} x_{ij}$. We also write $\sum_{j \in [m]} d_{ij} x_{ij}$ as $\langle d_i, x_i \rangle$, where $d_i = (d_{i1}, d_{i2}, \dots, d_{im})$. 
% \ck{make "bundle" and "allocation" consistent: call it bundle}
Further, each agent $i$ has an earning requirement of $B_i \geq 0$ units of money. 

In a CE for a chores market, we wish to find a price $p_j$ for each chore $j$, and an allocation $x \in \mathbb{R}^{n \times m}_{\geq 0}$ of chores to the agents, where each $x_{ij}$ is the amount of chore $j$ allocated to agent $i$, such that (i) each agent $i$ gets a disutility minimizing bundle $x_i = (x_{i1}, x_{i2}, \dots, x_{im})$, subject to her earning requirement of $B_i$ units of money, and (ii) all chores are allocated. Formally,

\begin{definition} % [Competitive Equilibrium for Chores] 
   \label{Def:CE}
    A price vector $p \in \mathbb{R}^m_{\geq 0}$, and an allocation $x \in \mathbb{R}^{n \times m}_{\geq 0}$ satisfy \emph{competitive equilibrium} (CE) if and only if
    \begin{itemize} 
        \item[(E1)] $\inp{p}{x_i} = B_i$ for all $i \in [n]$ \label{con:Def-CE-1}; 
        \item[(E2)] $\inp{d_i}{x_i} \leq \inp{d_i}{y_i}$ for all $y_i$ such that $\inp{p}{y_i} \geq \inp{p}{x_i}$, for all $i \in [n]$ \label{con:Def-CE-2}; 
        % $x_{ij} > 0$ only if $\frac{d_{ij}}{p_j} \leq \frac{d_{ij'}}{p_{j'}}$ for all $j' \in [m]$; 
        \item[(E3)] $\sum_i x_{ij} = 1$ for all $j \in [m]$ \label{con:Def-CE-3}. 
    \end{itemize}
\end{definition} 
Condition (E1) and (E2) ensure that each agent gets a disutility-minimizing bundle subject to her earning constraint, and (E3) captures that all chores are allocated in this process. CEEI is one of the most well-studied special cases of CE, where $B_i = 1$, for all $i \in [n]$. It is a well known (see~\citet[Lemma 1]{BogomolnaiaMSY17}) that if $(p,x)$ satisfy CEEI, then (i) $x$ is \emph{envy-free}, i.e., $d_i(x_i) \leq d_i(x_j)$ for all $i, j \in [n]$, and (ii) $x$ is \emph{Pareto-optimal}, i.e., there is no $y$ such that $d_i(y_i) \leq d_i(x_i)$ with at least one strict inequality. CEEI satisfies many other desirable properties like \emph{core-stability}, and we refer the reader to~\citet{BogomolnaiaMSY17} for a more detailed explanation of the remarkable fairness and efficiency properties satisfied by CEEI. The existence of CE in the chores market can be shown by adapting the fixed point arguments in~\citet{ArrowD54}. 

% \paragraph{Computing CE in Chores Market.} The exact complexity of finding CE in the chores market is open. However, there are algorithms~\citep{boodaghians2022polynomial, ChaudhuryGMM22} that compute an approximate CEEI in the chores market in polynomial time. 
Our algorithms will be concerned with computing approximate CE solutions. To that end, we relax conditions (E1), (E2), and (E3) in~\cref{Def:CE} as follows,

\begin{definition}[$\epsilon$-approximate CE~\citep{boodaghians2022polynomial}] 
   \label{def:appx-CEEI}
A price vector $p \in \mathbb{R}^m_{\geq 0}$, and an allocation $x \in \mathbb{R}^{n \times m}_{\geq 0}$ satisfy $\epsilon$-\emph{competitive equilibrium} (CE) if and only if
    \begin{itemize} 
        \item[(1)] $(1 - \epsilon) B_i \leq \inp{p}{x_i} \leq \frac{1}{1 - \epsilon} B_i$ for all $i \in [n]$; 
        \item[(2)] $(1 - \epsilon) \inp{d_i}{x_i} \leq \inp{d_i}{y_i}$ for all $y_i$ such that $\inp{p}{y_i} \geq \inp{p}{x_i}$, for all $i \in [n]$; 
        % $x_{ij} > 0$ only if $\frac{d_{ij}}{p_j} \leq \frac{d_{ij'}}{p_{j'}}$ for all $j' \in [m]$; 
        \item[(3)] $1 - \epsilon \leq \sum_i x_{ij} \leq \frac{1}{1 - \epsilon}$ for all $j \in [m]$. 
    \end{itemize}
\end{definition} 
$\epsilon$-approximate CEEI is also analogously defined. Furthermore, we consider a stronger notion of approximate CE, called \emph{$\epsilon$-strongly approximate CE}. 

\begin{definition}\label{def:sappx-CEEI} A pair of $(p, x)$ is said to be an $\epsilon$-strongly approximate CE if $(E2)$ and $(E3)$ holds strictly and only $(E1)$ is relaxed to $(1)$. 
\end{definition}

\subsection{Mathematical Programming Approaches to Competitive Equilibria}

In the case of CE for goods, it is known that a CE allocation can be obtained from the Eisenberg-Gale convex program. In the goods case with linear utilities, each buyer has a value vector $v_i \in \R_+^m$ describing their value per good, and they compute demands by maximizing their obtained value subject to budget constraints. The EG program for this case is stated on the left below.
\begin{equation}
    \left.
    \begin{aligned}
        \max_{x \geq 0} \; &\sum_{i \in [n]} B_i \log\bigg(\sum_{j \in [m]} v_{ij}x_{ij} \bigg) \\
        \text{ s.t. } & \sum_{i \in [n]} x_{ij} =1 \quad  \forall j \in [m]. % \label{Pgm:EG-Goods}
    \end{aligned}
    \hspace{30pt} 
    \right|
    \hspace{30pt}
    \begin{aligned}
        \min_{\beta \geq 0, p \geq 0} &\sum_j p_j - \sum_i B_i \log(\beta_i) \\
        \text{ s.t. } & p_j \geq \beta_i v_{ij}\quad  \forall i,j.
        \label{Pgm:EG-goods-dual}
    \end{aligned} 
\end{equation}
On the right in~\eqref{Pgm:EG-goods-dual} is the dual of the EG program, %which minimizes a function of  prices for the goods, and per-buyer inverse bang-per-buck variables. %\ntl{How about this: 
which minimizes a function of $p$ and $\beta$, where $p$ and $\beta$ can be interpreted as prices for the goods, and inverse bang-per-buck variables for the buyers, respectively.
We will use the dual of the goods EG program as inspiration for our chores approach.

\citet{BogomolnaiaMSY17} show an interesting characterization of all CE in the chores market. Similar to the case of goods,  they show that for chores there is also an EG-style program for computing a CE.
More specifically, they show that there is a one-to-one correspondence between the set of CE and all KKT points of the following program,
\begin{equation}
    \begin{aligned}
        \inf_{x \geq 0} \; & \prod_{i \in [n]} \bigg(\sum_{j \in [m]} d_{ij}x_{ij} \bigg)^{B_i} \\
        \text{ s.t. } & \sum_{i \in [n]} x_{ij} =1 \quad  \;\;\;\;\; \forall\, j \in [m] \\
        &\sum_{j \in [m]} d_{ij}x_{ij} > 0 \quad \forall\, i \in [n].
    \end{aligned}
    \label{Pgm:EG-Chores-prod}
\end{equation}
Applying the logarithm, which is a monotone transformation, to the objective yields a mathematical program analogous to the EG program for goods,
\begin{equation}
    \begin{aligned}
        \inf_{x \geq 0} \; &\sum_{i \in [n]} B_i \cdot \log\bigg(\sum_{j \in [m]} d_{ij}x_{ij} \bigg) \\
        \text{ s.t. } & \sum_{i \in [n]} x_{ij} =1 \quad \;\;\;\;\;  \forall\, j \in [m].
    \end{aligned}
    \label{Pgm:EG-Chores-sumlog}
\end{equation}
After applying the logarithm, we no longer need an explicit open constraint, since the domain constraint on the logarithm already enforces strictly positive disutilities.

However, this program has several disadvantages compared to the goods case. Most notably, it is a non-convex program, 
% (even after taking the logarithm of the objective), 
% and it has an (implicit) open constraint.
and the objective function can tend to negative infinity within the feasible region (as the disutility of an agent tends to zero).
% The open constraint $\sum_{j \in [m]} d_{ij}x_{ij} > 0 \quad \forall i \in [n]$ rules out allocations where one agent has zero disutility (which would otherwise be the globally-optimal solutions), and the global infimum is achieved at the closure of these constraints. 
This makes it challenging algorithmically, because iterative methods that operate inside the feasible set are attracted to these ``negatively-infinite-objective'' points \citep{boodaghians2022polynomial}.
% In this paper, we work with an equivalent formulation (Program~\ref{Pgm:EG-Chores-sumlog}), where we change our objective from the product of disutilties to the sum of logarithms of disutilties. 

% While the domain of the log function dictates that the total disutility of every agent is strictly positive\footnote{Therefore, we do not need to explicitly impose the open constraints.}, 
% However, the program now admits \infd s -- the objective tends to negative infinity as the disutility of an agent tends to zero. Again, first and second order methods are now attracted towards such directions \citep{boodaghians2022polynomial}.

% \BRC{@Tianlong -- will you add preliminaries on Bregman Distance?-- I added some details in the intro based on my understanding.}

\subsection{Bregman Divergence and Generalized Strong Convexity} 

% \Omega: a convex set; \mathcal{X}: a *closed* convex set
Let $h: \Omega \rightarrow \mathbb{R}$ be a differentiable, strictly convex function defined on a convex set $\Omega$. 
The Bregman divergence associated with $h$ between $y, x \in \Omega$ is defined as 
$D_h(y \,\Vert\, x) = h(y) - h(x) - \inp{\nabla h(x)}{y - x}$. 
The function $h$ is called the reference function of $D_h(\cdot\,\Vert\,\cdot)$. 
For example, if $h(x) = \frac{1}{2} {\left\Vert x \right\Vert}^2$ then 
% the Bregman divergence is the Euclidean distance 
$D_h(y\,\Vert\,x) = \frac{1}{2} {\left\lVert y - x \right\Vert}^2$. 
Since $h$ is strictly convex, 
$D_h(y \,\Vert\, x) \geq 0$ for all $y, x \in \Omega$, and $D_h(y \,\Vert\, x) = 0$ iff $y = x$.  

We say a differentiable function $f$ is \emph{$\alpha$-strongly convex with respect to $h$} if $\text{dom}\,f \subseteq \Omega$ and 
% for any $y, x \in \text{dom}\,f$ 
\begin{equation}
     f(y) \geq f(x) + \inp{\nabla f(x)}{y - x} + \alpha D_h(y\,\Vert\,x) \quad \forall\, y, x \in \text{dom}\,f. 
    \label{eq:generalized-strong-convexity}
\end{equation} 
We refer~\cref{eq:generalized-strong-convexity} with an arbitrary $D_h(\cdot\,\Vert\,\cdot)$ as \emph{generalized strong convexity} or \emph{relatively strong convexity}, and it reduces to (standard) strong convexity if $D_h(y\,\Vert\,x) = \frac{1}{2} {\left\lVert y - x \right\Vert}^2$. 

\section{The Chores Dual Program}
\label{Sec:convex-pgm}

In this Section, we introduce a novel program with no \infd s, such that there is a one-to-one correspondence between CE in the chores market and the KKT points of our program. Our inspiration for this program comes from the dual program for CE in the goods setting (\cref{Pgm:EG-goods-dual}).
% In the context of goods, the Eisenberg-Gale convex program has a dual program as follows~\citep{cole2017convex}
% \begin{align*}
%     \min_{\beta \geq 0, p \geq 0} &\sum_j p_j - \sum_i B_i \log(\beta_i) \\
%     \text{ s.t. } & p_j \geq \beta_i v_{ij}\quad  \forall i,j.
% \end{align*} 
In the goods case, the dual program has been used extensively in order derive various algorithmic approaches for computing CE. The \emph{proportional response dynamics} can be derived by a change of variable and dualization step on the goods EG dual~\citep{zhang2011proportional,birnbaum2011distributed}, and the \emph{PACE} algorithm for online Fisher market equilibrium is also derived from this program~\citep{gao2021online,liao2022nonstationary}.
Given the algorithmic usefulness of the EG dual in the goods case, it is natural to ask whether there is an analogue for the chores case. The EG program for chores, however, is not a convex program, and thus convex duality does not let us directly derive a dual.
Nonetheless, we may \emph{guess} that the chores EG ``dual'' should be the maximization version of the goods EG dual, analogously to how the chores ``primal'' is the minimization version of the goods EG primal.
This guess yields the following program which maximizes a convex function:
\begin{equation}
    \begin{aligned}
        \sup_{\beta \geq 0, p \geq 0} \quad & \sum_j p_j - \sum_i B_i \log(\beta_i) \\
        \text{ s.t.} \quad & \; p_j \leq \beta_i d_{ij} \quad\quad  \forall\; i,j.
    \end{aligned}
    \tag{Chores Dual}
    \label{chores EG dual}
\end{equation}
% \ntl{Should we add $\beta > 0$ constraint here?}

Henceforth we shall refer to this as the \emph{chores dual}, even though it is not a dual program in the formal sense of convex duality.
It is worth noting that, for example, we do not have global weak duality: some feasible pairs $(\beta,p)$ of \eqref{chores EG dual} yield a dual objective that does not lower bound the primal objective (this is easy to see, since the dual objective tends to infinity near poles).

Even though it is not a formal dual of the primal program, it turns out to behave in a dual-like manner at KKT points.
Indeed, the KKT points of this program yield chores CE, similar to how the primal EG for chores yields CE at all KKT points. 
We emphasize here that in much of the prior literature on Nash welfare minimization for chores, the focus has been on finding \emph{non-zero-disutility} KKT points; once we take the log of the Nash welfare objective, this constraint is implicitly enforced as a domain constraint on the log function, thus all KKT points become CE, both in the ``primal'' and ``dual'' EG programs for chores.

\begin{theorem}
    There is a one-to-one correspondence between competitive equilibria of a chores Fisher market and the KKT points of~\eqref{chores EG dual}.
    \label{thm:dual EG kkt correspondence}
\end{theorem}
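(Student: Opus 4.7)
The plan is to make the correspondence explicit by writing down the KKT conditions of \eqref{chores EG dual} and showing that the multipliers on the constraints $p_j \le \beta_i d_{ij}$ are exactly the CE allocation, while $p$ plays the role of CE prices. Introducing multipliers $x_{ij}\ge 0$ for those constraints, $\lambda_j\ge 0$ for $p_j\ge 0$, and noting that $\beta_i>0$ is forced by the log so no multiplier is needed there, stationarity in $p_j$ gives $\sum_i x_{ij} = 1 + \lambda_j$ with $\lambda_j p_j = 0$, stationarity in $\beta_i$ gives $\beta_i\sum_j x_{ij}d_{ij} = B_i$, and complementary slackness gives $x_{ij}(\beta_i d_{ij}-p_j) = 0$ together with primal feasibility $p_j\le \beta_i d_{ij}$.

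For the KKT~$\Rightarrow$~CE direction I would take $x$ as the allocation and $p$ as the prices. Condition (E1) drops out by combining complementary slackness with the $\beta_i$-stationarity:
\[
\inp{p}{x_i} = \sum_j x_{ij} p_j = \sum_j x_{ij}\beta_i d_{ij} = \beta_i \inp{d_i}{x_i} = B_i.
\]
Condition (E2) follows because primal feasibility $d_{ij}\ge p_j/\beta_i$ yields $\inp{d_i}{y_i}\ge \inp{p}{y_i}/\beta_i \ge B_i/\beta_i$ for any $y_i$ with $\inp{p}{y_i}\ge \inp{p}{x_i}$, while complementary slackness forces $\inp{d_i}{x_i}=B_i/\beta_i$, so $x_i$ is disutility-minimizing at its income level. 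Condition (E3) requires $\lambda_j = 0$; I will argue this from $p_j>0$, since if $p_j=0$ then complementary slackness would force $x_{ij}=0$ for every $i$ with $d_{ij}>0$, giving $\sum_i x_{ij}=0$ in contradiction with $\sum_i x_{ij}=1+\lambda_j\ge 1$ (under the standing assumption that each chore has at least one agent with positive disutility).

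For the CE~$\Rightarrow$~KKT direction I define $\beta_i := B_i/\inp{d_i}{x_i}$, which is positive under the same mild assumption. By (E2), agent $i$'s bundle is supported only on chores minimizing $d_{ij}/p_j$, and that minimum must equal $1/\beta_i$ because $\inp{d_i}{x_i}=B_i/\beta_i$; this yields the dual feasibility $p_j\le \beta_i d_{ij}$ with equality whenever $x_{ij}>0$, i.e., complementary slackness. Setting $\lambda_j=0$ (legitimate since (E3) already gives $\sum_i x_{ij}=1$) and using the definition of $\beta_i$ checks off both stationarity conditions.

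The main obstacle, and the only delicate bookkeeping in the argument, is ruling out $\lambda_j>0$ at a KKT point, i.e., ensuring prices are strictly positive so that the $p_j$-stationarity $\sum_i x_{ij}=1+\lambda_j$ collapses to the market-clearing equation $\sum_i x_{ij}=1$. Everything else is a straightforward rearrangement of the KKT system into the CE conditions and back. I will state the mild non-degeneracy assumption on the disutility matrix explicitly and then argue the $p_j>0$ claim by the short contradiction above, after which both directions are one-line verifications.
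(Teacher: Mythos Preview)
Your approach is essentially the paper's: both write out the KKT system for \eqref{chores EG dual} with $x_{ij}$ as the multipliers on $p_j\le\beta_i d_{ij}$, read off (E1)--(E2) from stationarity plus complementary slackness, and for the converse set $\beta_i=B_i/\inp{d_i}{x_i}$ and check the same list. You are in fact more careful than the paper about (E3): the paper only writes ``$\sum_i x_{ij}=1$ if $p_j>0$'' and moves on, whereas you supply the contradiction argument for $p_j>0$.

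One small correction to that argument: the assumption you state---that each chore has \emph{at least one} agent with positive disutility---is not strong enough. If some agent $k$ has $d_{kj}=0$, the feasibility constraint $p_j\le\beta_k d_{kj}=0$ already forces $p_j=0$, and complementary slackness places no restriction on $x_{kj}$, so $\sum_i x_{ij}$ need not be zero and your contradiction does not fire. What you actually need (and what the paper tacitly assumes throughout) is $d_{ij}>0$ for \emph{all} $i,j$; under that hypothesis your argument is correct as written.
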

\begin{proof}
    First, we show that any KKT point of~\eqref{chores EG dual} is a CE. 
    Let $(\tilde{p}, \tilde{\beta})$ denote a feasible point in the primal space of~\eqref{chores EG dual}, and $\tilde{x}$ be the dual variable associated with the constraint in~\eqref{chores EG dual}.
    Note that we overload notation slightly here: $(p, \beta, x)$ can, at this stage, only be understood to be primal and dual variables that arise from the programs, whereas we previously used these notations to refer to market quantities like prices and allocations. We will show that the primal and dual variables do in fact have these interpretations, thus justifying the notation.
    By the KKT conditions, any KKT point $(\tilde{p}, \tilde{\beta}, \tilde{x})$ of~\eqref{chores EG dual}  
    % is a KKT point of~\ref{chores EG dual} if and only if it 
    satisfies 
    \begin{enumerate}[(i)]
        \item $p, x \geq 0, \beta > 0$; %\footnote{Implied by the domain constraint in the log function.}; 
        \label{dual-KKT-con-1}
        \item $d_{ij} \beta_i \geq p_j$ and $x_{ij} \left( d_{ij} \beta_i - p_j \right) = 0$ for all $i \in [n], j \in [m]$; \label{dual-KKT-con-2}
        \item $\sum_i x_{ij} \geq 1$ and $p_j \left( \sum_i x_{ij} - 1 \right) = 0$ for all $j \in [m]$; \label{dual-KKT-con-3}
        \item $\sum_j d_{ij} x_{ij} \geq \frac{B_i}{\beta_i}$ and $\beta_i \left( \sum_j d_{ij} x_{ij} - \frac{B_i}{\beta_i} \right) = 0$ for all $i \in [n]$.\label{dual-KKT-con-4}
    \end{enumerate}
    Note the $\beta > 0$ (instead of $\beta \ge 0$) in $(i)$ above. 
    This is implied by the domain constraint of the log function on $\beta_i$s. 
    By (\cref{dual-KKT-con-2}), we have $\frac{\tilde{p}_j}{d_{ij}} = \tilde{\beta}_i \geq \frac{\tilde{p}_{j'}}{d_{ij'}}$ for all $j' \in [m]$, if $\tilde{x}_{ij} > 0$ for all $i \in [n], j \in [m]$, which is equivalent to $(E2)$ in~\cref{Def:CE}. Combining (\cref{dual-KKT-con-2}) and (\cref{dual-KKT-con-4}), we have $\sum_j \tilde{p}_j \tilde{x}_{ij} = \sum_j d_{ij} \tilde{\beta}_i \tilde{x}_{ij} = B_i$ for all $i \in [n]$. Also, (\cref{dual-KKT-con-3}) leads to $\sum_i \tilde{x}_{ij} = 1$ if $\tilde{p}_j > 0$ for all $j \in [m]$. Thus, by definition, $(\tilde{p}, \tilde{x})$ constitutes a CE if we regard $p$ as price and $x$ as allocation.

    Next, we prove that any CE is a KKT point of~\eqref{chores EG dual}. Given any CE $(p, x)$, we denote $\beta_i := \frac{B_i}{\sum_j d_{ij} x_{ij}}$ for all $i \in [n]$. Then, $(p, \beta, x)$ trivially satisfies (\cref{dual-KKT-con-1}), (\cref{dual-KKT-con-3}) and (\cref{dual-KKT-con-4}). Since, for all $i$, $\sum_j d_{ij} x_{ij} = \max_{x'_i: \inp{p}{x'_i} \geq B_i} \sum_j d_{ij} x'_{ij}$ and $\sum_j d_ij x_ij \le \frac{d_{ij'}}{p_j} \cdot B_i$ for all $j'$, we have $d_{ij} \beta_i = \frac{d_{ij} B_i}{\sum_j d_{ij} x_{ij}} \geq \frac{d_{ij} B_i}{d_{ij} ({B_i} / {p_j}) } = p_j$ for all $i \in [n], j \in [m]$. For all $i \in [n]$, $\sum_{j \in [m]} x_{ij} \left( p_j - d_{ij} \beta_i \right) = B_i - B_i = 0$ since $\inp{p}{x_i} = B_i$. Then, (\cref{dual-KKT-con-2}) follows since $x_{ij} \left( p_j - d_{ij} \beta_i \right) \leq 0$ for all $i \in [n], j \in [m]$. 
    Hence, $(p, \beta, x)$ satisfies KKT conditions. 
\end{proof}

Next we look at another property typically satisfied by primal and dual convex programs: strong duality implies that optimal primal and dual solutions have the same objective. As we already noted, our \eqref{chores EG dual} does not even satisfy weak duality. Yet, it turns out that we do have a local form of strong duality, in the sense that the objectives are equal at KKT points. To see this, we next write more explicit versions of the primal and dual, in a way that keeps various important constants around for showing this equality (these same constants are needed in the goods setting).

Consider the primal and dual problems of chores EG in the following forms: 

\begin{equation}
    \begin{aligned}
    \inf_{x \geq 0, u \geq 0} \quad & f^p(x, u) := \sum_i B_i \log(u_i) \\ 
    \text{ s.t.} \quad & \inp{d_i}{x_i} \leq u_i \quad \hspace{20pt} \forall\; i \in [n] & \quad (\beta_i) \\ 
    & -\sum_i x_{ij} \leq -1 \quad\; \forall\; j \in [m] & \hspace{20pt} (p_j)
    \end{aligned} 
    \label{chores EG primal} 
    % \tag{Chores Primal}
\end{equation}
and 
\begin{equation}
    \begin{aligned}
    \sup_{\beta \geq 0, p\geq 0} \quad & f^d(\beta, p) := \sum_j p_j - \sum_i B_i \log{\beta_i} + \sum_i \big( B_i \log{B_i} - B_i \big) \\
    \text{ s.t.} \quad & p_j \leq d_{ij} \beta_i \quad\quad\quad \forall\; i \in [n],\, j \in [m] \hspace{30pt} (x_{ij}).
    \end{aligned} 
    \label{chores EG dual ext.} 
    % \tag{Chores Dual'}
\end{equation}

% All KKT points are local minima?

\begin{theorem}
    If $(x, u, \beta, p)$ is a KKT point\footnote{Any pair of primal and dual variables which satisfies the KKT conditions is called a KKT point.} of~\eqref{chores EG primal}, then $(\beta, p, x)$ is a KKT point of~\eqref{chores EG dual ext.}.
    % and vice versa. 
    Reversely, if $(\beta, p, x)$ is a KKT point of~\eqref{chores EG dual ext.} and $u_i = \inp{d_i}{x_i}$ for all $i \in [n]$, then $(x, u, \beta, p)$ is a KKT point of~\eqref{chores EG primal}. 
    Furthermore, if $(x, u, \beta, p)$ is a KKT point of~\eqref{chores EG primal}, then 
    $$f^p(x, u) = f^d(\beta, p).$$ 
\end{theorem}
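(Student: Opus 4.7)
The plan is to write out the KKT conditions of each program in full, read off the correspondence of variables ($\beta_i \leftrightarrow$ the multiplier of the $u_i$-constraint in the primal, $p_j \leftrightarrow$ the multiplier of the $\sum_i x_{ij}\ge 1$ constraint, $x_{ij} \leftrightarrow$ the multiplier of the $p_j\le d_{ij}\beta_i$ constraint in the dual), and then verify by direct substitution that the two KKT systems coincide whenever the extra identification $u_i=\langle d_i,x_i\rangle$ (equivalently $u_i = B_i/\beta_i$) is enforced. The objective equality will then follow from a short ``revenue = total budget'' calculation at a KKT point.

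For the first direction, I would Lagrangianize~\eqref{chores EG primal} and read off the stationarity conditions. Differentiating in $u_i$ gives $\beta_i = B_i/u_i$, which forces $\beta_i>0$ (matching the implicit domain constraint of the log in the dual) and gives the identity $u_i = B_i/\beta_i$. Differentiating in $x_{ij}$ yields $\beta_i d_{ij} \ge p_j$ together with the complementary-slackness condition $x_{ij}(\beta_i d_{ij}-p_j)=0$, which is exactly the stationarity/complementarity pair for $x_{ij}$ in the dual KKT system. Primal feasibility plus the complementary slackness $\beta_i(\langle d_i,x_i\rangle - u_i)=0$ combined with $\beta_i>0$ gives $\langle d_i,x_i\rangle = u_i = B_i/\beta_i$, which is the dual's stationarity condition in $\beta_i$. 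Finally the $p_j$-complementarity condition $p_j(\sum_i x_{ij}-1)=0$ carries over verbatim. So $(\beta,p,x)$ satisfies all dual KKT conditions.

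The reverse direction is essentially the same calculation run backwards: set $u_i := \langle d_i,x_i\rangle$, so that the dual's stationarity in $\beta_i$ (which reads $\sum_j d_{ij}x_{ij}=B_i/\beta_i$ since $\beta_i>0$) rewrites as $u_i=B_i/\beta_i$, equivalently $\beta_i=B_i/u_i$, which is the primal stationarity in $u_i$. The $u_i$-constraint is tight, so the primal complementary slackness for $\beta_i$ holds; all remaining conditions are already common to the two systems.

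For the objective identity, the substitution $u_i=B_i/\beta_i$ turns $f^p(x,u)=\sum_i B_i\log u_i$ into $\sum_i B_i\log B_i - \sum_i B_i\log\beta_i$. In $f^d$, the term to simplify is $\sum_j p_j$: using $p_j(\sum_i x_{ij}-1)=0$ I can rewrite $\sum_j p_j = \sum_{i,j} p_j x_{ij}$, then $x_{ij}(d_{ij}\beta_i-p_j)=0$ gives $\sum_{i,j} p_j x_{ij} = \sum_i \beta_i \sum_j d_{ij}x_{ij}$, and finally the stationarity $\sum_j d_{ij}x_{ij}=B_i/\beta_i$ collapses this to $\sum_i B_i$. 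The $\sum_i B_i$ cancels with the $-\sum_i B_i$ term built into the definition of $f^d$, leaving $\sum_i B_i\log B_i - \sum_i B_i\log\beta_i$, which equals $f^p(x,u)$.

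The main (minor) obstacle is being careful about sign conventions and about the case $p_j=0$ when passing between $\sum_j p_j$ and $\sum_j p_j\sum_i x_{ij}$; both cases are handled by the complementary slackness condition $p_j(\sum_i x_{ij}-1)=0$, so no case analysis is really needed. Everything else is a direct bookkeeping exercise with the KKT conditions already stated in~\cref{thm:dual EG kkt correspondence}.
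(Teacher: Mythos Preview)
Your proposal is correct. The main difference from the paper is in how you handle the first part (the KKT correspondence). The paper does not write out and match the two KKT systems; instead it simply observes that, by the Bogomolnaia--Moulin--Sandomirskiy--Yanovskaya characterization, the KKT points of~\eqref{chores EG primal} are exactly the CE of the chores market, and by~\cref{thm:dual EG kkt correspondence} so are the KKT points of~\eqref{chores EG dual ext.}; hence the two KKT sets coincide via the common CE. Your route is more self-contained: you verify directly that the stationarity and complementary-slackness conditions of the two programs are the same system under the identification $u_i=B_i/\beta_i=\langle d_i,x_i\rangle$, without ever invoking the market interpretation. This buys independence from the prior CE characterization at the cost of a bit more bookkeeping. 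For the objective identity $f^p=f^d$, your derivation of $\sum_j p_j=\sum_i B_i$ via the chain $p_j(\sum_i x_{ij}-1)=0$, then $x_{ij}(d_{ij}\beta_i-p_j)=0$, then $\sum_j d_{ij}x_{ij}=B_i/\beta_i$, is exactly what the paper does (it just phrases the same chain as ``$(p,\beta,x)$ is a CE, so prices sum to total budget'').
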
 

% \begin{figure}
%     \centering
%     \includegraphics[width=0.5\linewidth]{image.png}
%     \caption{Enter Caption}
%     \label{fig:enter-label}
% \end{figure}

\begin{proof}
    The first part is obvious due to the one-to-one correspondence between KKT points and CE for both programs. 
    Since $(x, u, \beta, p)$ is a KKT point of~\eqref{chores EG primal}, we have $\beta_i = \frac{B_i}{u_i}$ for all $i \in [n]$. 
    Since $(p, \beta, x)$ is a KKT point of~\eqref{chores EG dual ext.}, and thus a CE of the corresponding chores Fisher market, we have $\sum_j p_j = \sum_j \sum_i p_j x_{ij} = \sum_i \sum_j p_j x_{ij} = \sum_i B_i$. By these, we can obtain that 
    \begin{equation*}
        \sum_j p_j - \sum_i B_i \log{\beta_i} + \sum_i \big( B_i \log{B_i} - B_i \big) 
        = \sum_i B_i \log{\frac{B_i}{\beta_i}} + \sum_j p_j - \sum_i B_i 
        = \sum_i B_i \log{u_i}. 
    \end{equation*}
\end{proof}

Similarly to the primal EG for chores, the dual EG for chores has problems from an optimization perspective. 
One might try to find a KKT point by finding a local maximum of \eqref{chores EG dual} using some sort of iterative optimization method on the interior of the convex feasible region. However, this runs into trouble because \eqref{chores EG dual} has two ways that the objective tends to infinity, generally because we can either let $\beta_i$ tend to zero in order for $-\log \beta_i$ to tend to infinity (we refer to these as poles, as in the primal case), or by letting the prices tend to infinity, since their contribution to the objective grows faster than $-\log\beta_i$. 

In~\cref{fig:unbound-objective-directions}, we show a diagram which illustrates how naive gradient descent methods fail because they are either attracted to poles, or move along an unbounded direction. 
This instance is a Fisher market with $2$ agents and $1$ chore. In this instance, $B_1 = B_2 = 1$, $d_{11} = 2$ and $d_{21} = 1$. 
Obviously, there is only one KKT point / CE in this instance: $p_1 = 2$; $\beta_1 = 1, \beta_2 = 2$.
As we can see, even starting from some feasible points that are very close to the KKT point, trajectories generated by the gradient ascending algorithm are attracted by either of those poles.

\begin{figure}[t]
    \centering
    \subfloat[]{\includegraphics[width=0.42\linewidth]{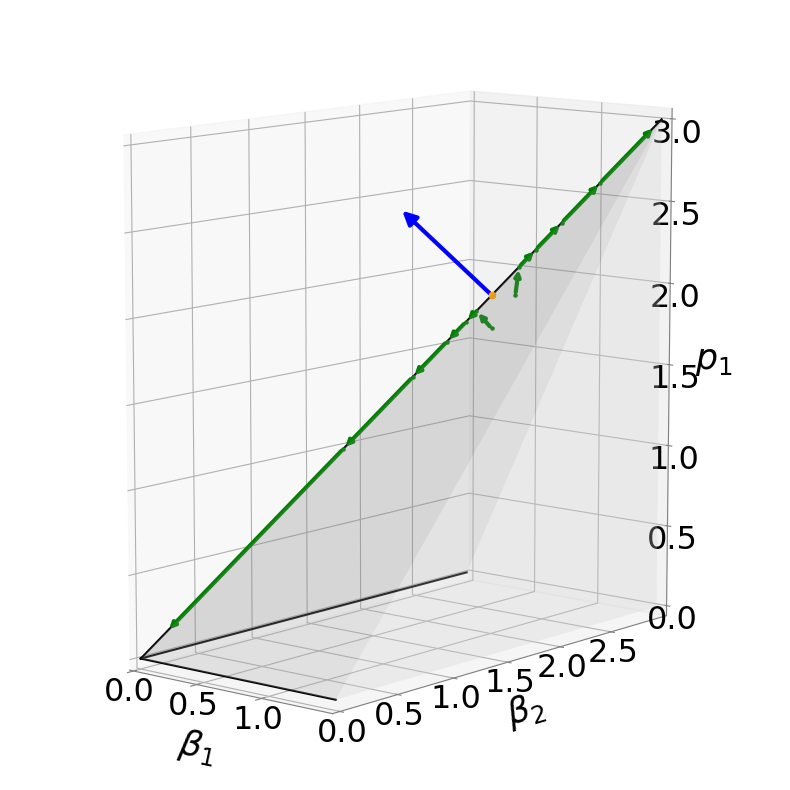}
    \label{fig:unbound-objective-directions}}
    \subfloat[]{\includegraphics[width=0.42\linewidth]{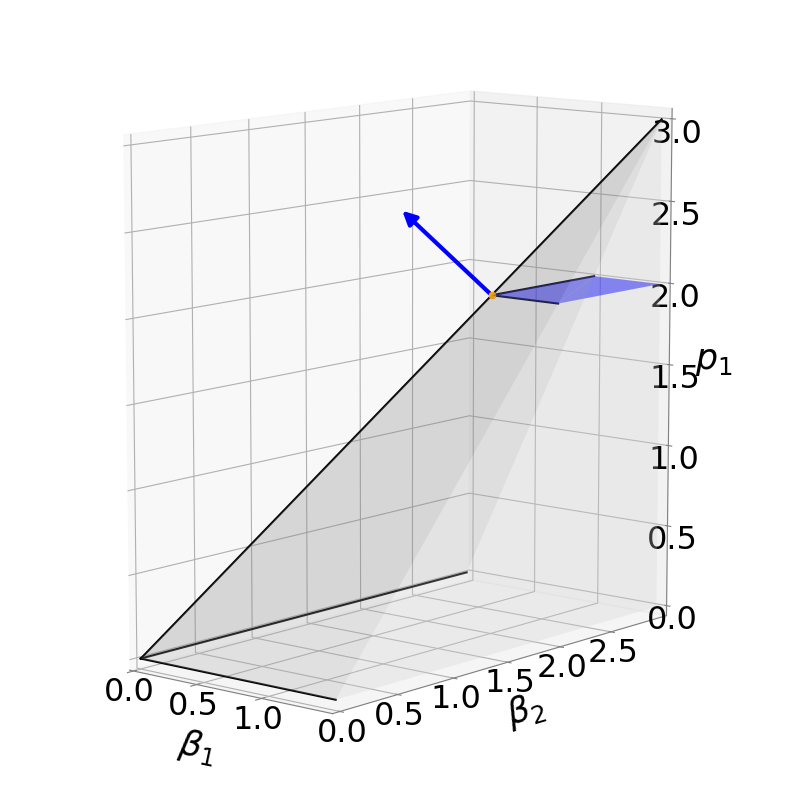}
    \label{fig:redundant-constraint}}
    \caption{An instance of a Fisher market with $2$ agents and $1$ chore. The gray region (polyhedral) is the feasible region of~\eqref{chores EG dual}. The (only) KKT point / CE ($(\beta_1, \beta_2, p_1) = (1, 2, 2)$) is marked by the yellow point. The blue arrow denotes the direction of gradient at the KKT point. In the left, two green trajectories are iterates generated by the gradient ascending algorithm, starting from two feasible points. In the right, the blue region (2-dimensional polyhedral) is the feasible region of~\eqref{chores dual redundant}.}
    \label{fig:unbound-objective-directions-and-redundant-constraint}
\end{figure}

Next, we show what we found to be a highly surprising fact: the \eqref{chores EG dual} can be ``fixed'' such that both poles and unbounded directions are avoided, without removing any KKT points.
% \ck{journal TODO: If we have time, put a figure illustrating how we cut the feasible region}
This is achieved by adding a new constraint, which is redundant from a CE perspective. It is easy to see that in any CE, we must have that every buyer earns their budget exactly (otherwise they could do fewer chores and still satisfy their earning constraint), and since every item is allocated exactly, we must have that $\sum_j p_j = \sum_i B_i$.
Adding this constraint to \eqref{chores EG dual} yields the following program:
\begin{equation}
    \begin{aligned}
        \max_{\beta \geq 0, p \geq 0} \quad &\sum_j p_j - \sum_i B_i \log(\beta_i) \\
        \text{ s.t. } \quad & \quad\;\; p_j \leq \beta_i d_{ij} \quad \forall i,j\\
        & \sum_j p_j = \sum_i B_i
    \end{aligned} 
    \tag{Chores Dual Redundant}
    \label{chores dual redundant}
\end{equation}

The new constraint $\sum_j p_j = \sum_i B_i$ in \eqref{chores dual redundant} cuts off directions with infinite objective: 1) prices are upper bounded by $\sum_i B_i$, so the rays where prices go to infinity are avoided, 2) each $\beta_i$ is lower bounded by $\max_j p_j / d_{ij} \ge \sum_i B_i/ (m \cdot \max_j d_{ij})$ since $p_j \ge \sum_i B_i /m$ for some chore $j$. The following lemma formalizes this fact; the proof follows from the above discussion.

\begin{lemma}
    \label{lem:CDR finite objective}
    The optimal value of~\eqref{chores dual redundant} is at most $$\Big(\sum_i B_i\Big) \Big(1-\log(\sum_i B_i) + \log(m \max_{i,j} d_{ij})\Big).$$
\end{lemma}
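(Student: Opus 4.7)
The plan is a direct two-term bound on the objective, using the new equality constraint to control both the prices and (indirectly, via the constraint $p_j\le \beta_i d_{ij}$) the $\beta_i$'s.

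First I would handle the linear part. The equality constraint $\sum_j p_j = \sum_i B_i$ immediately gives $\sum_j p_j = \sum_i B_i$ on the feasible region, so the first term of the objective contributes exactly $\sum_i B_i$.

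Next, I would lower bound each $\beta_i$ so that the concave (in $\beta$) term $-\sum_i B_i \log \beta_i$ is upper bounded. By averaging, since $\sum_j p_j = \sum_i B_i$, there exists an index $j^\star$ with $p_{j^\star} \ge (\sum_i B_i)/m$. The feasibility constraint $p_{j^\star} \le \beta_i d_{ij^\star}$ then yields
\begin{equation*}
\beta_i \;\ge\; \frac{p_{j^\star}}{d_{ij^\star}} \;\ge\; \frac{\sum_i B_i}{m \cdot \max_{i,j} d_{ij}} \qquad \forall\, i \in [n].
\end{equation*}
Taking logarithms and multiplying by $-B_i \le 0$ gives
\begin{equation*}
-B_i \log \beta_i \;\le\; B_i \bigl(\log(m \max_{i,j} d_{ij}) - \log(\textstyle\sum_i B_i)\bigr),
\end{equation*}
and summing over $i$ bounds the logarithmic term by $(\sum_i B_i)(\log(m \max_{i,j} d_{ij}) - \log(\sum_i B_i))$.

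Finally I would add the two bounds to obtain the stated $(\sum_i B_i)(1 - \log(\sum_i B_i) + \log(m \max_{i,j} d_{ij}))$. There is no real obstacle here: the entire content of the lemma is that the redundant constraint $\sum_j p_j = \sum_i B_i$ simultaneously caps the prices and, through the $p_j \le \beta_i d_{ij}$ constraints together with a pigeonhole step, bounds the $\beta_i$'s away from zero. The only thing to be slightly careful about is the use of $\max_{i,j} d_{ij}$ rather than $\max_j d_{ij^\star}$, which just uses the trivial inequality $d_{ij^\star} \le \max_{i,j} d_{ij}$ to obtain a uniform lower bound on all $\beta_i$'s at once.
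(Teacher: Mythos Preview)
Your proposal is correct and follows essentially the same argument as the paper: the paper's proof is just the informal discussion preceding the lemma, which likewise uses $\sum_j p_j = \sum_i B_i$ for the linear term, picks $j^\star$ with $p_{j^\star}\ge (\sum_i B_i)/m$ by averaging, and then applies $p_{j^\star}\le \beta_i d_{ij^\star}$ to lower bound each $\beta_i$.
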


%The new constraint in \ref{chores dual redundant} cuts off directions with infinite objective: 1) prices are upper bounded by $\sum_i B_i$, so the rays where prices go to infinity are avoided, 2) each $\beta_i$ is lower bounded by \rmedit{$\max_j p_j / d_ij \ge \sum_i B_i/ (m \cdot \max_j d_{ij}$ since $p_j \ge \sum_i B_i /m$ for some chore $j$.}

Crucially, it turns out that the new constraint $\sum_j p_j = \sum_i B_i$ is redundant, in the sense that its Lagrange multiplier is zero at every KKT point. 
The redundancy also means that this constraint is implied by the KKT conditions of~\eqref{chores EG dual}, 
which implies that the one-to-one correspondence between CE and KKT points still holds for \eqref{chores dual redundant}.
\begin{theorem}
    There is a one-to-one correspondence between CE of the chores Fisher market and the KKT points of \eqref{chores dual redundant}.
    \label{thm:dual EG redundant kkt correspondence}
\end{theorem}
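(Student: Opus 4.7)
The plan is to show the correspondence in two directions, mirroring the proof of~\cref{thm:dual EG kkt correspondence}, with the main new ingredient being a short argument that the Lagrange multiplier of the added equality constraint must vanish at every KKT point. Denote by $\lambda \in \bbR$ the multiplier associated with $\sum_j p_j = \sum_i B_i$, by $x_{ij} \geq 0$ the multiplier for $p_j \leq \beta_i d_{ij}$, and by $\mu_j \geq 0$ the multiplier for $p_j \geq 0$; as before, $\beta_i > 0$ is enforced implicitly by the domain of $\log$. Writing stationarity with respect to $p_j$ and $\beta_i$, the KKT system reads: $p, x, \mu \geq 0$, $\beta > 0$; $d_{ij}\beta_i \geq p_j$ and $x_{ij}(d_{ij}\beta_i - p_j) = 0$; $\sum_i x_{ij} = 1 + \mu_j - \lambda$ with $\mu_j p_j = 0$; $\sum_j d_{ij} x_{ij} = B_i / \beta_i$; and $\sum_j p_j = \sum_i B_i$.

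For the first direction (KKT $\Rightarrow$ CE), the essential step is to show $\lambda = 0$, because then the stationarity in $p_j$ collapses to the analogous condition in the proof of~\cref{thm:dual EG kkt correspondence}, and the same argument carries over verbatim. To extract $\lambda$, I would compute the total payment $\sum_{i,j} p_j x_{ij}$ in two ways. Using complementary slackness $x_{ij}(d_{ij}\beta_i - p_j)=0$ and the stationarity $\sum_j d_{ij} x_{ij} = B_i/\beta_i$,
\begin{equation*}
\sum_{i,j} p_j x_{ij} \;=\; \sum_{i,j} d_{ij}\beta_i x_{ij} \;=\; \sum_i \beta_i \cdot \tfrac{B_i}{\beta_i} \;=\; \sum_i B_i.
\end{equation*}
On the other hand, swapping summation order and using $\sum_i x_{ij} = 1+\mu_j-\lambda$ together with $\mu_j p_j = 0$ and $\sum_j p_j = \sum_i B_i$ gives
\begin{equation*}
\sum_{i,j} p_j x_{ij} \;=\; \sum_j p_j (1+\mu_j-\lambda) \;=\; (1-\lambda)\sum_i B_i.
\end{equation*}
Comparing the two expressions, and using $\sum_i B_i > 0$, forces $\lambda = 0$. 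Once $\lambda=0$, the KKT system becomes exactly the system in~\cref{thm:dual EG kkt correspondence}, so the conclusion that $(p, x)$ is a CE (with $\beta_i = B_i/\inp{d_i}{x_i}$) follows from that theorem.

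For the reverse direction (CE $\Rightarrow$ KKT), given a CE $(p, x)$ I would set $\beta_i = B_i/\inp{d_i}{x_i}$ and $\lambda = 0$, $\mu_j = 0$. By~\cref{thm:dual EG kkt correspondence}, $(p,\beta,x)$ already satisfies the KKT conditions of~\eqref{chores EG dual}, so all that remains is primal feasibility of the new constraint and that the additional stationarity terms still vanish. The former follows from $\sum_j p_j = \sum_j p_j \sum_i x_{ij} = \sum_i \inp{p}{x_i} = \sum_i B_i$ using (E3) and (E1), and the latter is immediate because $\lambda = 0$ and $\mu_j = 0$. Hence $(p,\beta,x)$ is a KKT point of~\eqref{chores dual redundant}.

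The main obstacle is the identity $\lambda = 0$: everything else is either a direct invocation of~\cref{thm:dual EG kkt correspondence} or a one-line feasibility check. The two-way computation of $\sum_{i,j} p_j x_{ij}$ is what makes the extra constraint truly redundant and preserves both the set of KKT points and their correspondence with CE.
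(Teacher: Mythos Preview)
Your proof is correct and follows essentially the same approach as the paper. The central step in both arguments is to compute the total payment $\sum_{i,j} p_j x_{ij}$ in two ways (once via complementary slackness and the $\beta$-stationarity condition to get $\sum_i B_i$, and once via the $p$-stationarity condition and the redundant constraint to get $(1-\lambda)\sum_i B_i$), which forces the multiplier of the added equality constraint to vanish; the remaining steps then reduce to \cref{thm:dual EG kkt correspondence} exactly as you indicate.
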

\begin{proof}
    Here, we refer to the two constraints in~\eqref{chores dual redundant} as the optimal disutility constraints (the first constraints) and the redundant constraint (the second constraint). 
    Let $x$ and $\mu$ be the dual variables corresponding to the optimal disutility constraint and redundant constraint, respectively. 
    By the KKT conditions, $(p, \beta, x, \mu)$ is a KKT point of~\eqref{chores dual redundant} if and only if it satisfies the conditions (\cref{dual-KKT-con-1}), (\cref{dual-KKT-con-2}) and (\cref{dual-KKT-con-4}) in~\cref{thm:dual EG kkt correspondence} as well as 
    \begin{enumerate}[(i)]
        %\item $\mu \geq 0$ \label{dual-redundant-KKT-con-1}; 
        \item $\sum_j p_j = \sum_i B_i$; %and $\mu \left( \sum_j p_j - \sum_i B_i \right) = 0$
        \label{dual-redundant-KKT-con-2}
        \item $\sum_i x_{ij} + \mu \geq 1$ and $p_j \left( \sum_i x_{ij} + \mu - 1 \right) = 0$ for all $j \in [m]$. \label{dual-redundant-KKT-con-3}
    \end{enumerate}
    First, we show that any KKT point of~\eqref{chores dual redundant} is a CE. 
    To do this, we only need to show $\mu = 0$ if $(p, \beta, x, \mu)$ is a KKT point of~\eqref{chores dual redundant}, 
    since the conditions (\cref{dual-KKT-con-2}) and (\cref{dual-KKT-con-4}) guarantee that every agent chooses her optimal bundle.  

    % By~\cref{dual-KKT-con-2}, for any $j$ such that $\x_{ij} > 0$, we have $\beta_i = \frac{p_j}{d_{ij}}$. 
    % By~\eqref{pf-1}, for any $j' \in [m]$, we have $\beta^*_i \geq \frac{p^*_{j'}}{d_{i{j'}}}$. 
    % Combining these, we obtain 
    % \begin{equation}
    %     \forall\, j : \lambda^*_{ij} > 0 \implies \frac{p^*_j}{d_{ij}} \geq \frac{p^*_{j'}}{d_{i{j'}}} \Leftrightarrow \frac{d_{ij}}{p^*_j} \leq \frac{d_{ij'}}{p^*_{j'}} \quad \forall\, j' \in [m] \label{eq:min-disuti}. 
    % \end{equation} 
    
    % Since 
    % \begin{align}
    %     0 & = B_i - \beta^*_i \sum_{j \in [m]} d_{ij} \lambda^*_{ij} \tag{\eqref{cs-1} and \eqref{st-1}} \\ 
    %     & = B_i - \beta^*_i \sum_{j: \lambda^*_{ij} > 0} d_{ij} \lambda^*_{ij} \nonumber \\ 
    %     & = B_i - \sum_{j: \lambda^*_{ij} > 0} \frac{p^*_j}{d_{ij}} d_{ij} \lambda^*_{ij} \tag{\eqref{cs-1}} \\ 
    %     & = B_i - \sum_{j \in [m]} p^*_j \lambda^*_{ij} \label{eq:meet-budget}, 
    % \end{align}
    % we obtain that $\sum_{j \in [m]} p^*_j \lambda^*_{ij} = B_i$. Combining~\eqref{eq:min-disuti} and~\eqref{eq:meet-budget}, we attain $\lambda^*_i \in \cD_i(p^*)$ for all $i \in [n]$. 

    Note that  
    \begin{equation}
        \sum_j p_j (1 - \sum_i x_{ij}) = \sum_i B_i - \sum_j \sum_i p_j x_{ij} = 0, 
    \end{equation} 
    since $\sum_j p_j = \sum_i B_i$ by the redundant constraint and $\sum_j p_j x_{ij} = B_i$ for all $i$ by the optimal bundle condition. By KKT condition (\cref{dual-redundant-KKT-con-3}), we have
    \begin{equation}
        0 = \sum_{j \in [m]} p_j (1 - \sum_i x_{ij} - \mu) = \sum_{j \in [m]} p_j (1 - \sum_i x_{ij}) - \mu \sum_{j \in [m]} p_j = - \mu \sum_{i \in [n]} B_i. 
    \end{equation}
    This equality holds if and only if $\mu=0$. 

    Next, we prove that any CE is a KKT point of~\eqref{chores dual redundant}. Set $\mu=0$ and the rest follows similarly to the second part of the proof of~\cref{thm:dual EG kkt correspondence}. 
\end{proof} 

In~\cref{fig:redundant-constraint} we show how the redundant constraint restricts the feasible set in the instance in \cref{fig:unbound-objective-directions} in a way that avoids poles.
% \ck{Do we keep this example? If yes, then we should have more to say about it}
% On an example of $1$ item and $2$ buyers: 

% the original feasible region is 
% \begin{align*}
%     p \leq \beta_1 \\ 
%     p \leq 2 \beta_2 \\ 
%     p \geq 0 
% \end{align*}

% the clipped feasible region is 
% \begin{align*}
%     p \leq \beta_1 \\ 
%     p \leq 2 \beta_2 \\ 
%     p = 2 
% \end{align*}
% journal TODO: explain in this example why we don't add new KKT points

Because \eqref{chores dual redundant} does not have \infd s (\cref{lem:CDR finite objective}), it allows the use of first-order methods and interior-point methods on the convex polyhedron 
\begin{equation*}
    Y = \Big\{ (\beta, p) \in \RR^{n + m}_{\geq 0}:\, p_j \leq \beta_i d_{ij}, \;\forall\, i \in [n],j \in [m], \;  \sum_{j \in [m]} p_j = \sum_{i \in [n]} B_i \Big\}.
\end{equation*}
This is surprising because past work~\citep{BogomolnaiaMSY17,boodaghians2022polynomial} has repeatedly stressed that a challenge for computing CE for chores is that iterative methods that work in the interior are attracted to non-CE points. Our new program completely circumvents this issue. The remainder of the paper will be concerned with developing a provably fast and highly practical iterative method for computing (approximate) KKT points of \eqref{chores dual redundant}.
We note, though, that \eqref{chores dual redundant} opens the door to many other optimization methods, and we hope, for example, that efficient interior-point methods can be developed for it, as well as new first-order methods for large-scale problems.

\section{Greedy Frank-Wolfe Method}
\label{Sec:Alg}
% \ck{TODO: consistent notation for the feasible region of \ref{chores dual redundant}; I like $\mathcal P$ for ``polyhedron'', but I'm also ok with $Y$ if we prefer}
In this section we introduce a simple first-order optimization method for computing chores CE based on \eqref{chores dual redundant}. The method only requires solving an LP at every iteration, and we will see that it has many desirable properties.

As above, let $Y$ be the convex polyhedron of the feasible set of~\eqref{chores dual redundant}, and let 
$y = (\beta, p)\in Y$ denote elements of $Y$.
Let $f$ be the \eqref{chores dual redundant} objective (after dropping $\sum_j p_j$, which sums to the constant $\sum_i B_i$), i.e., 
\begin{equation*}
    f(y) = -\sum_i B_i \log \beta_i. 
\end{equation*}

% This paragraph introduces the general FW algorithm
The algorithm we will use is a greedy variant of the Frank-Wolfe (FW) algorithm~\citep{frank1956algorithm}.
For a general differentiable minimization problem over a closed convex set, say $\min_{x \in \mathcal{X}} \tilde{f}(x)$, 
the FW method conducts the following steps iteratively. 
% Given a convex function $f$, the FW algorithm, 
In every iteration $t$, the FW algorithm finds a minimizer $s_t$ to the linear approximation of the function $\tilde{f}$ around the previous iterate point $x_{t-1}$. Thereafter, the algorithm optimally chooses a step $\tau_t$ towards $s_t$ from $x_{t-1}$ and sets $x_{t} = x_{t-1} + \tau_t (s_t - x_{t-1})$ for $\tau_t \in [0, 1]$. 
The advantage of this method over other first-order methods like gradient descent is that it avoids projection (which can often be expensive). 

% the following algorithm,
% \begin{align}
%     y^{t+1} = \arg\max_{y\in Y}\, \inp{ \nabla f(y^t) }{ y - y^t }.
%     \label{eq:gfw}
% \end{align}

\begin{algorithm}[t]
    \begin{algorithmic}[1]
        \caption{Greedy Frank-Wolfe (GFW) algorithm}
        \Require{An initial feasible point in the dual space $y^0 = (\beta^0, p^0) \in \RR^{n + m}_{\geq 0}$ such that $\sum_j p^0_j = \sum_i B_i$}
        \For{$t = 1, 2, \ldots$}
            \State $y^{t+1} = \arg\max_{y\in Y}\, \inp{ \nabla f(y^t) }{ y - y^t }$
            \If{$\inp{ \nabla f(y^t) }{ y^{t + 1} - y^t } = 0$}
                \Return{$y^t$} 
            \EndIf
        \EndFor
        \label{algo:GFW}
    \end{algorithmic}
\end{algorithm}

% Although this algorithm is mostly used in the context of convex minimization~\citep{jaggi2013revisiting}, we can also adopt this algorithm for our non-convex program. 
For convex \emph{minimization}, the FW method has been immensely popular, for example in machine learning application~\citep{jaggi2013revisiting}, and it has previously been used to compute CE in various settings with goods~\citep{gao2020first,panageas2021combinatorial}.
In the goods setting, one must be careful about stepsize selection, in order to guarantee convergence; typically a step is taken from the current iterate in the direction of the point maximizing (or minimizing in the convex case) 
the linearization.
Interestingly, in the chores setting this is not the case; because we are \emph{maximizing} a convex objective, we can greedily move directly to the point returned by the linear maximization.
In particular, for our~\eqref{chores dual redundant} program, we can greedily choose our current iterate point to be the maximizer $s_t$ of the linear approximation of the function $f$ around $y_{t-1}$. That is, we set $y_t = s_t$ where $s_t = \argmax_{s \in Y} \inp{ \nabla f(y_{t-1}) }{ s - y_{t-1} }$. 
% We call our algorithm the \emph{greedy Frank Wolfe method (GFW)}, discussed in detail in~\cref{Sec:convex-pgm}. 
% As a special case of the FW algorithm, 
% the greedy Frank-Wolfe (GFW) method generates a sequence of iterates $\{y_t\}_t$, each contained in $Y$. Each iterate is generated by linearizing the objective, and greedily maximizing the linerization over $Y$ to generate the next iterate. 
This yields~\cref{algo:GFW}. We call this algorithm the \emph{Greedy Frank Wolfe (GFW) algorithm}.

GFW has previously been studied for generic problems of maximizing a convex function in the context of certain machine learning applications~\citep{mangasarian1996machine,journee2010generalized}.
It is instructive to briefly understand \emph{why} we can be greedy with this method. Because we have a convex function, the following inequality holds,
\begin{align}
    f(y^{t+1}) - f(y^t) \geq \langle \nabla f(y^t), y^{t+1} - y^t \rangle.
    \label{eq:subgradient inequality}
\end{align}
Note that the right-hand side is exactly the objective in the line 2 of~\cref{algo:GFW}, and thus, since $y^t$ is a feasible solution, the RHS is weakly greater than zero. Moreover, the solution strictly improves, unless we have already found a local maximum (thus a KKT point and a CE).
This immediately implies that we have a monotonically-improving sequence of iterates. Secondly, note that there always exists a vertex of $Y$ which is optimal, and since there is a finite set of vertices in $Y$, this implies that the method converges in finite time.
These observations were formalized by \citet{mangasarian1996machine}:

\begin{theorem}[\citet{mangasarian1996machine}]
    Let $f$ be a differentiable convex function that is bounded above on a polyhedron $Y$. 
    Then,~\cref{algo:GFW} generates a sequence of iterates $\{ y^t \}_{t \geq 1}$ such that 
    \begin{itemize} 
        \item[(i)] $\{ y^t \}_{t \geq 1}$ is well-defined; 
        \item[(ii)] $f(y^t)$ strictly increases as $t$ increases; 
        % \item[(iii)] $\lim_{t \rightarrow \infty} \max_{y \in Y} \inp{\nabla f(y^t)}{y - y^t} = 0$; 
        \item[(iv)] The algorithm terminates with a $y^T$ satisfying the first-order optimality condition
        \begin{equation*}
            \inp{\nabla f(y^T)}{y - y^T} \leq 0 \quad \text{for all } y \in Y
        \end{equation*}
        in a finite number of iterations $T$.
    \end{itemize} 
\end{theorem}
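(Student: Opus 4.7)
The plan is to establish the three conclusions in sequence, using two standard ingredients: the subgradient inequality for differentiable convex functions, and the fact that a bounded linear program over a polyhedron attains an optimum at a vertex.

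First, for well-definedness (i), I would argue that at each iterate $y^t$, the linear program $\max_{y \in Y}\inp{\nabla f(y^t)}{y - y^t}$ is bounded above. If it were not, one could select a sequence $\{y^{(k)}\}\subseteq Y$ along which this linear objective diverges to $+\infty$, and then the convexity inequality
\begin{equation*}
f(y^{(k)}) \geq f(y^t) + \inp{\nabla f(y^t)}{y^{(k)} - y^t}
\end{equation*}
would contradict the hypothesis that $f$ is bounded above on $Y$. Since $Y$ is a polyhedron and the LP is bounded, standard LP theory guarantees that an optimum is attained at a vertex of $Y$, which we take as $y^{t+1}$.

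Next, for strict monotonicity (ii), I would directly invoke the subgradient inequality \eqref{eq:subgradient inequality}, namely $f(y^{t+1}) - f(y^t) \geq \inp{\nabla f(y^t)}{y^{t+1} - y^t}$. Because $y^t$ itself is feasible with LP value $0$, the LP optimum is nonnegative; if the algorithm does not terminate at step $t$, the termination check explicitly rules out the value being exactly zero, so the right-hand side is strictly positive, giving $f(y^{t+1}) > f(y^t)$. For finite termination (iv), the strict monotonicity in (ii) forces all iterates $\{y^t\}$ to be distinct, while (i) ensures each is chosen from the vertices of $Y$. Since a polyhedron has only finitely many vertices, the algorithm must terminate within that many iterations. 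At the terminating iterate $y^T$, we have $\inp{\nabla f(y^T)}{y^{T+1} - y^T} = 0$ where $y^{T+1}$ is the LP maximizer over $Y$, so by optimality $\inp{\nabla f(y^T)}{y - y^T} \leq 0$ for every $y \in Y$, which is precisely the first-order condition.

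The main delicacy lies in step (i): one must guarantee that the LP step returns an \emph{extreme point} of $Y$, rather than an arbitrary optimum on an unbounded face, for the vertex-counting argument in (iv) to be valid. This is a standard convention for simplex-type solvers, and it is the implicit assumption throughout; once it is granted, the three parts of the theorem follow with no further technical obstruction. A secondary minor point is that $\nabla f(y^t) = \bigl(-B_i/\beta_i^t\bigr)_i$ is well-defined along the iterates because \cref{lem:CDR finite objective} combined with the constraint $\sum_j p_j = \sum_i B_i$ keeps each $\beta_i^t$ bounded away from zero, so differentiability of $f$ on the relevant domain is never in doubt.
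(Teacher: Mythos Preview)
Your proposal is correct and follows essentially the same approach as the paper's informal discussion preceding the theorem statement (and the well-definedness argument repeated in the proof of \cref{thm:stepsize-convergence-under-generalized-strong-convexity}): the subgradient inequality \eqref{eq:subgradient inequality} gives monotonicity, boundedness of $f$ forces the LP to be bounded, and finiteness of the vertex set forces termination. Your explicit remark that the LP solver must return an \emph{extreme point} is a useful clarification the paper leaves implicit; the final paragraph specializing to the chores objective is unnecessary here, since the theorem is stated for a general differentiable convex $f$ bounded above on $Y$.
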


\begin{figure}[t]
    \centering
    \includegraphics[width=0.6\linewidth]{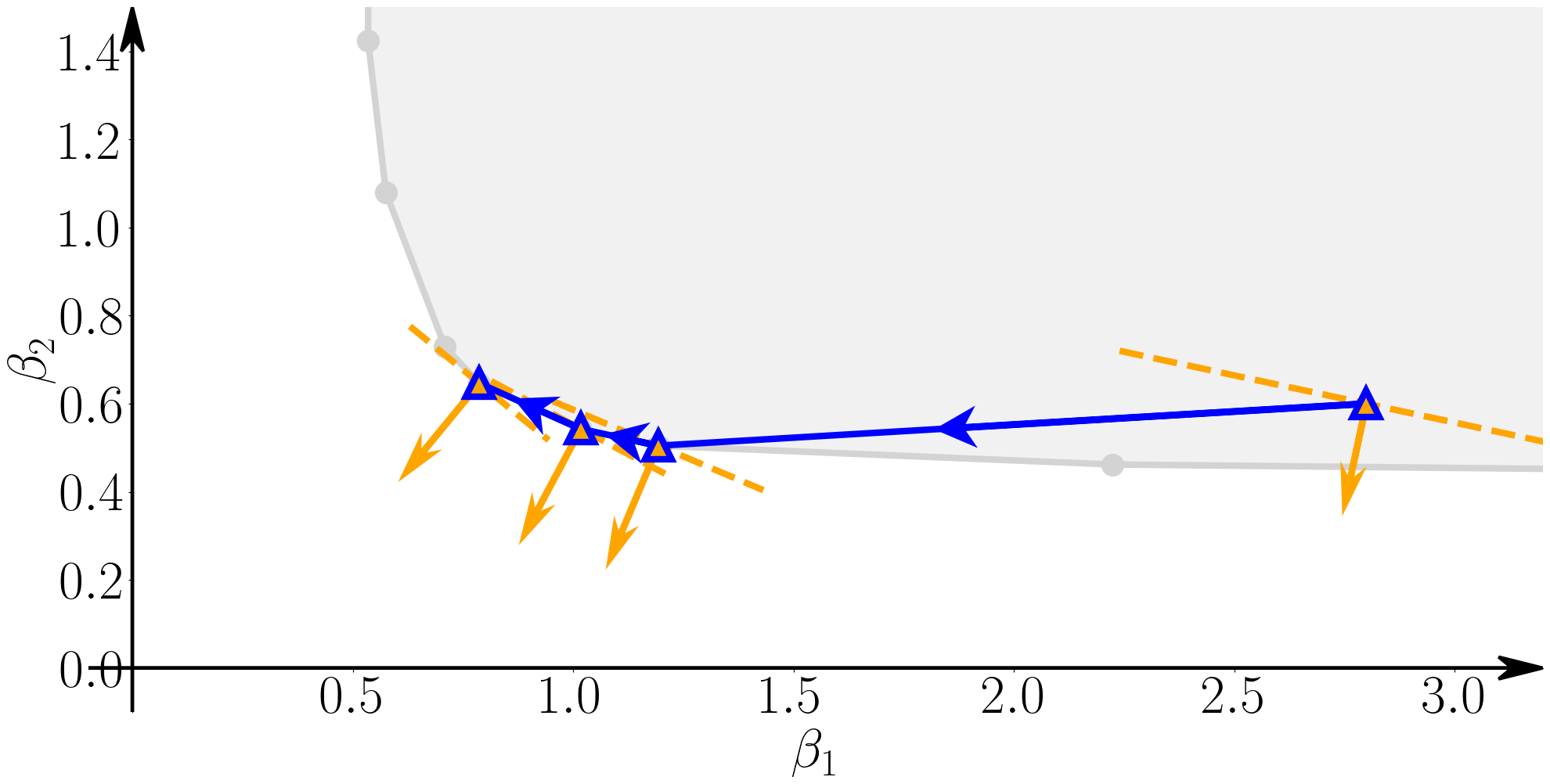}
    \caption{The trajectory of GFW until termination on a $2$-agent-$8$-chore instance, starting from a feasible interior point. The gray region is the feasible set projected onto the $(\beta_1, \beta_2)$ subspace. The blue lines with arrows denote the trajectory of GFW. The orange arrows and dotted lines denote gradient ascending directions of $f$ and corresponding planes perpendicular to these directions, respectively.}
    \label{fig:GFW}
\end{figure}

In~\cref{fig:GFW}, we illustrate a trajectory of the GFW algorithm on an instance with 2 agents and 8 chores. The earning requirements of agents are set to be one and disutilities are drawn from the [0, 1] uniform distribution. 

While the above guarantees finite-time convergence, it does not give a rate at which this convergence occurs. 
Later, some convergence-rate results were shown by \citet{journee2010generalized}. 
Most related to our chores CE setting, they show that for \emph{strongly convex} functions, GFW has a best-iterate convergence rate of $O(1/(\sigma_f \epsilon^2))$, where the convergence is in terms of some iterate $k \in \{1,\ldots,T\}$ having a small stepsize $\|y^{k+1}-y^k\| \leq \epsilon$, and $\sigma_f$ is the strong convexity parameter.
However, our objective is not strongly convex, and thus we cannot apply the results from \citet{journee2010generalized} to the chores CE setting.
In the next sections, we will show how to generalize the result of  \citet{journee2010generalized} to a class of \emph{Bregman-strongly-convex} functions. We then show that our objective indeed satisfies this generalized notion of strong convexity, and thus is guaranteed a small \emph{multiplicative} stepsize difference.
We go on to show that this implies convergence to chores CE.
In the appendix, we also give an alternative proof of this convergence directly from arguing about CE structure, without going through Bregman strong convexity.

% \subsection{useful inequalities}
% Let $f$ be the objective, i.e.
% \[
%     f(\beta) = -\sum_i B_i \log \beta_i
% \]
% By strong convexity, we have
% \begin{align}
%     f(\beta_{t+1}) - f(\beta_t) \geq \langle \nabla f(\beta_t), \beta_{t+1} - \beta_t \rangle + \frac{\mu}{2} \|\beta_{t+1} - \beta_t \|^2
%     \label{eq:sc}
% \end{align}

% Now if we sum this inequality across $1,\ldots,T$, we get a lower bound on progress
% \begin{align}
%     f(\beta_{T+1}) - f(\beta_1) \geq \sum_{t=1}^T \langle \nabla f(\beta_t), \beta_{t+1} - \beta_t \rangle + \frac{\mu}{2} \|\beta_{t+1} - \beta_t \|^2
% \end{align}

% Let $f^* = f(\beta^*)$, where $\beta_* = \lim_{T\rightarrow \infty} \beta_t$.
% By Lipschitzness, we have
% \begin{align}
%     f^* - f(\beta_t) \leq \langle \nabla f(\beta_t), \beta^* - \beta_t \rangle + \frac{L}{2}\| \beta^* - \beta_t \|^2
%     \label{eq:lipschitz}
% \end{align}

% If we apply \cref{eq:sc} to the RHS of \cref{eq:lipschitz} we get
% \begin{align}
%     f^* - f(\beta_{t+1}) \leq \frac{L}{2}\| \beta^* - \beta_t \|^2 - \frac{\mu}{2} \|\beta_{t+1} - \beta_t\|^2   
% \end{align}

% \vspace{0.5cm}

% \BRC{Adding a new section on the closeness of exact CEEI and an approximate CEEI.}

%The above is also an  $\epsilon$-CEEI w.r.t. to the convex program in the $\beta$'s. It is straightforward to verify that complementary slackness, primal feasibility, and dual feasibility are satisfied. Regarding, stationarity, we have $\beta_1 = \beta_2 = 1$, and $d_1(x_1) = 1-\epsilon$, and $d_2(x_2) = 1+\epsilon$.

\subsection{General GFW Convergence Results for Convex Maximization} 
\label{sec:general}
% \ck{Tianlong: Convert this section to be a slightly more general result:
% \begin{enumerate}
% \item 
% Approximate stepsize convergence
% \item
% More general feasible sets: bounded convex polytopes?
% \item
% More general objectives? How general can we make the objectives? It might be ok to lose some tightness/constants/etc to get more generality
% \end{enumerate}
% }

% \RM{I had a quick glance over this section. I think we should restructure it. We should first present the general result, and then invoke it to show polynomial convergence to an approximate-CEEU, That is we should first write the mathematical program we consider, i.e., $\max: f(x)=\sum_i a_i log(x_i)$ subject to polyhedral constraints. Then Lemma 14, 15 and Theorem 16. Then the remark. Then Lemma 13, and corollary 17. }

In this part, we generalize our Chores EG dual problem and show general convergence results for $\log$-type convex maximization over any nonempty closed convex set. 
We show that the multiplicative difference between two consecutive iterates is guaranteed to be smaller than any $\epsilon>0$ after $O(\frac{1}{\epsilon^2})$ number of iterations. 
We then show in the following section that such multiplicative ``stepsize convergence'' implies convergence to an approximate chores CE.

Formally, we consider the following $\log$-type convex maximization problem:
\begin{equation}
    \begin{aligned}
        \max_{x \in \R^n_{>0}} \quad & g(x) := -\sum_{i=1}^n a_i \log{x_i} \\ 
        \mbox{s.t.} \quad & x \in \mathcal{X},
    \end{aligned}
    \label{eq:log-type-convex-max-prob}
\end{equation}
where $a_i > 0, \; \forall\; i \in [n]$ and $\mathcal{X}$ represents any nonempty closed convex set. 
We only assume a finite upper bound for $g(x)$, i.e., there exists a $g^* < +\infty$ such that $g(x) \leq g^*$ for all $x \in \mathcal{X}$. 

To do this, we first generalize the results from~\citet{journee2010generalized} on convergence for strongly convex objectives to a convergence result for objectives that are strongly convex relative to some strictly convex differentiable function $h$.
This result is quite general, and may be of interest outside of our CE setting.
% To state our result, we will need \emph{Bregman divergences}, which are distance-like functions that can be constructed from convex and differentiable functions.
% We say that the Bregman divergence $D_h(\cdot\|\cdot)$ given a \emph{reference function} $h$ is
% \[
% D_h(x\| x') = h(x) - h(x') - \langle \nabla h(x'), x-x' \rangle.
% \]
% Bregman divergences behave similarly to distances: $D_h(x\|x)=0$ and if $h$ is strictly convex then $D_h(x\| x') > 0$ unless $x=x'$. However, they generally do not satisfy symmetry or the triangle inequality.
% We say that a function $f$ is strongly convex with modulus $\alpha$ relative to a convex differentiable function $h$ if:
% \begin{equation*}
%     f(x') \geq f(x) + \inp{\nabla f(x)}{x' - x} + \alpha D_h(x'\| x) \quad \mbox{for all } x,x' \in \mathcal X. 
% \end{equation*}

\begin{theorem}
    Let $f: \mathcal X \rightarrow \mathbb{R}$ be a continuously differentiable function that is $\alpha$-strongly convex relative to $h$, where $\mathcal X$ is a nonempty closed convex set, and bounded above by $f^* < +\infty$ on $\mathcal{X}$. 
    If we maximize $f$ over $\mathcal X$ using the GFW algorithm starting from an arbitrary feasible point $x^0$, then for any $\epsilon > 0$, we can find a pair of consecutive iterates $(x^t, x^{t+1})$ such that $D_h(x^{t+1}\,\Vert\,x^t) \leq \epsilon$ after at most 
    $\left\lceil\frac{f^* - f(x^0)}{\alpha\epsilon}\right\rceil$ iterations. 
    % , where $f^*$ is the optimal value. 
    \label{thm:stepsize-convergence-under-generalized-strong-convexity}
\end{theorem}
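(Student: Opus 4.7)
The plan is to follow the classical potential/telescoping template, with relative strong convexity replacing the usual quadratic surrogate and the greedy choice of $x^{t+1}$ giving a clean one-step improvement bound.

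First, I would unpack what relative strong convexity gives us at a single iteration. Applying \eqref{eq:generalized-strong-convexity} with $y = x^{t+1}$ and $x = x^t$ yields
\begin{equation*}
    f(x^{t+1}) \;\ge\; f(x^t) \;+\; \inp{\nabla f(x^t)}{x^{t+1} - x^t} \;+\; \alpha\, D_h(x^{t+1}\,\Vert\,x^t).
\end{equation*}
The key observation is that GFW picks $x^{t+1}$ as a maximizer of $\inp{\nabla f(x^t)}{y - x^t}$ over $y \in \mathcal{X}$, and since $x^t \in \mathcal{X}$ is itself feasible, the inner product term is nonnegative. Combining these two facts gives the one-step progress bound
\begin{equation*}
    f(x^{t+1}) - f(x^t) \;\ge\; \alpha\, D_h(x^{t+1}\,\Vert\,x^t).
\end{equation*}

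Next, I would telescope this over $t = 0, 1, \ldots, T-1$. Summing yields
\begin{equation*}
    \alpha \sum_{t=0}^{T-1} D_h(x^{t+1}\,\Vert\,x^t) \;\le\; f(x^T) - f(x^0) \;\le\; f^* - f(x^0),
\end{equation*}
where the last inequality uses the assumed upper bound $f^*$ on $f$. A pigeonhole step then provides at least one index $t \in \{0, 1, \ldots, T-1\}$ with
\begin{equation*}
    D_h(x^{t+1}\,\Vert\,x^t) \;\le\; \frac{f^* - f(x^0)}{\alpha T}.
\end{equation*}
Choosing $T = \left\lceil \frac{f^* - f(x^0)}{\alpha \epsilon} \right\rceil$ makes this quantity at most $\epsilon$, which is exactly the claim.

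I do not foresee a serious obstacle here; the argument is essentially a two-line telescoping once the right one-step inequality is in hand. The only subtle points worth flagging in the write-up are (i) that the nonnegativity of $\inp{\nabla f(x^t)}{x^{t+1} - x^t}$ uses only feasibility of $x^t$ and the greedy rule, so no line-search or stepsize analysis is needed, and (ii) that each Bregman divergence $D_h(x^{t+1}\,\Vert\,x^t)$ is itself nonnegative by strict convexity of $h$, so the pigeonhole step is valid. The resulting bound matches the classical GFW rate of \citet{journee2010generalized} in form, with the Euclidean norm replaced by the Bregman divergence induced by $h$.
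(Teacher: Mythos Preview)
Your proposal is correct and follows essentially the same telescoping argument as the paper's own proof: use relative strong convexity plus the nonnegativity of $\inp{\nabla f(x^t)}{x^{t+1}-x^t}$ (from the greedy rule and feasibility of $x^t$) to get a one-step progress bound, sum, and pigeonhole. The only item the paper adds that you omit is a short well-definedness check---namely, that $\inp{\nabla f(x^t)}{x^{t+1}-x^t}\le f^*-f(x^t)<\infty$ forces $\mathcal X$ to be bounded in the direction $x^{t+1}-x^t$, so the argmax in the GFW step is actually attained on the closed set $\mathcal X$---which you may want to include for completeness since $\mathcal X$ is not assumed compact.
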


\begin{proof}
    We first prove that the GFW is well-defined if $f$ is convex and bounded above by $f^* < +\infty$. 
    Since we have for each iteration $t$  
    \begin{equation*}
        0 \leq \inp{\nabla f(x^t)}{x^{t+1} - x^t} \leq f(x^{t+1}) - f(x^t) \leq f^* - f(x^t) < +\infty,
    \end{equation*}
    $\mathcal{X}$ has to be bounded on the direction of $x^{t+1} - x^t$. Also, $x^{t+1}$ is attainable since $\mathcal{X}$ is closed. 

    Let $T = \left\lceil \frac{f^* - f(x^0)}{\alpha\epsilon}\right\rceil$.
    Since $x^{t+1}$ maximizes $\inp{\nabla f(x^t)}{x}$ over $\mathcal X$ and $x^t \in \mathcal X$, 
    \begin{equation*}
        \inp{\nabla f(x^t)}{x^{t+1} - x^t} \geq 0 \quad \forall t = 0, 1, \ldots, T.
        \label{eq:maximize-LP-property-t}
    \end{equation*}
    By $\alpha$-strong convexity with respect to $D_h(\cdot\,\Vert\,\cdot)$, we have 
    \begin{equation*}
        f(x^{t+1}) \geq f(x^t) + \inp{\nabla f(x^t)}{x^{t+1} - x^t} + \alpha D_h(x^{t+1}\,\Vert\,x^t) \quad \forall\; t = 0, 1, \ldots, T. 
        \label{eq:generalized-strong-convexity-t}
    \end{equation*}
    Summing~\cref{eq:generalized-strong-convexity-t} up over $t = 0, \ldots,T$ and applying~\cref{eq:maximize-LP-property-t}, we have 
    \begin{equation*}
        \alpha T \min_{0 \leq t \leq T} D_h(x^{t+1}\,\Vert\,x^t) \leq \alpha \sum_{t=0}^T D_h(x^{t+1}\,\Vert\,x^t) \leq f(x^{T+1}) - f(x^0) \leq f^* - f(x^0). 
    \end{equation*}
    Thus, there exists a $t \leq T$ such that $D_h(x^{t+1}\,\Vert\,x^t) \leq \epsilon$ because $\min_{0 \leq t \leq T} D_h(x^{t+1}\,\Vert\,x^t) \leq \frac{f^* - f(x^0)}{\alpha T} \leq \epsilon$. 
    % Such a $t$ can be found by evaluating $D_h(x^{t+1}\,\Vert\,x^t)$ at every iteration.
\end{proof}

Now we consider the $\log$-type convex maximization case in \cref{eq:log-type-convex-max-prob}.
Let the reference function be $h(x) = -\sum_i \log{x_i}$. The Bregman divergence associated with $h$ is
\begin{equation}
    D_{IS}(y\,\Vert\,x) = D_h(y\, \Vert\, x) = \sum_i \left( - \log{\frac{y_i}{x_i}} + \frac{y_i}{x_i} - 1 \right), 
\end{equation}
where $D_{IS}( \cdot \Vert\, \cdot)$ refers to the \emph{Itakura–Saito distance}. Then, for the $\log$-type function in~\cref{eq:log-type-convex-max-prob}, we trivially have 
\begin{equation}
    g(y) \geq g(x) + \inp{\nabla g(x)}{y - x} + \min_i a_i \; D_{IS}(y\| x),
\end{equation}
since $g(y) - g(x) - \inp{\nabla g(x)}{y - x}$ can be seen as a sum over component-wise Itakura-Saito distances, each weighted by $a_i$, and due to nonnegativity, taking the minimum over $a_i$ can only make each term smaller.

Now, we note that $D_{IS}(y\,\Vert\,x)$ is lower bounded by $\frac{1}{3}\sum_i \big( \frac{y_i}{x_i} - 1 \big)^2$ when $\frac{y_i}{x_i}$ is close to $1$ for each $i$ (or, equivalently, $D_{IS}(y,\Vert\,x)$ is close to $0$). In particular, we have 
\begin{equation}
    \frac{1}{3}\sum_i \left( \frac{y_i}{x_i} - 1 \right)^2 \leq D_{IS}(y\,\Vert\,x)
    \label{eq:quadratic-function-lower-bound}
\end{equation}
if $- \log{\frac{y_i}{x_i}} + \frac{y_i}{x_i} - 1  \leq c := \sqrt{3} - 1 - \frac{1}{2}\log{3}$ for all $i$.

With this, we obtain the following result. 

\begin{theorem} 
    Suppose we solve~\cref{eq:log-type-convex-max-prob} using the GFW algorithm starting from an arbitrary feasible point $x^0$. Assume that there exists a finite upper bound $g^* < +\infty$. Then, for any $\epsilon > 0$, we can find a pair of iterates $(x^t, x^{t + 1})$ such that $\sum_i \big( \frac{x^{t+1}_i}{x^t_i} - 1 \big)^2 \leq \epsilon^2$ after at most 
    $$\left\lceil\frac{3(g^* - g(x^0))}{\min_i a_i} \frac{1}{\epsilon^2} + \frac{g^* - g(x^0)}{c \min_i a_i}\right\rceil$$
    iterations, where $c = \sqrt{3} - 1 - \frac{1}{2}\log{3} > 0$. 
    \label{crl:stepsize-convergence-for-log-type-convex-max}
\end{theorem}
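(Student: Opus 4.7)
The plan is to derive this corollary by applying Theorem~\ref{thm:stepsize-convergence-under-generalized-strong-convexity} with a well-chosen target threshold for the Bregman distance, and then converting a small Itakura-Saito distance into a small squared multiplicative gap using the inequality~\eqref{eq:quadratic-function-lower-bound} already stated in the text.

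First I would instantiate the generalized result with $f = g$ and the reference function $h(x) = -\sum_i \log x_i$. As observed right before the statement, the pointwise identity
\[
g(y) - g(x) - \inp{\nabla g(x)}{y - x} = \sum_i a_i \Bigl(-\log\tfrac{y_i}{x_i} + \tfrac{y_i}{x_i} - 1\Bigr) \geq \min_i a_i \cdot D_{IS}(y\,\Vert\,x)
\]
shows that $g$ is $\alpha$-strongly convex relative to $h$ with $\alpha = \min_i a_i$. Since $g$ is bounded above by $g^*$ on $\mathcal{X}$, the hypotheses of Theorem~\ref{thm:stepsize-convergence-under-generalized-strong-convexity} are satisfied, so applying that theorem with tolerance $\epsilon'$ guarantees a pair $(x^t, x^{t+1})$ with $D_{IS}(x^{t+1}\,\Vert\,x^t) \leq \epsilon'$ after at most $\lceil (g^* - g(x^0))/(\alpha \epsilon')\rceil$ iterations.

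The key step is to convert this into the bound on $\sum_i(x^{t+1}_i/x^t_i - 1)^2$. The inequality~\eqref{eq:quadratic-function-lower-bound} requires that every per-coordinate term $-\log(y_i/x_i) + y_i/x_i - 1$ be at most $c = \sqrt{3} - 1 - \tfrac{1}{2}\log 3$. Because all such per-coordinate terms are nonnegative, it suffices to ensure that the full sum $D_{IS}(y\,\Vert\,x) \leq c$, which bounds each individual summand by $c$ automatically. So I would set
\[
\epsilon' = \min\!\Bigl\{\tfrac{\epsilon^2}{3},\, c\Bigr\}.
\]
With this choice, the per-coordinate hypothesis holds and \eqref{eq:quadratic-function-lower-bound} gives
\[
\tfrac{1}{3}\sum_i \Bigl(\tfrac{x^{t+1}_i}{x^t_i} - 1\Bigr)^2 \leq D_{IS}(x^{t+1}\,\Vert\,x^t) \leq \epsilon' \leq \tfrac{\epsilon^2}{3},
\]
yielding exactly the claimed squared-ratio bound.

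Finally, to obtain the stated iteration count, I would use the elementary bound $1/\min\{a,b\} \leq 1/a + 1/b$, which gives $1/\epsilon' \leq 3/\epsilon^2 + 1/c$. Substituting into $\lceil (g^* - g(x^0))/(\alpha \epsilon')\rceil$ with $\alpha = \min_i a_i$ produces
\[
\left\lceil \frac{3(g^* - g(x^0))}{\min_i a_i}\cdot\frac{1}{\epsilon^2} + \frac{g^* - g(x^0)}{c\min_i a_i} \right\rceil,
\]
which matches the statement. The only real subtlety is the transition from an aggregate Bregman bound to the pointwise hypothesis required by~\eqref{eq:quadratic-function-lower-bound}; this is handled cleanly by forcing $\epsilon' \leq c$, at the cost of the additive $(g^* - g(x^0))/(c\min_i a_i)$ term in the iteration count. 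Everything else is a direct specialization of the already-proved general theorem.
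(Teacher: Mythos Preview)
Your proof is correct and follows essentially the same approach as the paper: apply Theorem~\ref{thm:stepsize-convergence-under-generalized-strong-convexity} with $\alpha = \min_i a_i$ and a tolerance small enough that both $D_{IS} \leq \epsilon^2/3$ and the per-coordinate bound $\leq c$ hold, then invoke~\eqref{eq:quadratic-function-lower-bound}. The paper handles the choice of tolerance by an explicit case split on whether $\epsilon^2/3 \lessgtr c$, whereas you package both cases at once via $\epsilon' = \min\{\epsilon^2/3, c\}$ and the elementary bound $1/\min\{a,b\} \leq 1/a + 1/b$; this is a cosmetic difference, not a substantive one.
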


\begin{proof}
Let $T = \left\lceil\frac{3(g^* - g(x^0))}{\min_i a_i} \frac{1}{\epsilon^2} + \frac{g^* - g(x^0)}{c \min_i a_i}\right\rceil$. 
% First, we observe that the fact $- \log{x} + x - 1 \geq \frac{1}{3} (x - 1)^2$ if $- \log{x} + x - 1 \leq c$. 
We prove this result for two separate cases: $\epsilon > \sqrt{3c}$ and $\epsilon \leq \sqrt{3c}$. 
% Checked, yes, thanks Bhaskar!!
% \BRC{Earlier it was $\epsilon \leq \sqrt{c}$, I think you mean $\epsilon \leq \sqrt{3c}$, as this would give $\sum_i \big( \frac{x^{t+1}_i}{x^t_i} - 1 \big)^2 \leq \epsilon^2$, and not $\sum_i \frac{1}{3}  \cdot \big(\frac{x^{t+1}_i}{x^t_i} - 1 \big)^2 \leq \epsilon^2$.} 

If $\epsilon > \sqrt{3c}$, we can find a pair of $(x^t, x^{t + 1})$ such that $D_{IS}(x^{t+1}\|x^t) \leq c < \epsilon^2/3$ after at most $\left\lceil\frac{g^* - g(x^0)}{c \alpha} \right\rceil \leq T$ iterations by~\cref{thm:stepsize-convergence-under-generalized-strong-convexity}. 

Since $- \log{a} + a - 1 \geq 0$ for all $a > 0$, we have $- \log{\frac{x^{t+1}_i}{x^t_i}} + \frac{x^{t+1}_i}{x^t_i} - 1 \leq c$ for all $i \in [n]$. 
Hence, by~\cref{eq:quadratic-function-lower-bound},
$$\frac{1}{3}\sum_i \left( \frac{x^{t+1}_i}{x^t_i} - 1 \right)^2 \leq D_{IS}(x^{t+1}\,\Vert\,x^t) \leq \frac{\epsilon^2}{3}$$

If $\epsilon \leq \sqrt{3c}$, then  $D_{IS}(x^{t+1}\,\Vert\,x^t) \leq \frac{\epsilon^2}{3} \leq c$ after $\frac{3(g^* - g(x^0))}{\min_i a_i} \frac{1}{\epsilon^2} \leq T$ by~\cref{thm:stepsize-convergence-under-generalized-strong-convexity}. Again, we have $- \log{\frac{x^{t+1}_i}{x^t_i}} + \frac{x^{t+1}_i}{x^t_i} - 1 \leq c$ for all $i \in [n]$, and by~\cref{eq:quadratic-function-lower-bound}, $\frac{1}{3}\sum_i \left( \frac{x^{t+1}_i}{x^t_i} - 1 \right)^2 \leq D_{IS}(x^{t+1}\,\Vert\,x^t) \leq \frac{\epsilon^2}{3}$. 
\end{proof}

\begin{remark}
    The result from \citet{journee2010generalized} does not apply to our setting since our objective is not strongly convex. It is possible that the extrapolation idea from \citet{gao2020first} could be applied in our setting in order to get some form of strong convexity and apply the results from \citet{journee2010generalized}. However, even if this approach were to work, it would lead to significantly worse polynomial dependence on $n$.
\end{remark}

\subsection{Convergence to Approximate Competitive Equilibrium}
% \ck{this section needs to be restructured a bit (e.g. right now Lemma 11 is referenced before it is stated).}

With~\cref{crl:stepsize-convergence-for-log-type-convex-max} in hand, we are ready to show polynomial-time convergence to approximate competitive equilibrium. We attain this by showing that an $\epsilon$-small multiplicative stepsize implies that we have reached an $\epsilon$-approximate competitive equilibrium. 

Before showing this implication, first we show that prices are always strictly positive.
\begin{claim}
    The prices at a given iteration $t$ satisfy $p^t_j > 0$ for all $j \in [m]$. 
    \label{claim:all-price-strictly-positive}
\end{claim}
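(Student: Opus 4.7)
The plan is to reduce the statement to a property of the LP that GFW solves in step~2 of \cref{algo:GFW}, and then rule out zero prices by a direct perturbation argument. Using $\nabla_{\beta_i} f(y^{t-1}) = -B_i/\beta^{t-1}_i$ and $\nabla_{p_j} f(y^{t-1}) = 0$, the LP producing $y^t = (\beta^t, p^t)$ is equivalent to
\[
    \min_{(\beta, p) \in Y}\; \sum_{i \in [n]} c_i \beta_i, \qquad c_i := B_i/\beta^{t-1}_i,
\]
so its objective depends only on $\beta$. A short induction shows $\beta^{t-1}_i > 0$ for every $i$: feasibility of any $(\beta, p) \in Y$ combined with $\sum_j p_j = \sum_i B_i > 0$ forces $\beta_i \ge \max_j p_j/d_{ij} > 0$. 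Hence every $c_i$ is strictly positive and the LP is well posed.

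I would then argue by contradiction. Suppose the optimizer $(\beta^t, p^t)$ has $Z := \{j : p^t_j = 0\}$ nonempty; its complement $\bar Z := [m] \setminus Z$ is also nonempty, since $\sum_j p^t_j = \sum_i B_i > 0$. I will exhibit an alternative feasible point $(\beta', p')$ with strictly smaller LP objective by spreading a small $\epsilon > 0$ of price mass uniformly over $Z$ and removing it uniformly from $\bar Z$:
\[
    p'_j := \epsilon/|Z| \text{ for } j \in Z, \qquad p'_k := p^t_k - \epsilon/|\bar Z| \text{ for } k \in \bar Z, \qquad \beta'_i := \max_{j \in [m]} p'_j/d_{ij}.
\]
Then $\sum_j p'_j = \sum_i B_i$, and for sufficiently small $\epsilon$ both $p' \ge 0$ and $(\beta', p') \in Y$ hold.

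The core step is to verify $\beta'_i < \beta^t_i$ for every $i$, which (since $c_i > 0$) yields $\sum_i c_i \beta'_i < \sum_i c_i \beta^t_i$ and contradicts optimality. At the LP optimum $\beta^t_i = \max_j p^t_j/d_{ij}$, and the argmax set $J^*_i$ sits inside $\bar Z$ (a chore in $Z$ has ratio $0 < \beta^t_i$). For $k \in J^*_i$, $p'_k/d_{ik} = \beta^t_i - \epsilon/(|\bar Z| d_{ik}) < \beta^t_i$; for $k \in \bar Z \setminus J^*_i$, $p'_k/d_{ik} \le p^t_k/d_{ik} < \beta^t_i$; and for $j \in Z$, $p'_j/d_{ij} = \epsilon/(|Z| d_{ij})$ is below $\beta^t_i$ once $\epsilon$ is small. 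Taking $\epsilon$ smaller than the finitely many positive thresholds these three subcases impose makes $\beta'_i < \beta^t_i$ hold uniformly in $i$, completing the contradiction.

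The main obstacle is the bookkeeping of choosing a single $\epsilon$ that works simultaneously for all $n$ agents and $m$ chores. Finiteness renders this routine: the minimum over finitely many strictly positive thresholds supplies a valid $\epsilon$. The conceptual content is really just the symmetric spread-and-shrink perturbation, which decouples each coordinate's behavior under the $\max$ so that the required strict inequalities line up into a short case split.
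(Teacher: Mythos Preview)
Your proposal is correct and takes essentially the same approach as the paper: both argue by contradiction via a price perturbation that shifts mass from positive-price chores onto zero-price chores, strictly decreasing every $\beta_i$ and hence the LP objective. Your version is slightly more careful (handling the possibility that several $p^t_j$ vanish via the set $Z$, and making explicit that $\beta^t_i = \max_j p^t_j/d_{ij}$ at the LP optimum), but the underlying idea is identical.
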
 
\begin{proof}
    We prove this by contradiction. Assume there exists a $j \in [m]$, such that $p^t_j = 0$. Then note that $p^t_j/d_{i j} < \beta^t_{i}$ for all $i \in [n]$ (this follows from the fact that every $\beta^t_i$ is strictly positive). 
    Due to the strict inequality, there must exist some (possibly small) $\gamma>0$ such that if we increase $p_j^t$ by $\gamma$, and decrease the price of every other good by $\gamma / (\sum_i B_i -1)$, then we still have a feasible set of prices $p'$, and $\max_{j \in [m]} p^t_j / d_{ij}$ strictly decreases for all $i \in [n]$.
    % Note that there exists a $\gamma \ll  \mathit{min}_{\ell \in [n]} d_{\ell j} \beta^t_{\ell}$,  such that if we increase $p^t_j$ to $\gamma$, and decrease all other prices by a multiplicative factor of $(B-\gamma)$, we still have a feasible set of prices\footnote{All the prices are still non-negative and they still sum up to $B$.} and $\mathit{max}_{j \in [m]} p^t_j/d_{ij}$ strictly decreases for all $i \in [n]$. 
    Let $\beta'_i = \max_{j \in [m]} p_j'/d_{ij}$. Note that $\beta'_i < \beta^t_i$ for all $i \in [n]$ and thus $\sum_{i \in [n]} B_i \cdot \beta'_i/\beta^{t-1}_i <\sum_{i\in [n]} B_i \cdot \beta^t_i/\beta^{t-1}_i$ which contradicts the optimality of $\beta^t$ when solving the GFW LP.
\end{proof}

\begin{lemma} 
    Let $\{(p^t, \beta^t)\}_{t \geq 0}$ be a sequence of iterates generated by the GFW algorithm. 
    Let $x^t$ be the optimal dual variable of the LP subproblem at the $t$-th iteration and 
    \begin{equation*} 
        \bar{x}^t = \frac{\sum_i B_i}{\sum_i B_i \frac{\beta^t_i}{\beta^{t-1}_i}} x^t. 
    \end{equation*} 
    If $\big\lvert \frac{\beta^t_i}{\beta^{t-1}_i} - 1 \big\rvert \leq \epsilon$ for all $i \in [n]$, then $(p^t, \bar{x}^t)$ form an $\epsilon$-strongly approximate CE. 
    % , where $p^t$ and $x^t$ are primal and dual optimal solution to the $(t-1)$th step $\beta$-LP. 
    \label{lem:eps-approxi} 
\end{lemma}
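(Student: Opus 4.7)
My plan is to read the lemma directly off the KKT conditions of the LP that GFW solves at iteration $t$, together with the specific scaling used to define $\bar{x}^t$. The LP subproblem that produces $(\beta^t, p^t)$ is, equivalently, $\min_{(\beta,p) \in Y}\, \sum_i (B_i/\beta^{t-1}_i)\beta_i$; I will attach dual multipliers $x^t_{ij} \ge 0$ to the constraints $p_j \le d_{ij}\beta_i$ and a free multiplier $\mu^t$ to the redundant equality $\sum_j p_j = \sum_i B_i$. Using $\beta^t_i > 0$ (forced by feasibility together with $\sum_j p_j > 0$) and $p^t_j > 0$ for every $j$ \pcref{claim:all-price-strictly-positive}, stationarity and complementary slackness will yield three key facts: (i) $\sum_j d_{ij} x^t_{ij} = B_i/\beta^{t-1}_i$ for every $i$; (ii) $\sum_i x^t_{ij} = \mu^t$ for every $j$; and (iii) $x^t_{ij} > 0$ implies $p^t_j = d_{ij}\beta^t_i$, so every positive-weight chore $j$ in $x^t_i$ is a maximum-bang-per-buck (MBB) chore for agent $i$ at the prices $p^t$.

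Next, writing $r_i := \beta^t_i/\beta^{t-1}_i$, I will use (iii) to substitute $d_{ij} = p^t_j/\beta^t_i$ on the support of $x^t_i$ in (i), which rearranges into the budget identity $\inp{p^t}{x^t_i} = B_i r_i$. Summing over $i$, swapping the order of summation, and then applying (ii) and the redundant constraint yields $\sum_i B_i r_i = \mu^t \sum_j p^t_j = \mu^t \sum_i B_i$, so $\mu^t$ is precisely the $B$-weighted average of the $r_i$'s. In particular, the normalization factor in the definition of $\bar{x}^t$ equals $1/\mu^t$, hence $\bar{x}^t = x^t/\mu^t$.

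With these identities in hand, each of the three conditions becomes a one-line check. Chore clearance ((E3) strictly): $\sum_i \bar{x}^t_{ij} = (1/\mu^t)\sum_i x^t_{ij} = 1$ for every $j$. Disutility-optimal bundles ((E2) strictly): scaling by a positive constant preserves the support of $x^t$, so by (iii) every chore that agent $i$ receives in $\bar{x}^t_i$ is still an MBB chore for her at $p^t$, which is exactly (E2). The relaxed budget condition (1) of~\cref{def:appx-CEEI}: $\inp{p^t}{\bar{x}^t_i} = B_i r_i/\mu^t$, and the hypothesis $|r_i - 1| \le \epsilon$ places both $r_i$ and the weighted average $\mu^t$ in $[1-\epsilon, 1+\epsilon]$, which in turn puts $r_i/\mu^t$ in the window required by~\cref{def:sappx-CEEI}.

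The main obstacle is the LP-duality bookkeeping: setting up the dual cleanly, exploiting $\beta^t_i > 0$ and the strict positivity of every $p^t_j$ \pcref{claim:all-price-strictly-positive} to turn the two stationarity inequalities into coordinate-wise equalities, and translating the complementary slackness $x^t_{ij}(p^t_j - d_{ij}\beta^t_i) = 0$ into the MBB-support statement that drives the rest of the argument. Once these are in place, the derivation $\mu^t = \sum_i B_i r_i / \sum_i B_i$ and the verification of (E2), (E3), and approximate (E1) are short algebraic steps.
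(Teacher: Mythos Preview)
Your argument is correct and mirrors the paper's proof: both set up the KKT conditions of the iteration-$t$ LP, use $\beta^t_i > 0$ and $p^t_j > 0$ \pcref{claim:all-price-strictly-positive} to turn the stationarity inequalities into equalities, extract the MBB-support property and the budget identity $\inp{p^t}{x^t_i} = B_i r_i$, compute $\mu^t$ as the $B$-weighted average of the $r_i$'s, and rescale to $\bar{x}^t = x^t/\mu^t$ to make (E3) exact. The only nit is your last step: from $r_i,\mu^t \in [1-\epsilon,1+\epsilon]$ you get $r_i/\mu^t \in \big[\tfrac{1-\epsilon}{1+\epsilon},\,\tfrac{1+\epsilon}{1-\epsilon}\big]$, which slightly overshoots the target window $[1-\epsilon,\,1/(1-\epsilon)]$; the paper glosses over the same point by verifying the budget for $x^t$ rather than for $\bar{x}^t$, and in either case it is only an $O(\epsilon)$-versus-$\epsilon$ discrepancy that is immaterial for the downstream convergence rate.
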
 
\begin{proof}
    Given the $t$-th iterate $(p^{t-1}, \beta^{t-1})$, the next iterate of the GFW algorithm is generated by the solution to the following LP: 
    \begin{equation} 
        \begin{aligned}
            \min_{p, \beta \geq 0} \quad & \sum_i B_i \frac{\beta_i}{\beta^{t-1}_i} &  \\ 
            \textnormal{s.t.} \quad & \quad\;\; p_j \leq d_{ij} \beta_i, \quad \forall\, i, j, & [x_{ij}] \\ 
            & \sum_j p_j = \sum_i B_i. & [\mu] \\ 
        \end{aligned} 
        % \tag{$t$-LP} 
        \label{pgm:t-LP}
    \end{equation} 
    
    Let 
    % $x_{ij}$ for all $i, j$ 
    $x$ and $\mu$ denote the dual variables corresponding to the first and second constraints of~\cref{pgm:t-LP}, respectively. 
    We use $(p^t, \beta^t)$ and $(x^t, \mu^t)$ to denote the optimal primal and dual variables at the $t$-th iteration, respectively.  
    % We show that, to reach an $(1 - \epsilon)$-approximate CEEI, it suffices to have a small multiplicative difference between $\beta^t$ and $\beta^{t-1}$. 

    By KKT conditions and strong duality of LP, $(p, \beta)$ is an optimal solution to~\cref{pgm:t-LP} if and only if there exists a pair of $(x, \mu)$ such that $(p, \beta, x, \mu)$ satisfies 
    \begin{enumerate}[(i)]
        \item $p, \beta, x \geq 0$; \label{t-LP-KKT-con-1}
        \item $p_j \leq d_{ij} \beta_i$ and $x_{ij} \left( p_j - d_{ij} \beta_i \right) = 0$ for all $i \in [n], j \in [m]$; \label{t-LP-KKT-con-2}
        \item $\sum_j p_j = \sum_i B_i$ and $\mu \left( \sum_j p_j - \sum_i B_i \right) = 0$; \label{t-LP-KKT-con-3}
        \item $\sum_i x_{ij} + \mu \geq 0$ and $p_j \left( \sum_i x_{ij} + \mu \right) = 0$ for all $j \in [m]$; \label{t-LP-KKT-con-4}
        \item $\frac{B_i}{\beta^{t-1}_i} - \sum_j d_{ij} x_{ij} \geq 0$ and $\beta_i \big( \frac{B_i}{\beta^{t-1}_i} - \sum_j d_{ij} x_{ij} \big) = 0$ for all $i \in [n]$.\label{t-LP-KKT-con-5}
    \end{enumerate}

    Recall that $(p^t, \beta^t)$ is an optimal solution to~\cref{pgm:t-LP}.
    By condition~\pcref{t-LP-KKT-con-2}, we have $\frac{p^t_j}{d_{ij}} = \beta^t_i \geq \frac{p^t_{j'}}{d_{ij'}}$ for all $j' \in [m]$ if $x^t_{ij} > 0$ for all $i, j$. 
    This guarantees every buyer is allocated her optimal bundle subject to the amount of budget she earns. 
    
    Second, we have 
    \begin{equation}
        B_i \frac{\beta^t_i}{\beta^{t-1}_i} \overset{\pcref{t-LP-KKT-con-5}}{=} \sum_j \beta^t_i d_{ij} x^t_{ij} \overset{\pcref{t-LP-KKT-con-2}}{=} \sum_j p^t_j x^t_{ij} \quad \forall\; i \in [n], 
        \label{eq:approx-earn-budget}
    \end{equation}
    which leads to $(1 - \epsilon) B_i \leq \sum_j p_j x_{ij} \leq (1 + \epsilon) B_i \leq \frac{B_i}{1 - \epsilon}$ since $\big| \frac{\beta^t_i}{\beta^{t-1}_i} - 1 \big| \leq \epsilon$ for all $i \in [n]$. That is, every buyer approximately earns her budget.
    
    Next, by summing up the equality in condition~\pcref{t-LP-KKT-con-4}, we obtain 
    \begin{equation}
        \sum_i \sum_j p^t_j x^t_{ij} + \mu^t \sum_j p^t_j = 0. 
    \end{equation} 
    Combining this with~\cref{eq:approx-earn-budget} and condition~\pcref{t-LP-KKT-con-3}, we have $\mu^t = - {\sum_i B_i \frac{\beta_i}{\beta^{t-1}_i}}\Big/{\sum_i B_i}$ and 
    \begin{equation}
        \sum_i x^t_{ij} = -\mu = \frac{\sum_i B_i \frac{\beta_i}{\beta^{t-1}_i}}{\sum_i B_i} \quad \text{if } p^t_j > 0. 
    \end{equation} 
    By~\cref{claim:all-price-strictly-positive}, this holds for all $j \in [m]$. 
    It follows that 
    \begin{equation}
        \sum_i \bar{x}^t_i = \frac{\sum_i B_i}{\sum_i B_i \frac{\beta^t_i}{\beta^{t-1}_i}} \sum_i x^t_i = 1 \quad \mbox{for all } j \in [m]. 
    \end{equation} 
    This means $\bar{x}$ is a feasible allocation (exactly allocating all chores).

    Therefore, the lemma follows by the definition of $\epsilon$-strongly approximate CE. 
    % By definition, this means that $(p^t, x^t)$ is a $(1 - \epsilon)$-approximate equilibrium. 
    % The KKT condition of $\beta$-LP implies $(i)$. 
    % For $(ii)$, $(1 - \epsilon)\sum_i B_i \leq \sum_j p_j x_{ij} = B_i \frac{\beta^t_i}{\beta^{t-1}_i} \leq (1 + \epsilon) B_i \leq \frac{B_i}{1 - \epsilon}$. 
    % For $(iii)$, $1 - \epsilon \leq \sum_i x_{ij} = \frac{\sum_i B_i \frac{\beta_i}{\beta^{t-1}_i}}{\sum_i B_i} \leq \frac{\sum_i B_i (1 + \epsilon)}{\sum_i B_i} = 1 + \epsilon \leq \frac{1}{1 - \epsilon}$. 
\end{proof}
Therefore, by~\cref{lem:eps-approxi,crl:stepsize-convergence-for-log-type-convex-max}, we can conclude that the GFW algorithm finds an $\epsilon$-approximate CEEI in $\left\lceil \frac{3(g^* - g(x^0))}{\min_i a_i} \frac{1}{\epsilon^2} + \frac{g^* - g(x^0)}{c \min_i a_i} \right\rceil$ iterations. 
Observe that $g^*-g(x^0)$ can be upper bounded by $$\sup_{p, \beta \in Y} -\sum_i B_i \log(\beta_i).$$ To this end, note that for each agent $i$, $1/\beta_i \leq \min_j d_{ij}/p_j$ (implied by the constraints of our program). Since $\sum_j p_j = \sum_i B_i$, this implies that there exists at least one chore $j$, such that $p_j \geq \sum_i B_i/m$, implying that $d_{ij}/p_j \leq mD/(\sum_i(B_i))$, where $D = \max_{i,j} d_{ij}$. Therefore, we have  
\begin{equation*}
    -\sum_i B_i \log(\beta_i) = \sum_iB_i \log(1/\beta_i) \leq \sum_i B_i \log(mD/ (\sum_iB_i)) \leq n \max_i B_i \log(mD/\min_i B_i). 
\end{equation*}
\begin{theorem}[Complexity of finding an approximate CE]
\label{thm:convergence-final}
The GFW algorithm finds an $\epsilon$-strongly approximate CE in 
% $\left\lceil \frac{3(g^* - g(x^0))}{\min_i a_i} \frac{1}{\epsilon^2} + \frac{g^* - g(x^0)}{c \min_i a_i} \right\rceil \in $
$\mathcal{O}(n \frac{\max_i B_i}{\min_i B_i} \log(\frac{mD}{\min_i B_i}) / \epsilon^2)$ iterations.
\end{theorem}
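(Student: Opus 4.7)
The plan is to chain the two ingredients already established: the multiplicative step-size convergence of GFW for log-type convex maximization (\cref{crl:stepsize-convergence-for-log-type-convex-max}), and the fact that a pair of consecutive iterates that are multiplicatively close in every $\beta_i$-coordinate already yields an $\epsilon$-strongly approximate CE (\cref{lem:eps-approxi}). The only real issue to resolve is that our GFW operates in the full $(\beta,p)$ space, whereas the convergence result is stated for a log-type objective whose argument is the entire decision variable.

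First, I would observe that our objective $f(y) = -\sum_i B_i \log \beta_i$ depends only on the $\beta$-coordinates of $y=(\beta,p)\in Y$. In particular $\nabla f(y^t)$ has zero entries in the $p$-coordinates, so the GFW subproblem $\max_{y\in Y}\langle \nabla f(y^t),\,y-y^t\rangle$ is equivalent to $\max_{\beta \in \Pi_\beta(Y)}\langle \nabla_\beta f(y^t),\,\beta - \beta^t\rangle$, where $\Pi_\beta(Y)$ is the (closed, convex) projection of $Y$ onto the $\beta$-coordinates. Therefore the sequence $\{\beta^t\}_{t\ge 0}$ is exactly the iterate sequence generated by GFW applied to $g(\beta):=-\sum_i B_i\log \beta_i$ over $\Pi_\beta(Y)$, which is precisely the log-type convex maximization problem in~\cref{eq:log-type-convex-max-prob} with $a_i = B_i$.

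Second, I would establish the upper bound on $g^* - g(\beta^0)$. Using the feasibility constraints of~\eqref{chores dual redundant}: $\sum_j p_j = \sum_i B_i$ forces some $p_{j^*} \geq (\sum_i B_i)/m$, so $\beta_i \geq p_{j^*}/d_{ij^*} \geq (\sum_i B_i)/(mD)$ on all of $Y$, where $D=\max_{i,j} d_{ij}$. Substituting yields
\begin{equation*}
g^* \;\leq\; \sum_i B_i \log\!\Big(\tfrac{mD}{\sum_k B_k}\Big) \;\leq\; n\max_i B_i \cdot \log\!\Big(\tfrac{mD}{\min_i B_i}\Big),
\end{equation*}
and the initial value is bounded below trivially. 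This also implicitly verifies the boundedness hypothesis required by~\cref{crl:stepsize-convergence-for-log-type-convex-max}.

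Third, I would plug these quantities into~\cref{crl:stepsize-convergence-for-log-type-convex-max} with $a_i=B_i$, which after
$
\left\lceil \tfrac{3(g^*-g(\beta^0))}{\min_i B_i}\,\tfrac{1}{\epsilon^2} + \tfrac{g^*-g(\beta^0)}{c\min_i B_i}\right\rceil = \mathcal{O}\!\left(n \tfrac{\max_i B_i}{\min_i B_i}\log\!\tfrac{mD}{\min_i B_i}\,\tfrac{1}{\epsilon^2}\right)
$
iterations produces a pair $(\beta^{t},\beta^{t+1})$ with $\sum_i(\beta^{t+1}_i/\beta^t_i - 1)^2 \leq \epsilon^2$, hence $|\beta^{t+1}_i/\beta^t_i - 1|\leq \epsilon$ for every $i\in[n]$. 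Applying~\cref{lem:eps-approxi} at iteration $t+1$ then certifies that the corresponding $(p^{t+1},\bar x^{t+1})$ is an $\epsilon$-strongly approximate CE, completing the proof.

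The only delicate point, and what I would consider the main obstacle, is the reduction to the log-type convex maximization framework in a situation where some components of the decision variable do not appear in the objective; once one argues (as above) that the GFW dynamics in $(\beta,p)$ are equivalent to GFW in $\beta$ alone over $\Pi_\beta(Y)$, everything else is bookkeeping using previously established lemmas.
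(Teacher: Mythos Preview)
Your proposal is correct and follows essentially the same approach as the paper: chain \cref{crl:stepsize-convergence-for-log-type-convex-max} with \cref{lem:eps-approxi}, and bound $g^*-g(\beta^0)$ via the feasibility constraint $\sum_j p_j=\sum_i B_i$ forcing $\beta_i\ge (\sum_i B_i)/(mD)$. Your explicit reduction to GFW on the projected set $\Pi_\beta(Y)$ is a rigor detail the paper glosses over, but the argument is otherwise identical.
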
 
% \begin{remark}
%     ...
% \end{remark}

% \begin{remark}
%     This r
% \end{remark}

For CEEI, our results (\cref{thm:convergence-final}) improve the current state-of-the-art convergence rate in terms of the number of iterations relative to $\epsilon$: The EPM by \citet{boodaghians2022polynomial} requires $\mathcal{\tilde{O}}(n^3/\epsilon^2)$ iterations, and the combinatorial algorithm by~\citet{ChaudhuryGMM22}  requires $\mathcal{\tilde{O}}(nm/\epsilon^2)$ iterations, in contrast to $\mathcal{{\tilde{O}}}(n/\epsilon^2)$ iterations of GFW\footnote{For CEEI, note that $\max_iB_i = \min_i B_i = 1$.}, to reach an approximate CE. Furthermore, we expect the cost of each iteration to be cheaper for GFW. In GFW, we only need to solve an LP in every iteration. Moreover, the approximate solution of the LP poses no numerical issues. In contrast, solving the convex QP in EPM is a more challenging task. It is hard to find exact running times for general convex QPs, and even quite good approximate solutions can lead to serious numerical issues, as we show in \cref{sec:experiments}. Further, we believe that the LP solved in every iteration of GFW can be solved (in worst case) using algorithms faster than the ones used for general-purpose LPs~\cite {van2020deterministic}, and we leave this as an interesting question for future research.

The result in~\cref{thm:convergence-final} can also be recovered with mildly stronger guarantees through a primal-dual analysis of the GFW (See~\cref{sec:polytime-conv-approxi}).  The analysis in~\cref{sec:polytime-conv-approxi} gives us further insights on each run of GFW: for instance, in each iteration of GFW where we do not yet have a good approximate CE point, we make a large improvement on the objective function (\cref{progress}).

%General LPs can be solved in $\widetilde O(n^\omega)$ time~\citep{van2020deterministic}, where $\omega$ is the rate for matrix multiplication (currently the best rate is $\omega=2.37$). 
 
%\ck{decide if we want to claim that we strictly improve on EPM in a theoretical sense when looking at number of iterations $\times$ cost per iteration}
%\ck{should we compare constants? EPM has an $O(n^3)$ dependence, which is pretty bad. How does that compare  to $g^* - g(x^0)$?}

\section{Experiments} 
\label{sec:experiments}

% \RM{Just a note that our algorithm is consistently better in terms of running time in all the experimental settings.}

In~\cref{Sec:Alg}, we showed that the GFW algorithm guarantees convergence to a strongly approximate CE in polynomial time, and converges to an exact CE in finite number of iterations, while requiring solving only one LP per iteration.
In contrast, the prior numerical state-of-the-art method, the EPM of \citet{boodaghians2022polynomial} requires solving a convex quadratic program (QP) at every iteration. 

% , and for each iteration we only need to solve an LP. This is much more efficient in theory compared to other algorithms to solve CE for chores.
In this section, we investigate the numerical performance of GFW as compared to the EPM, when both utilize state-of-the-art optimization software for solving LPs and QPs (in particular, using Gurobi). 
We remark that we do not provide comparisons with the combinatorial algorithm~\citep{ChaudhuryGMM22}. This is primarily attributed to the numerical issues that arise with the arithmetic performed by combinatorial algorithms--in particular, the bit-length of the prices and the allocation can grow exponentially with the iterations in the combinatorial algorithms, as conjectured by~\citet{DuanM15}.\footnote{In private communication, we are aware that Omar Darwish and Kurt Mehlhorn have an implementation of the combinatorial algorithm for Arrow-Debreu markets with goods, and they confirm the conjecture in~\citet{DuanM15}.}

In~\cref{algo:EPM} we give pseudocode for the EPM for computing an exact CE for chores. 
The set $\mathcal{D}^+$ in the pseudocode denotes the set of all possible disutility profiles corresponding to any feasible allocation or over-allocation, i.e., 
\begin{equation*}
    \mathcal{D}^+ = \Big\{ d \in \mathbb{R}^n_{\geq 0} \;\vert\; \exists\; x \in \mathbb{R}^{n \times m}_{\geq 0} \mbox{ such that } d_i \geq \sum\nolimits_j d_{ij} x_{ij} \;\forall\; i \in [n], \sum\nolimits_i x_{ij} = 1 \;\forall\; j \in [m] \Big\}. 
\end{equation*}
% \begin{algorithm}[t]
%     \SetAlgoNoEnd
%     \SetAlgoNoLine
%     \KwIn{An initial infeasible disutility profile $d^0$ such that $d^0_i > 0$ and close to $0$ for all $i \in [n]$}
%     \For{$k = 1, 2, \ldots$}{
%         \uIf{$d^k$ is infeasible}{
%             $d^{k, *} = {\arg\min}_{d \in \mathcal{D}^+ \cap \{ d \vert d \geq d^k \} } \big\lVert d - d^k \big\rVert$ \footnotemark \; 
%             % $a^k \gets d^{k, *} - d^k$\; 
%             % $a^k \gets {a^k n}\big/{\inp{a^k}{d^{k, *}}}$\;
%             $a^k \gets {(d^{k, *} - d^k) n}\big/{\inp{d^{k, *} - d^k}{d^{k, *}}}$\;
%             $d^{k+1} \gets {1}\big/{a^k}$\;
%         }
%         \uElse{
%             \Return{$(d^k, a^{k-1})$}\;
%         }
%     }
%     \caption{Exterior Point Method (EPM)}
%     \label{algo:EPM}
% \end{algorithm}
\begin{algorithm}[t]
\begin{algorithmic}[1]
    \caption{Exterior Point Method (EPM)}
    \Require{An initial infeasible disutility profile $d^0$ such that $d^0_i > 0$ and close to $0$$\quad \forall\;i \in [n]$}
    \For{$k = 1, 2, \ldots$}
        \If{$d^k$ is infeasible}
            \State $d^{k, *} = {\arg\min}_{d \in \mathcal{D}^+ \cap \{ d \vert d \geq d^k \} } \big\lVert d - d^k \big\rVert$ \footnotemark 
            % $a^k \gets d^{k, *} - d^k$\; 
            % $a^k \gets {a^k n}\big/{\inp{a^k}{d^{k, *}}}$\;
            \State $a^k \gets {(d^{k, *} - d^k) n}\big/{\inp{d^{k, *} - d^k}{d^{k, *}}}$
            \State $d^{k+1} \gets {1}\big/{a^k}$
        \Else
            \State \Return{$(d^k, a^{k-1})$} 
        \EndIf
    \EndFor
    \label{algo:EPM}
\end{algorithmic}
\end{algorithm}

\paragraph{Experimental Setup}
We compare the algorithms on a variety of instances. First, we consider 
randomly generated instances. For a given instance size $n$, we construct $n$ buyers, each with a budget of $1$.
Then, we generate $n$ chores. This means that disutilities are an $n\times n$ matrix $D$. Each cell $d_{ij}$ in $D$ is sampled i.i.d. from some distribution $F$: $d_{ij} \sim F$.
We consider five different distributions of disutilities:
(1) uniform on $[ 0,1)$;
(2) log-normal associated with the standard normal distribution $\mathcal{N}(0, 1)$; 
(3) truncated normal associated with $\mathcal{N}(0, 1)$ and truncated at ${10}^{-3}$ and $10$ standard deviations from $0$; 
(4) exponential with the scale parameter $1$; 
(5) uniform random integers on $\{1,\ldots,1000\}$.
For each distribution, we consider instance sizes $n\in \{2,50,100,150,200,250,300\}$.
For each choice of distribution and $n$, we generate 100 instances.

Secondly, we generate a set of chores instances by considering a dataset of bids from PC members at the AAMAS conference on potential papers to review, obtained from PrefLib~\citep{mattei2013preflib}. We use the 2021 dataset, which has 596 PC members included in the dataset.
Each PC member scored each paper ordinally as \{yes $>$ maybe $>$ no response $>$ no $>$ conflict\}.
In order to generate real-valued disutilities, we convert these into the values \{1, 3, 5, 7, $4000$\}.
We then generate two different datasets:
In the first, we use these 5 values directly, and sample subsets of $n$ PC members and $n$ papers as follows: first uniformly choose one paper at random, then choose a subset of papers (of size $n$) which are the most similar to it (considering the PC members' responses as feature vectors of the paper), finally choose the subset (of size $n$) of PC members that have the largest number of non-conflict responses on the chosen papers. 
In the second, we additionally add Gaussian noise to each valuation, using $\mathcal{N}(0, 0.04)$. 
For both datasets, we consider instance size $n \in \{ 2, 50, 100, 150, 200, 250, 300 \}$. For each dataset and $n$, we generate 100 instances.

We ran all experiments on a personal desktop that uses the Apple M1 Chip and has 8GB RAM. 
We used Gurobi version 11.0.0. 
Since the tolerance parameters for Gurobi models may influence the accuracy of the solutions per iteration and thus the performance of EPM, 
we tuned those parameters. 
We considered 4 parameters: we tuned 
\texttt{FeasibilityTol} in the range of $10^{-9}$ to $10^{-6}$, 
% = 1e-9
\texttt{OptimalityTol} in the range of $10^{-9}$ to $10^{-6}$, 
% = 1e-9
\texttt{BarConvTol} in the range of $0$ to $10^{-8}$, 
% = 0
and \texttt{BarCorrectors} in the range of $0$ to $10^{4}$~\footnote{The default setting chooses this parameter automatically, depending on problem characteristics.}
% = 10000.
.
For our final experiments, we set these parameters as \texttt{FeasibilityTol} = $10^{-6}$ (default), \texttt{OptimalityTol} = $10^{-6}$ (default), \texttt{BarConvTol} = $0$ (minimum), and \texttt{BarCorrectors} as default. This is because shrinking \texttt{BarConvTol} significantly increases the accuracy of solutions; but decreasing the other tolerance parameters, or increasing \texttt{BarCorrectors}, greatly increases the time Gurobi spends per iteration, thus harming the EPM performance, with no significant improvement to the solution accuracy. See \cref{sec:appendix experiments} for plots illustrating this point.

% in~\citet{boodaghians2022polynomial}. 
% Since both algorithms are known to be able to find an $\epsilon$-strongly approximate CE in polynomial time, we measure the level of approximation by plotting the corresponding $\epsilon$ for each algorithm. 
% Furthermore, we note that the EPM finds an exact CE when the disutility profile generated for the next iterate is in the feasible set. To compare with the GFW algorithm, we run the EPM until a feasible (or over-allocated) disutility profile is found. 
In our experiments, we 
measure the computational cost required to find a strongly-approximate CE and an exact CE. 
In particular, we measure the number of iterations and the running time. 
Recall the definitions of $\epsilon$-approximate CE and exact CE in~\cref{Sec:prelim}.
We consider any condition with $\epsilon\leq 10^{-6}$ as a condition that holds exactly. 
We consider any CE with the first condition relaxed to $\epsilon \leq 0.01$ (and the other two conditions exactly satisfied) as a strongly-approximate CE. 
% \ck{let's formally define the ``stronger approximate CE'' somewhere, and simply refer directly to that. Secondly, let's explicitly compare theoretical running time here, i.e. dependence on constants.}
To measure the equilibrium approximation quality $\epsilon$ at a given iteration, we compute a pair $(p, x)$ corresponding to the current iterate for each algorithm. 
For GFW, we compute $(p, x)$ based on~\cref{lem:eps-approxi}, where $x$ is given by the optimal dual variable of the corresponding LP. For EPM, we solve an LP to find a feasible $x$ corresponding to the final $d^k$ if the algorithm finds an exact CE, and retrieve $x$ corresponding to $d^{k, *}$ if $(d^{k, *}, a^k)$ corresponds to a strongly-approximate CE. We compute prices $p$ for EPM via $p_j = \min_i a_i d_{ij}$~\citep{boodaghians2022polynomial}. 

Though the EPM is supposed to reach an exact CE when it finds a feasible (or over-allocated) disutility profile, we found in our experiments that an unignorable numerical error can be caused by the approximate solution of QPs.
% if~\cref{algo:EPM} terminates with $(d^k, a^{k-1})$, it can be shown that this corresponds an CE, where $p^*_j = \min_i d_{ij} a^{k-1}_i$ for all $j \in [m]$ and $x^*$ is any feasible allocation corresponding to $d^k$. 
% However, 
If the projection step in~\cref{algo:EPM} cannot be solved accurately, this can cause $a^{k-1}$, the normal vector of the generated hyperplane, to point in the wrong direction. 
This leads to an inaccurate price vector $p^*$. 
Moreover, let $d^k$ be the true iterate with a perfect-precision QP solution, and let $\tilde{d}^k$ be the ``solution'' found by Gurobi at a problematic iteration.
The Gurobi solution $\tilde{d}^k$ may end up being feasible even if $d^k$ is not, or vice versa. This may lead to incorrect termination decisions in either case (in which case our attempt to compute a corresponding CE allocation fails).
% this may put $d^k$ on the wrong side  off the feasible disutility space, which can cause wrong termination decisions whichever side of the feasible disutility space $d^k$ falls into. 
This is a critical issue, because we have no way to proceed once this happens.
Considering this issue, we say that the EPM fails if we cannot find a corresponding strongly approximate CE even though the EPM ostensibly terminated with one.
In our experiments, we also measure the fraction of instances successfully solved without this issue.

\paragraph{Experiment Results and Discussion}
Our results on the five different distributions for i.i.d. random disutilities are shown in \cref{fig:random chores}.
\begin{figure}
    \centering
    \makebox[0pt][r]{\makebox[30pt]{\raisebox{40pt}{\rotatebox[origin=c]{90}{uniform(0,1)}}}}%
    \includegraphics[width=0.28\linewidth]{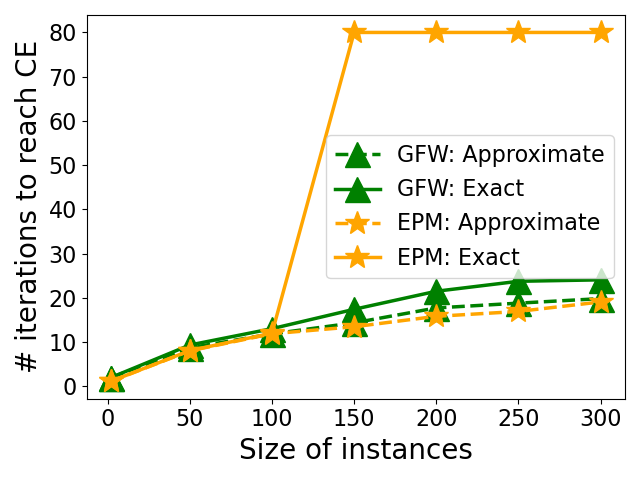}
    \includegraphics[width=0.28\linewidth]{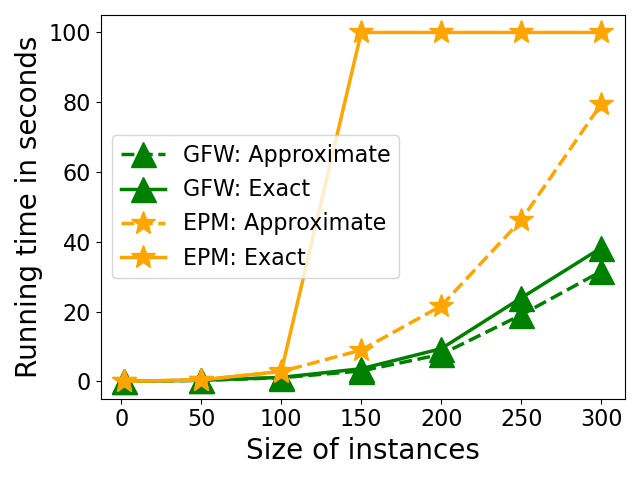}
    \includegraphics[width=0.28\linewidth]{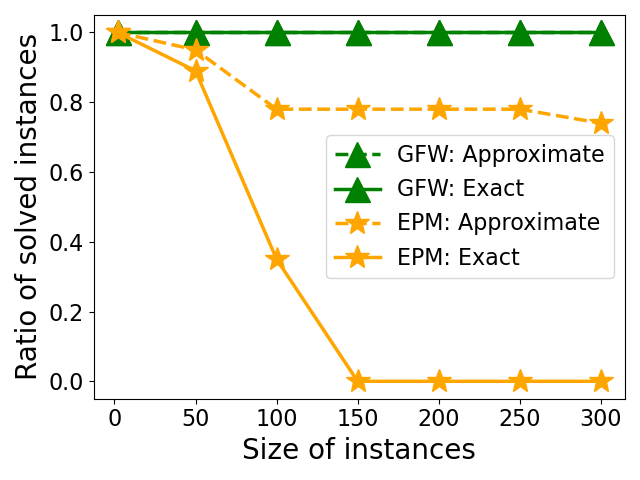}
    \makebox[0pt][r]{\makebox[30pt]{\raisebox{40pt}{\rotatebox[origin=c]{90}{lognormal}}}}%
    \includegraphics[width=0.28\linewidth]{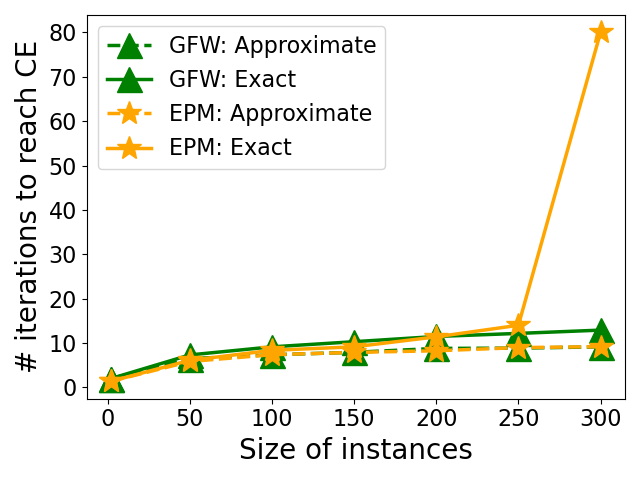}
    \includegraphics[width=0.28\linewidth]{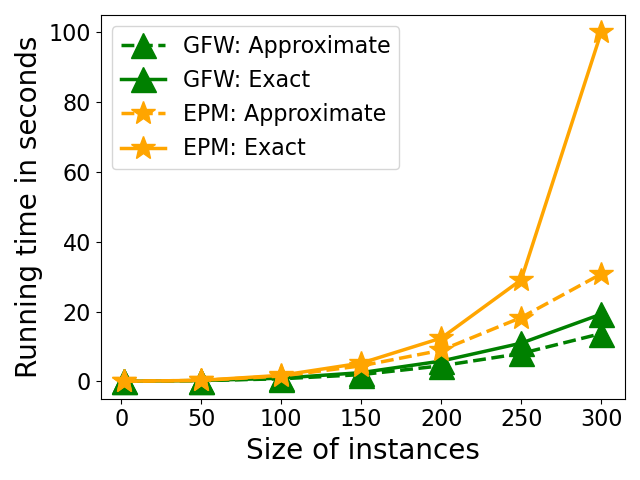}
    \includegraphics[width=0.28\linewidth]{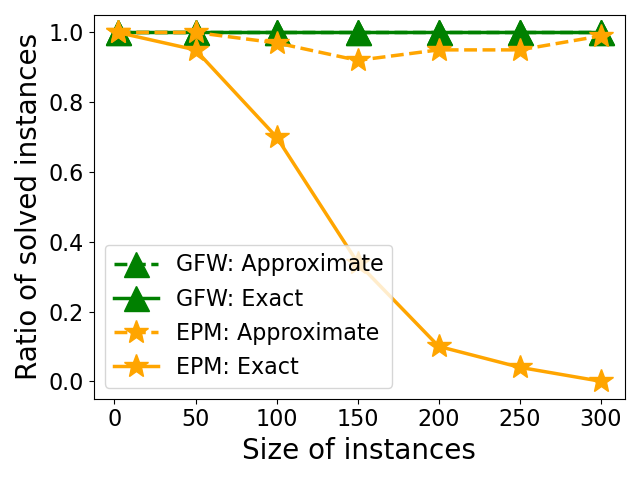}
    \makebox[0pt][r]{\makebox[30pt]{\raisebox{40pt}{\rotatebox[origin=c]{90}{truncated normal}}}}%
    \includegraphics[width=0.28\linewidth]{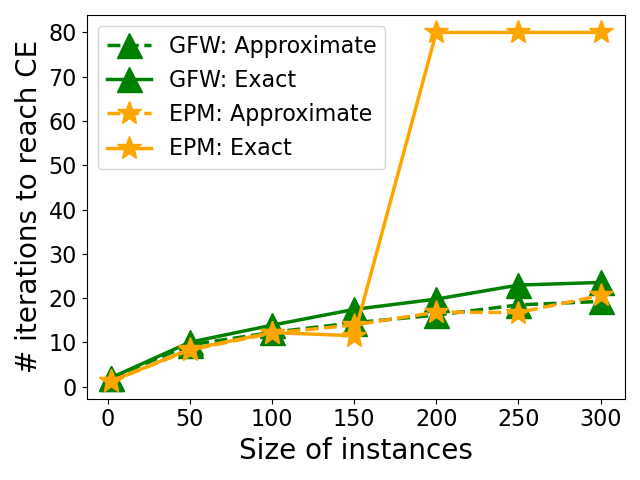}
    \includegraphics[width=0.28\linewidth]{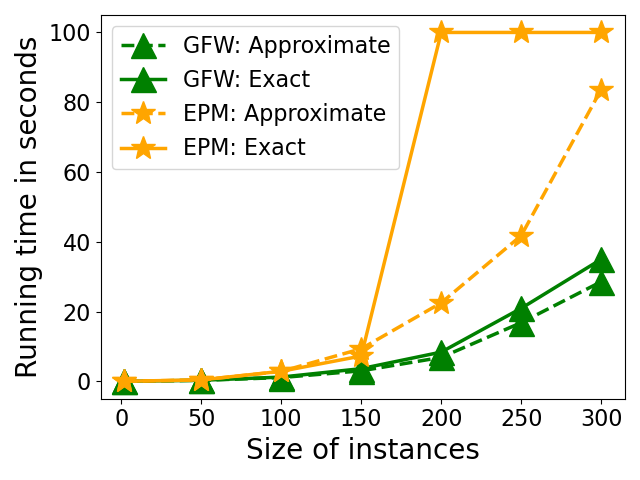}
    \includegraphics[width=0.28\linewidth]{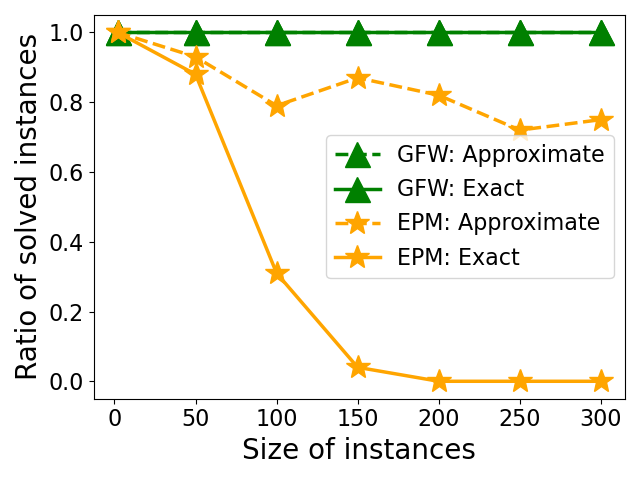}
    \makebox[0pt][r]{\makebox[30pt]{\raisebox{40pt}{\rotatebox[origin=c]{90}{exponential}}}}%
    \includegraphics[width=0.28\linewidth]{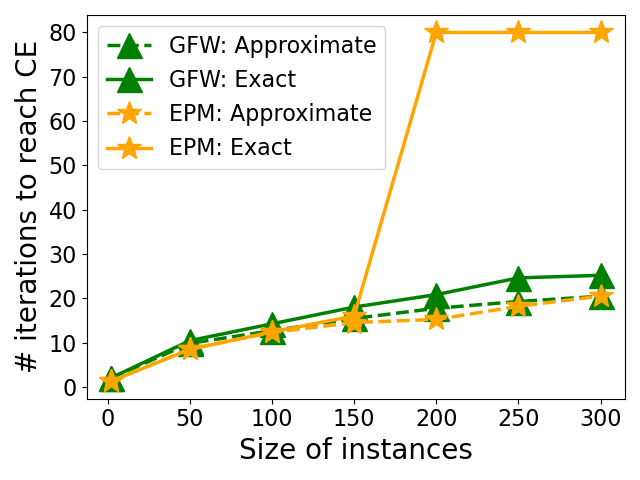}
    \includegraphics[width=0.28\linewidth]{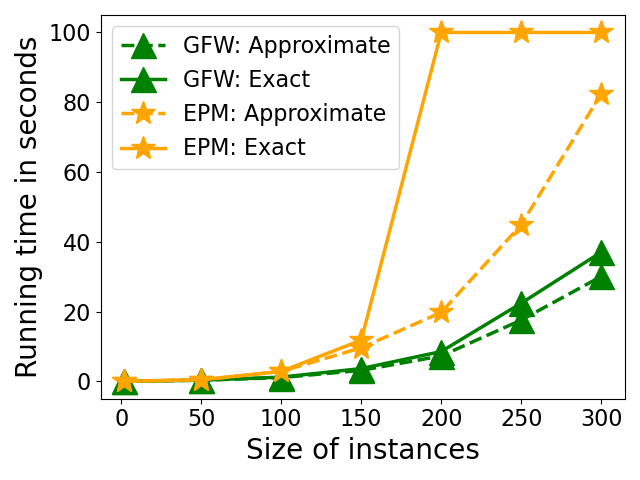}
    \includegraphics[width=0.28\linewidth]{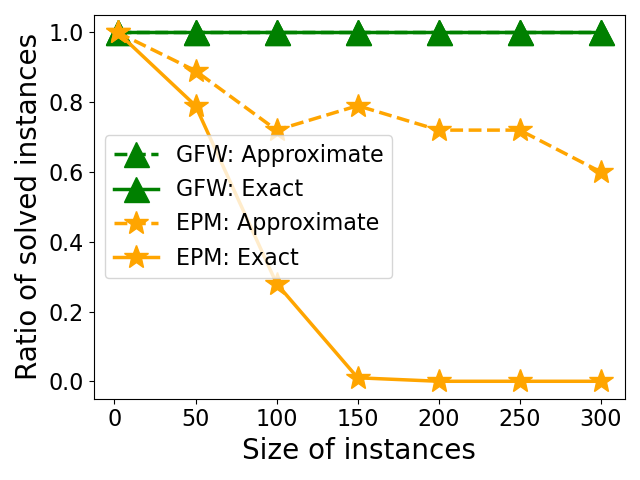}
    \makebox[0pt][r]{\makebox[30pt]{\raisebox{40pt}{\rotatebox[origin=c]{90}{randint(1,1000)}}}}%
    \includegraphics[width=0.28\linewidth]{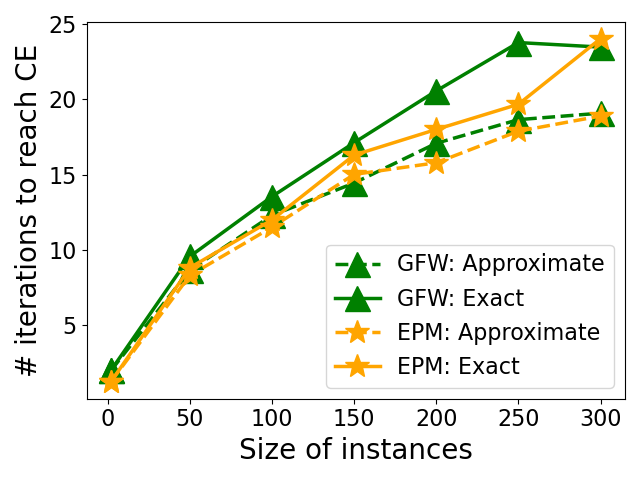}
    \includegraphics[width=0.28\linewidth]{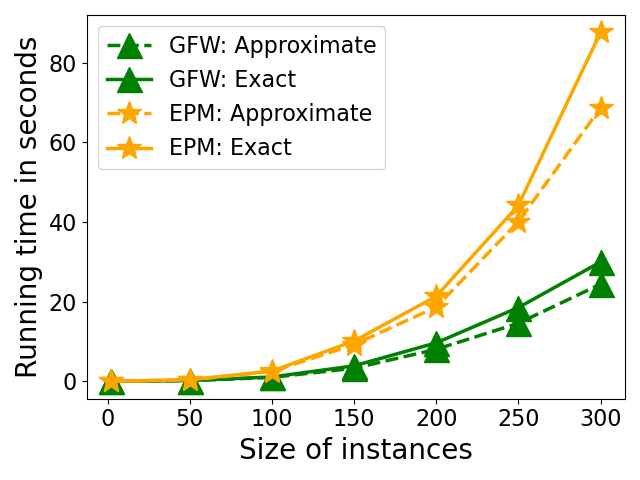}
    \includegraphics[width=0.28\linewidth]{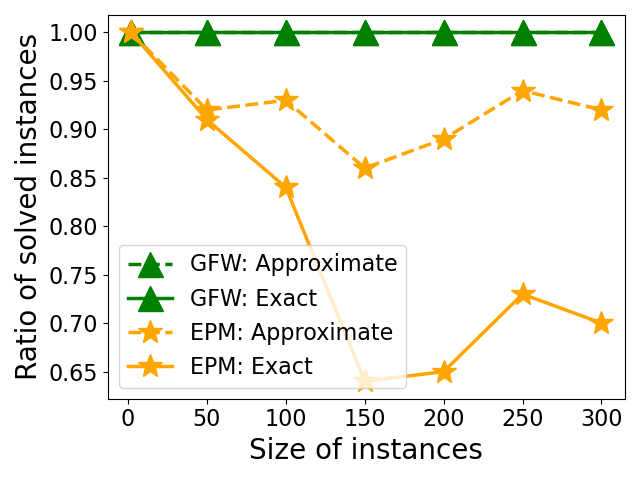}
    \caption{Numerical results on iid randomly generated disutility matrix instances. 
    Each row corresponds to a different disutility distribution (noted on the left).
    Left: Average number of iterations to solve a given instance size.
    Middle: Average wall clock running time to solve a given instance size (using Gurobi for both algorithms).  
    Right: Fraction of instances solved for each algorithm. 
    GFW solves every instance for every distribution. 
    }
    \label{fig:random chores}
\end{figure}
In the left column of plots, we show the average number of iterations taken for each algorithm before finding a solution.
In the middle column of plots, we show the mean running time.
For both the left and middle column of plots, we only average over instances that were solved by EPM (and if EPM reaches the largest number on the y axis it denotes that  no instances were solved at a given size).
The right column shows the fraction of instances solved for each instance size.

In terms of the number of iterations taken by each algorithm, we see that that they are very comparable: in all cases where both algorithms solve some instances, the number of iterations is quite low: always below 30 iterations. When it does work, the EPM seems to take slightly fewer iterations than GFW in some cases.
In terms of runtime, GFW is generally faster than EPM: in every case where EPM solves at least one instance, we have the GFW is faster, sometimes by a factor of 2-4.
By far the most important result in terms of practicality is the right column of plots. We see that GFW successfully solves every instance we tried, out of the 3500 instances generated in total.
In contrast, EPM fails on a significant number of instances, sometimes failing on every instance for larger instance sizes, when attempting to compute exact CE.

The results on our semi-synthetic AAMAS bidding instances are shown in \cref{fig:bidding-data-sampled}.
Again we see that the number of iterations to reach CE is quite comparable across GFW and EPM, on the instances where EPM succeeds. The running time results are also similar: both for exact and approximate CE, GFW is faster than EPM in both the original data setting, and the added-noise setting. Notably, both algorithms are very fast, terminating in less than ten seconds for the original data, and GFW still terminates in less than 15 seconds for the added-noise setting.
On the original data, both algorithms solve every instance. This is likely due to the fact that the original data is a very special type of instance: there are only 5 possible distinct disutility values, and every buyer has a lot of chores with value 1. Thus, the equilibrium is very simple, which is also evident in the fact that the mean number of iterations is 3.5 even for the largest instances.
On the added-noise data, we see again that EPM exhibits significant numerical issues, failing on a large fraction of instances, especially for exact CE.

Overall, we conclude that GFW is the first highly practical algorithm: it solves every instance in very little time and in few iterations, even for instances where the size of the disutility matrix is $300\times 300$. In comparison, EPM fails to solve many instances in our datasets.
% \paragraph{Notations for EPM.} We define disutility space$+$ $\mathcal{D}^+$ as the space of all possible disutility profile corresponding to any allocation or over-allocation, i.e., 
% \begin{equation}
%     \mathcal{D}^+ = \left\{ d \in \mathbb{R}^n_{\geq 0} \;\vert\; \exists\; x \in \mathbb{R}^{n \times m}_{\geq 0} \mbox{ such that } d_i \geq \sum\nolimits_j d_{ij} x_{ij} \;\forall\; i \in [n], \sum\nolimits_i x_{ij} = 1 \;\forall\; j \in [m] \right\}. 
% \end{equation}
As discussed before, we do caution that the success of EPM is highly dependent on the performance of the Gurobi QP solver, and in particular its accuracy. We found that the QP performance of Gurobi can vary substantially across different machines, and unpredictably so.
% the performance of Gurobi can be very different and unpredictable on different machines. 
% For example, Gurobi may choose different methods to solve LPs or QPs based on how much RAM is available, as well as other factors. 
Due to this so-called ``performance variability'', it is difficult to conclusively evaluate the wall-clock performance of the algorithms, as both algorithms rely on the performance of Gurobi on large-size instances. 
In particular, some preliminary experiments suggests that the wall-clock performance of EPM improves relative to the wall-clock performance of GFW if one adds more RAM.
However, we found that even in this case, EPM still has a significant failure rate, since Gurobi still outputs approximate QP solutions, and this still leads to numerical errors that break EPM.

\begin{figure}
    \centering
    \makebox[0pt][r]{\makebox[10pt]{\raisebox{40pt}{\rotatebox[origin=c]{90}{original}}}}%
    \includegraphics[width=0.85\linewidth]{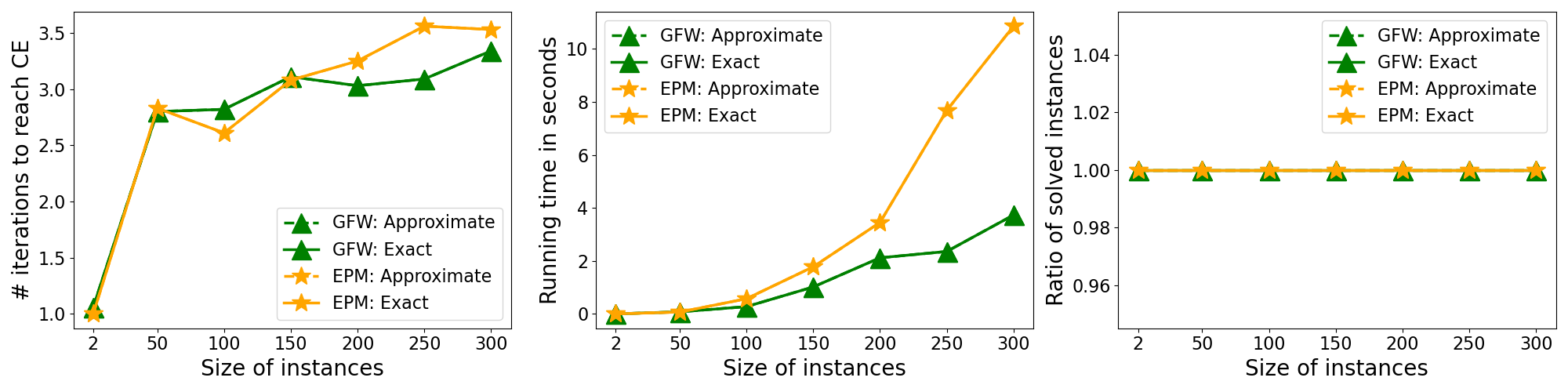}
    \makebox[0pt][r]{\makebox[10pt]{\raisebox{40pt}{\rotatebox[origin=c]{90}{with noise}}}}%
    \includegraphics[width=0.85\linewidth]{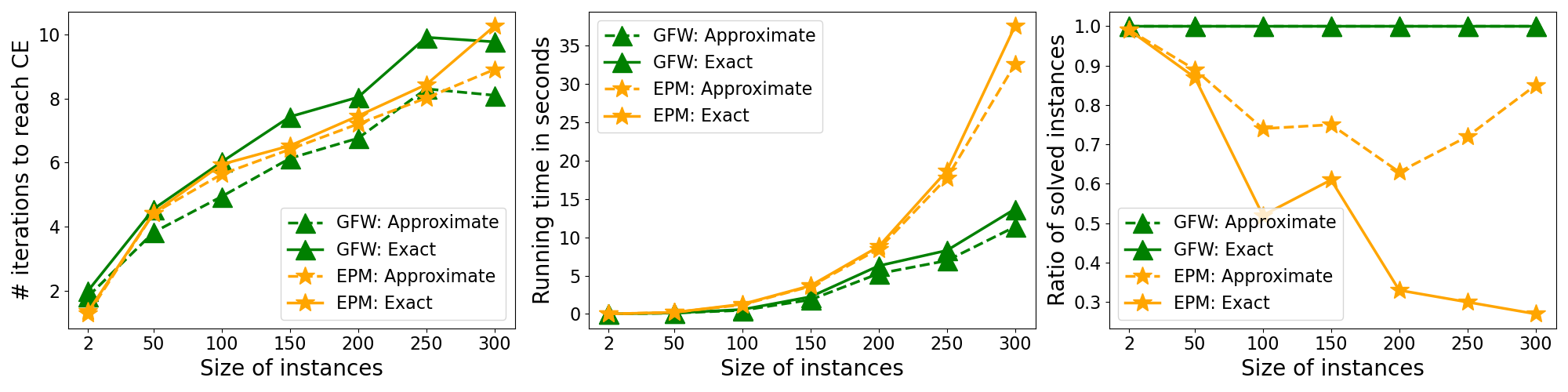}
    \caption{Numerical results on AAMAS PC member bidding data. 
    The top row uses the ``original'' valuations.
    The bottom row adds Gaussian noise.
    Left: Average number of iterations to solve a given instance size.
    Middle:  Average wall clock running time to solve a given instance size.
    Right: Fraction of instances solved.
    }
    \label{fig:bidding-data-sampled}
\end{figure}

% \begin{figure}
%     \centering
%     \includegraphics[width=1\linewidth]{bidding_data_with_noise_int.png}
%     \caption{Numerical results on sampled bidding data with normal noise.}
%     \label{fig:bidding-data-sampled-with-noise}
% \end{figure}

% \ck{Define what it means to "fail" in EPM: even small numerical error in Gurobi can lead to large error in the EPM.}

% \ck{AAMAS data plots:
% \begin{enumerate}
%     \item Wall-clock time for GFW and EPM
%     \item Number of iterations (if an algo fails, give highest number)
%     \item X axis: instance size (50, 100, 200, 300, 400, full dataset)
% \end{enumerate}
% }

%\section{Approximate CEEI and exact CEEI can be far apart}
%\ck{Move this to the end as part of a discussion on what's the challenge in going from approximate to exact CE?}

\section{Conclusion and Discussion}
\label{sec:conclusion-and-discussion}
We develop a new convex maximization program for computing CE in Fisher markets with chores, and showed that it satisfies some dual-like properties with the concave minimization program for computing CE.
We then derived a new redundant constraint which allows us to avoid directions that lead to infinitely positive objective. This yielded the first mathematical program for finding CE for chores such that it is possible to run iterative optimization methods over a convex feasible region.
We then introduced the GFW method for computing CE. From prior results on convex maximization over polyhedra, it follows that this method terminates in a finite number of iterations. We then showed that, in fact, the method finds an $\epsilon$-approximate CE in $O(\frac{1}{\epsilon^2})$ iterations, while requiring solving only a simple LP at every iteration. We also gave a more general convergence result that generalizes prior results for strongly convex functions to Bregman-strongly-convex functions.
Finally, we showed through extensive numerical experiments that our method is highly practical: it solved every instance that we tried in a minute or less and is both robust and simple to implement. In contrast, we showed that the prior state-of-the-art method, the EPM, often fails to solve instances altogether, due to numerical accuracy problems that break the algorithm. 

\paragraph{Challenge of moving from approximate CE to exact CE.} The obvious open problem is to settle the exact complexity of finding a CE in the chores market. A natural attempt would be to find a good approximate CE  and then argue how to move to a ``nearby'' exact CE. \emph{Unfortunately, such approaches can face the issue that there are instances where an approximate CE can be very far from an exact CE (See~\cref{example-CE} for an example)}. 

Therefore, one may need to search for additional properties while looking out for a good approximate CE/ CEEI. Note that our GFW method indeed finds an exact CE in the chores market in finite time. In fact, it finds an exact CE in very few iterations in our experiments. Thus, it would be interesting to investigate instances where the time taken by GFW to reach an exact CE is exponential. We believe that this can improve our understanding of ``hard instances'' (if any) for iterative methods for computing a CE.

Despite the above hurdle, we believe that finding a $\epsilon$-approximate CE in the chores market with a better time complexity dependence on $\frac{1}{\epsilon}$ is of theoretical interest. An ideal goal would be to have a poly-logarithmic dependence on $\frac{1}{\epsilon}$, but any polynomial improvement would be a stepping stone (currently, all methods have a dependence of $\mathcal{O}(\frac{1}{\epsilon^2})$). 

\vspace{-12pt}

\section*{Acknowledgement}
Christian Kroer was supported by the Office of Naval Research awards N00014-22-1-2530 and N00014-23-1-2374, and the National Science Foundation awards IIS-2147361 and IIS-2238960. Ruta Mehta was supported by the NSF grant CCF-2334461.

% \authornote{Supported by the Office of Naval Research awards N00014-22-1-2530 and
% N00014-23-1-2374, and the National Science Foundation awards IIS-2147361 and IIS-2238960. }
% \author{Ruta Mehta}
% \affiliation{\institution{University of Illinois at Urbana-Champaign}\city{Champaign}\state{IL}\country{USA}}
% \authornote{Supported by the NSF grant CCF-2334461. }

\vspace{-12pt}

\bibliographystyle{informs2014} % outcomment this and next line in Case 1
\bibliography{refs}

\appendix
\section{A Primal-Dual Analysis of the Convergence of the Frank Wolfe Method to an Approximate Equilibrium}
\label{sec:primal-dual-convergence}
Since $f$ does not depend on $p$, we ease notation in this by section by writing $f(\beta)$ for the \eqref{chores EG dual} objective function, which was previously defined as depending on both $p$ and $\beta$. 

Let $\{(p^t, \beta^t)\}_{t \geq 0}$ be the sequence of iterates generated by the greedy Frank Wolfe algorithm. Then, given $\beta^{t-1}_i$ for all $i\in [n]$, iteration $t$ of the Frank Wolfe algorithm attempts to find the vector $\beta^t$ such that 
\begin{equation}
\begin{aligned}
\min_{\beta} \quad & \sum_{i \in [n]} B_i \cdot \frac{\beta_i}{\beta^{t-1}_i}\\
\textrm{s.t.} \quad &  p_j - \beta_id_{ij} \leq 0  &\forall i,j\\
  &\quad  \sum_{j \in [m]} p_j =\sum_{i \in [n]}B_i = B    \\
  &\quad p_j \geq 0 &\forall j \\
  &\quad \beta_i \geq 0 &\forall i \\
\end{aligned}
\end{equation}

Let $x^t_{ij}$ be the optimal dual variable corresponding to the constraint $p_j - \beta_i d_{ij} \leq 0$, and $\delta^t$ be the optimal dual variable corresponding to the constraint $\sum_{j \in [m]}p_j = 1$ in the $t$-th iteration of the greedy Frank Wolfe algorithm. 

\begin{claim}
\label{distutility-tech}
    For all $i \in [n]$, we have $\sum_{j \in [m]}d_{ij}x^t_{ij} = B_i/ \beta^{t-1}_i$. 
\end{claim}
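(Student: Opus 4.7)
The plan is to read off the claimed identity as a KKT stationarity condition of the LP that greedy Frank--Wolfe solves at iteration $t$. Treating $x_{ij}$ as the multiplier for the optimal-disutility constraint $p_j - \beta_i d_{ij} \leq 0$, $\delta$ as the multiplier for the budget constraint $\sum_j p_j = B$, and $\mu_i \geq 0$ as the multiplier for the nonnegativity constraint $\beta_i \geq 0$, I would write down the Lagrangian and differentiate with respect to $\beta_i$. Since $\beta_i$ appears in the objective only through the linear term $\frac{B_i}{\beta^{t-1}_i}\beta_i$, stationarity immediately yields
\[
\frac{B_i}{\beta^{t-1}_i} = \sum_{j \in [m]} d_{ij}\, x^t_{ij} + \mu_i,
\]
together with complementary slackness $\mu_i \beta^t_i = 0$. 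If I can dispose of $\mu_i$, the claim follows.

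The only substantive step is therefore to argue $\beta^t_i > 0$, which by complementary slackness forces $\mu_i = 0$ and collapses the display above to the desired equality $\sum_j d_{ij} x^t_{ij} = B_i/\beta^{t-1}_i$. For this I would invoke \cref{claim:all-price-strictly-positive}, which has already been shown to hold at every iterate: every chore price $p^t_j$ is strictly positive. Feasibility of $(\beta^t, p^t)$ in the LP requires $p^t_j \leq \beta^t_i d_{ij}$ for all $i, j$, so combining with $p^t_j > 0$ gives $\beta^t_i d_{ij} > 0$, and in particular $\beta^t_i > 0$ (and, as a byproduct, that each buyer has at least one chore with positive disutility). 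Complementary slackness then eliminates $\mu_i$.

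The main obstacle is really only this strict positivity of $\beta^t_i$, which is pleasingly handled by a claim already established in the main text; the remainder is a direct unpacking of LP optimality conditions for the Frank--Wolfe subproblem. I note that this identity is morally the same as condition~(v) used in the proof of~\cref{lem:eps-approxi}; the appendix presentation simply isolates it for the primal--dual analysis that follows.
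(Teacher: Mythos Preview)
Your proposal is correct and follows essentially the same route as the paper: both read off the identity from the KKT stationarity condition in $\beta_i$ for the iteration-$t$ LP, using $\beta^t_i > 0$ to kill the nonnegativity multiplier via complementary slackness. The only cosmetic difference is in justifying $\beta^t_i > 0$: the paper argues it self-containedly (if $\beta^t_\ell = 0$ then, since at optimum $\beta^t_\ell = \max_j p^t_j/d_{\ell j}$, all $p^t_j = 0$, contradicting $\sum_j p^t_j = B$), whereas you cite \cref{claim:all-price-strictly-positive}; be slightly careful with that citation, since that claim's own proof in the main text already uses $\beta^t_i > 0$ as a ``fact,'' so to keep the argument non-circular you really only need the weaker observation that \emph{some} $p^t_j > 0$ (immediate from $\sum_j p^t_j = B$), which is all your feasibility step requires.
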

\begin{proof}
    Under stationarity conditions of the optimum solutions, we have for all $i \in [n]$, $B_i/ \beta^{t-1}_i = \sum_{j \in [m]}d_{ij} x^t_{ij}$, whenever $\beta^t_i > 0$. Thus, it suffices to show that $\beta^t_i > 0$ for all $i \in [n]$. Assume otherwise, and $\beta^t_{\ell} = 0$. Since $\beta^t_{\ell} = \max_{j \in [m]} p^t_{j}/d_{\ell j}$, this implies that $p^t_j = 0 $ for all $j \in [m]$, implying $\sum_{j \in [m]}p^t_j = 0$, which is a contradiction. 
\end{proof}

We now make another technical observation. 
%By definition, $\beta^{t-1}$ was the optimum primal solution in the $(t-1)^{\mathit{th}}$ iteration of the algorithm. %Let $p'_j$, and $y_{ij}$, $\delta'$ denote the corresponding optimal prices and dual solutions for the previous iteration. Further, let $\beta'$ denote the optimal primal solution of the previous to previous iteration of the algorithm.

\begin{claim}
\label{disutilities-outflows}
    For all $i \in [n]$, we have  $\sum_{j \in [m]} d_{ij}x^{t-1}_{ij} = 1/\beta^{t-1}_i \cdot (\sum_{j \in [m]} p^{t-1}_j x^{t-1}_{ij}) $. Analogously, we have  $\sum_{j \in [m]} d_{ij}x^t_{ij} = 1/\beta^t_i \cdot (\sum_{j \in [m]} p^t_j x^t_{ij}) $.
\end{claim}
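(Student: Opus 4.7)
The proof should follow almost immediately from LP complementary slackness applied to the greedy Frank--Wolfe LP solved at iterations $t-1$ and $t$. The plan is to argue each equation pointwise in $j$, then sum.

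First, I would invoke complementary slackness for the LP at iteration $t-1$: since $x^{t-1}_{ij}$ is the optimal dual variable associated with the primal constraint $p_j - \beta_i d_{ij} \leq 0$, and $(p^{t-1},\beta^{t-1})$ is the optimal primal solution, we have
\begin{equation*}
    x^{t-1}_{ij}\bigl(p^{t-1}_j - \beta^{t-1}_i d_{ij}\bigr) = 0 \quad \text{for all } i\in[n],\, j\in[m].
\end{equation*}
Second, because $\beta^{t-1}_i > 0$ (which has already been established in the proof of \cref{distutility-tech} via the observation that $\sum_j p^{t-1}_j = \sum_i B_i > 0$), we may rearrange this into
\begin{equation*}
    d_{ij} x^{t-1}_{ij} = \tfrac{1}{\beta^{t-1}_i}\, p^{t-1}_j x^{t-1}_{ij}
\end{equation*}
which holds term-by-term: either $x^{t-1}_{ij}=0$ and both sides vanish, or $x^{t-1}_{ij}>0$ in which case $p^{t-1}_j = \beta^{t-1}_i d_{ij}$.

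Third, summing over $j \in [m]$ and pulling the factor $1/\beta^{t-1}_i$ out gives
\begin{equation*}
    \sum_{j\in[m]} d_{ij} x^{t-1}_{ij} = \tfrac{1}{\beta^{t-1}_i}\sum_{j\in[m]} p^{t-1}_j x^{t-1}_{ij},
\end{equation*}
which is the first claim. The second claim follows by repeating exactly the same argument verbatim for the LP at iteration $t$, again using that $\beta^{t}_i > 0$.

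Since this is purely complementary slackness plus positivity of $\beta$, there is essentially no obstacle; the only point to handle carefully is confirming positivity of $\beta^{t-1}_i$ (and $\beta^t_i$) so that division is legal, but this is a direct consequence of the feasibility constraint $\sum_j p_j = \sum_i B_i > 0$ together with $\beta_i \geq \max_j p_j/d_{ij}$, which was argued in the preceding claim.
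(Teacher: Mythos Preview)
Your proposal is correct and follows essentially the same approach as the paper: invoke complementary slackness $x^{t-1}_{ij}(p^{t-1}_j - \beta^{t-1}_i d_{ij}) = 0$, then sum over $j$ and divide by $\beta^{t-1}_i$. Your version is slightly more explicit in justifying the division step via $\beta^{t-1}_i > 0$, but otherwise the arguments coincide.
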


\begin{proof}
    Under the complementary slackness condition, we have 
    $x^{t-1}_{ij}(p^{t-1}_j - \beta^{t-1}_id_{ij} ) = 0$ for all $i,j$. Therefore, for each $i \in [n]$, we have 
    \begin{align*}
        &\sum_{j \in [m]}x^{t-1}_{ij}(p^{t-1}_j - \beta^{t-1}_id_{ij}) = 0\\
        &\implies \sum_{j \in [m]}p^{t-1}_j x^{t-1}_{ij} = \beta^{t-1}_i \sum_{j \in [m]} d_{ij}x^{t-1}_{ij} \\
        &\implies \sum_{j \in [m]} d_{ij}x^{t-1}_{ij} = 1/\beta^{t-1}_i \cdot (\sum_{j \in [m]} p^{t-1}_j x^{t-1}_{ij}) 
    \end{align*}
    The proof for the second statement of the theorem follows analogously.
\end{proof}
From~\cref{disutilities-outflows}, it follows that $\prod_{i \in [n]} (\sum_{j \in [m]}d_{ij}x^{t-1}_{ij})^{B_i} = (\prod_{i \in [n]} (1/\beta^{t-1}_i) ^{B_i}) \cdot (\prod_{i \in [n]} (\sum_{j \in [m]}$  $p^{t-1}_j x^{t-1}_{ij})^{B_i})$ and~\cref{distutility-tech} it follows that $\prod_{i \in [n]} (\sum_{j \in [m]}d_{ij}x^{t}_{ij})^{B_i} = \prod_{i \in [n]} (B_i/\beta^{t-1}_{i})^{B_i}$ . Therefore, if we comparing the weighted product of disutilities in the current iteration and the previous iteration, we have

\begin{align*}
  &\frac{\prod_{i \in [n]} (\sum_{j \in [m]} d_{ij}x^t_{ij})^{B_i}}{\prod_{i \in [n]} (\sum_{j \in [m]} d_{ij}x^{t-1}_{ij})^{B_i}} \\ =& \frac{\prod_{i \in [n]} (\frac{B_i}{\beta^{t-1}_i})^{B_i}}{(\prod_{i \in [n]} (\frac{1}{\beta^{t-1}_i})^{B_i} ) \cdot (\prod_{i \in [n]} (\sum_{j \in [m]}p^{t-1}_j x^{t-1}_{ij})^{B_i})} \\
  =& \frac{\prod_{i\in [n]} B_i^{B_i}}{\prod_{i \in [n]}(\sum_{j \in [m]}p^{t-1}_j x^{t-1}_{ij})^{B_i}}
\end{align*}

We now make an observation on $\sum_{i \in [n]}x^t_{ij}$-- interpreting the dual variables $x^t_{ij}$ as the fraction of chore $j$ allocated to agent $i$, one can interpret $\sum_{i \in [n]}x^t_{ij}$ as the total consumption of chore $j$ at the end of the $t$-th iteration. %\BRC{Push  this sentence when we introduce dual variables?}

\begin{claim}
    \label{Allocation-claim}
    For all $i \in [n]$, and for all $t \geq 0$, we have $\sum_{i \in [n]} x^t_{ij} =  \frac{1}{B} \cdot \sum_{i \in [n]} B_i \cdot \frac{\beta^{t}_i}{\beta^{t-1}_i}$. %Similarly,  for all $i \in [n]$, we have $\sum_{i \in [n]} y_{ij} = \frac{1}{n} \cdot \sum_{i \in [n]} \frac{\beta_i}{\beta'_i}$
\end{claim}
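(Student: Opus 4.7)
}
The free index on the right-hand side is $j$, so we are really showing that $\sum_{i\in[n]} x^t_{ij}$ takes the same value, namely $\frac{1}{B}\sum_i B_i \beta^t_i/\beta^{t-1}_i$, for every chore $j\in[m]$. The plan is to extract the value of $\sum_i x^t_{ij}$ from the stationarity condition of the $t$-th Frank-Wolfe LP with respect to the primal variable $p_j$, and then pin down the common value by multiplying through by $p^t_j$ and summing over $j$, using the redundant budget constraint $\sum_j p^t_j = B$ together with Claims~\ref{distutility-tech} and~\ref{disutilities-outflows}.

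First I would write down the Lagrangian of the $t$-th LP, introducing nonnegative duals $x^t_{ij}$ for the constraints $p_j - \beta_i d_{ij}\le 0$, a nonnegative dual $\mu^t_j$ for the sign constraint $p_j\ge 0$ (and similarly for the $\beta_i\ge 0$ constraints), and a free-sign dual $\delta^t$ for the equality $\sum_j p_j = B$. Since the objective $\sum_i B_i \beta_i/\beta^{t-1}_i$ does not depend on $p_j$, stationarity with respect to $p_j$ collapses to
\begin{equation*}
    \sum_{i\in[n]} x^t_{ij} + \delta^t - \mu^t_j = 0, \qquad \forall\, j\in[m].
\end{equation*}
By Claim~\ref{claim:all-price-strictly-positive}, $p^t_j>0$ for every $j$, so complementary slackness forces $\mu^t_j=0$ and hence $\sum_i x^t_{ij} = -\delta^t$ for all $j$. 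This already gives the key structural fact: the total dual mass on each chore is \emph{the same} constant, $-\delta^t$, independent of $j$.

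It remains to identify this constant as $\frac{1}{B}\sum_i B_i\beta^t_i/\beta^{t-1}_i$. For this I would take $\sum_i x^t_{ij} = -\delta^t$, multiply both sides by $p^t_j$, and sum over $j$. The left-hand side becomes $\sum_{i,j} p^t_j x^t_{ij}$, which, using complementary slackness $x^t_{ij}(p^t_j - \beta^t_i d_{ij})=0$ and then Claim~\ref{distutility-tech}, rewrites as
\begin{equation*}
    \sum_{i,j} p^t_j x^t_{ij} \;=\; \sum_i \beta^t_i \sum_j d_{ij} x^t_{ij} \;=\; \sum_i \beta^t_i\cdot \frac{B_i}{\beta^{t-1}_i} \;=\; \sum_i B_i\,\frac{\beta^t_i}{\beta^{t-1}_i}.
\end{equation*}
The right-hand side is $-\delta^t\cdot\sum_j p^t_j = -\delta^t B$ by the redundant constraint. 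Equating the two and solving for $-\delta^t$ yields the desired formula, completing the proof.

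The only subtlety is making sure the stationarity derivation is clean (in particular that $\mu^t_j=0$ really is justified for every $j$, which is exactly what Claim~\ref{claim:all-price-strictly-positive} delivers); the rest of the argument is just bookkeeping with complementary slackness and the previously established identities.
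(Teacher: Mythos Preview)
Your proposal is correct and follows essentially the same route as the paper: extract $\sum_i x^t_{ij}=-\delta^t$ from the $p_j$-stationarity condition, then identify $-\delta^t$ by computing $\sum_{i,j} p^t_j x^t_{ij}$ two ways using Claims~\ref{distutility-tech} and~\ref{disutilities-outflows} together with $\sum_j p^t_j=B$. The only cosmetic difference is that the paper reproves price positivity inline rather than invoking Claim~\ref{claim:all-price-strictly-positive} as you do.
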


\begin{proof}
    Under stationarity conditions, we have $\sum_{i \in [n]} x^t_{ij} = -\delta^t$ for all $j$ such that $p^t_j > 0$. Furthermore, from~\cref{disutilities-outflows}, it follows that $\sum_{j \in [m]} p^t_j x^t_{ij} = \beta^t_i \cdot (\sum_{j \in [m]} d_{ij} x^t_{ij})$. Observe that,
    \begin{align*}
        \sum_{i \in [n]} B_i \cdot \frac{\beta^t_i}{\beta^{t-1}_i} &=\sum_{i \in [n]} \beta^t_i \cdot \frac{B_i}{\beta^{t-1}_i}\\ 
        &= \sum_{i \in [n]} \beta^t_i \cdot (\sum_{j \in [m]} d_{ij}x^t_{ij})  &\text{(by~\cref{distutility-tech})}\\
        &= \sum_{i \in [n]} \sum_{j \in [m]} p^t_jx^t_{ij} &\text{(by~\cref{disutilities-outflows})}\\
        &= \sum_{j \in [m]} p^t_j \cdot \sum_{i \in [n]} x^t_{ij}\\
        &= \sum_{j \mid p^t_j >0} p^t_j \cdot \sum_{i \in [n]} x^t_{ij}\\
        &=\sum_{j \mid p^t_j >0} p^t_j \cdot (-\delta^t) \\ 
        &= (-\delta^t) \cdot \sum_{j \in [m]} p^t_j\\
        &=- (\sum_{i \in [n]} B_i)\delta^t = -B\delta^t
    \end{align*}
    Thus, we have $-\delta^t = \frac{1}{B} \cdot \sum_{i \in [n]} B_i \cdot \frac{\beta^t_i}{\beta^{t-1}_i}$. Therefore, for all $j \in [m]$, such that $p^t_j > 0$, we have $\sum_{i \in [n]} x^t_{ij} = - \delta^t = \frac{1}{B} \cdot \sum_{i \in [n]} B_i \cdot \frac{\beta^t_i}{\beta^{t-1}_i}$. It only suffices to show that $p^t_j > 0$ for all $j \in [m]$.
    
    We prove this by contradiction. Assume there exists a $j \in [m]$, such that $p^t_j = 0$. Then note that $p^t_j/d_{i j} < \beta^t_{i}$ for all $i \in [n]$ (follows from the fact that every $\beta^t_i$ is strictly positive). Note that there exists a $\gamma \ll  \min_{\ell \in [n]} d_{\ell j} \beta^t_{\ell}$,  such that if we increase $p^t_j$ to $\gamma$, and decrease all other prices by a multiplicative factor of $(B-\gamma)$, we still have a feasible set of prices\footnote{All the prices are still non-negative and they still sum up to $B$.} and $\max_{j \in [m]} p^t_j/d_{ij}$ strictly decreases for all $i \in [n]$. Let $\beta'_i = \max_{j \in [m]} p^t_j/d_{ij}$ (post the multiplicative scaling). Note that $\beta'_i < \beta^t_i$ for all $i \in [n]$ and thus $\sum_{i \in [n]} B_i \cdot \beta'_i/\beta^{t-1}_i <\sum_{i\in [n]} B_i \cdot \beta^t_i/\beta^{t-1}_i$ which contradicts the optimality of $\beta^t$.
\end{proof}

\begin{claim}
\label{progress}
If $( p^{t-1}, x^{t-1} )$ is not an $\epsilon$-approximate equilibrium, then $\log \Bigg(\frac{\prod_{i\in [n]} B_i^{B_i}}{\prod_{i \in [n]}(\sum_{j \in [m]}p^{t-1}_j x^{t-1}_{ij})^{B_i}} \Bigg) \geq \min_i B_i \cdot \epsilon^2/2.3$. 
\end{claim}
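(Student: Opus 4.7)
The plan is to write $r_i := (\sum_j p^{t-1}_j x^{t-1}_{ij})/B_i$. By~\cref{distutility-tech,disutilities-outflows} we have $\sum_j p^{t-1}_j x^{t-1}_{ij} = B_i \beta^{t-1}_i/\beta^{t-2}_i$, so $r_i = \beta^{t-1}_i/\beta^{t-2}_i$, and the quantity to lower-bound becomes simply $-\sum_i B_i \log r_i$. The proof then reduces to two ingredients: (i) a global budget-type inequality $\sum_i B_i r_i \leq B := \sum_i B_i$, and (ii) the existence of an agent $i^\ast$ whose $r_{i^\ast}$ is bounded away from $1$ whenever $(p^{t-1}, x^{t-1})$ fails to be an $\epsilon$-approximate CE.

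For (i), I would invoke optimality of $(\beta^{t-1}, p^{t-1})$ in the LP solved at iteration $t-1$: since the previous iterate $(\beta^{t-2}, p^{t-2})$ is itself feasible for that LP, the minimum of $\sum_i B_i \beta_i/\beta^{t-2}_i$ cannot exceed its value at $\beta_i = \beta^{t-2}_i$, which equals $B$. For (ii), KKT complementary slackness for the LP gives $p^{t-1}_j = d_{ij} \beta^{t-1}_i$ whenever $x^{t-1}_{ij} > 0$, so condition (2) of~\cref{def:appx-CEEI} always holds exactly along the iterate sequence; moreover, by~\cref{Allocation-claim} we have $\sum_i x^{t-1}_{ij} = \frac{1}{B}\sum_i B_i r_i$ for every $j$, which is a $B_i$-weighted convex combination of the $r_i$. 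Hence a failure of approximate CE must come from either (1) (giving some $r_i$ outside $[1-\epsilon,\, 1/(1-\epsilon)]$ directly) or (3) (forcing the convex combination outside the same interval, hence forcing some $r_{i^\ast}$ outside it as well).

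For the quantitative step, I decompose $-\log r = \phi(r) - (r - 1)$ where $\phi(r) := -\log r + r - 1 \geq 0$ is the one-dimensional Itakura--Saito divergence $D_{IS}(r\,\Vert\,1)$. Summing against the weights $B_i$ and using (i),
\begin{equation*}
-\sum_i B_i \log r_i \;=\; \sum_i B_i\, \phi(r_i) + \Big(B - \sum_i B_i r_i\Big) \;\geq\; \sum_i B_i\, \phi(r_i) \;\geq\; B_{i^\ast}\,\phi(r_{i^\ast}) \;\geq\; \min_i B_i \cdot \phi(r_{i^\ast}).
\end{equation*}
Since $\phi$ is convex with minimum $0$ at $r = 1$, decreasing on $(0,1)$ and increasing on $(1,\infty)$, the infimum of $\phi$ over the complement of $[1-\epsilon,\, 1/(1-\epsilon)]$ is attained at one of the two endpoints. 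A short Taylor-series calculation shows $\phi(1-\epsilon) = -\log(1-\epsilon) - \epsilon \geq \epsilon^2/2$ and $\phi(1/(1-\epsilon)) = \log(1-\epsilon) + \epsilon/(1-\epsilon) \geq \epsilon^2/2$, by comparing to the first nontrivial term of the expansion. This yields $\phi(r_{i^\ast}) \geq \epsilon^2/2 \geq \epsilon^2/2.3$, completing the proof.

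The main obstacle I anticipate is the case analysis in step (ii): one must cleanly separate which of the approximate-CE conditions fails and convert a possible failure of the global allocation condition (3) into a statement about an individual agent's earning ratio. The convex-combination identity from~\cref{Allocation-claim} is exactly what makes this translation immediate, after which the divergence bound is a routine calculation and all $r_i > 1$ terms are automatically absorbed by the budget inequality from (i).
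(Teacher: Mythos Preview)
Your proof is correct and takes a genuinely different route from the paper's. The paper splits into two cases according to which approximate-CE condition fails. In the under-allocation case it bounds $\prod_i(\sum_j p^{t-1}_j x^{t-1}_{ij})^{B_i}$ via a weighted AM--GM argument (the product over \emph{all} agents, using $\sum_i s_i < B(1-\epsilon)$), then invokes $\log(1+x)\ge x^2/2.3$. In the earning-violation case it isolates the offending agent, applies weighted AM--GM to the \emph{remaining} agents, and uses the inequality $(1-x)\le e^{-x}$ together with the fact $\log(e^x/(1+x))\ge x^2/2.3$ on $[-1,1]$.

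Your approach unifies the two cases: you convert any failure of either condition into the existence of an agent $i^\ast$ with $r_{i^\ast}\notin[1-\epsilon,\,1/(1-\epsilon)]$ (using \cref{Allocation-claim} to pass from an allocation violation to an earning-ratio violation via the convex-combination identity), and then handle the quantitative bound in one stroke with the decomposition $-\log r=\phi(r)-(r-1)$ and the budget inequality $\sum_i B_i r_i\le B$. This is cleaner, avoids the separate AM--GM calculations, and connects directly to the Itakura--Saito divergence already featured in \cref{sec:general}; indeed the paper's Fact~\ref{technical} is exactly $\phi(1+x)\ge x^2/2.3$, so both proofs ultimately rest on the same scalar inequality, but your packaging delivers the slightly sharper constant $\epsilon^2/2$ and a single chain of inequalities rather than a case split.
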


\begin{proof}
Note that $x^{t-1}_{ij} > 0$ only if $d_{ij}/p_j$ is minimum for all $j \in [m]$ for agent $i$. Thus, if $( p^{t-1}, x^{t-1} )$ is not an $\epsilon$-approximate equilibrium, then there exists a $j \in [m]$ such that $\sum_{i \in [n]} x^{t-1}_{ij} < 1 - \epsilon$, or there exists an $i \in [n]$, such that $|\sum_{j \in [m]} p^{t-1}_jx^{t-1}_{ij} - B_i| >  B_i\epsilon$.

   \begin{itemize}
       \item \textbf{Case $\sum_{i \in [n]} x^{t-1}_{ij} < 1 - \epsilon$:} From~\cref{Allocation-claim}, it follows that $\sum_{i \in [n]} x^{t-1}_{ij} = \frac{1}{B} \cdot \sum_{i \in [n]} B_i\frac{\beta^{t-1}_i}{\beta^{t-2}_i} < 1 - \epsilon$, implying that $\sum_{i \in [n]} B_i \frac{\beta^{t-1}_i}{\beta^{t-2}_i} < B(1 - \epsilon)$. Further, observe that
       \begin{align*}
        \sum_{j \in [n]} p^{t-1}_j \cdot x^{t-1}_{ij} &= \beta^{t-1}_i \cdot \sum_{j \in [m]} d_{ij}x^{t-1}_{ij} & \text{(by~\cref{disutilities-outflows})}\\
        &= \beta^{t-1}_i \cdot \frac{B_i}{\beta^{t-2}_i} & \text{(by~\cref{distutility-tech})}
    \end{align*}
    Thus, we have $\sum_{i \in [n]} \sum_{j \in [m]} p^{t-1}_j x^{t-1}_{ij} = \sum_{i \in [n]} B_i \frac{\beta^{t-1}_i}{\beta^{t-2}_i} < B(1-\epsilon)$. This implies that $\prod_{i \in [n]} (\sum_{j \in [m]} $ $ p^{t-1}_j x^{t-1}_{ij})^{B_i} \leq \prod_{i \in [n]} (B_i(1-\epsilon))^{B_i}$, further implying that 
    \begin{align*}
        \frac{\prod_{i \in [n]} B_i^{B_i}}{\prod_{i \in [n]} (\sum_{j \in [m]} p^{t-1}_j x^{t-1}_{ij})^{B_i}} &\geq \frac{1}{(1-\epsilon)^B}\\
                                                                   &\geq (1+\epsilon)^B\\
                                                                   &\geq 1 + {B\epsilon }
    \end{align*}
    Therefore, $\log \Bigg( \frac{\prod_{i \in [n]} B_i^{B_i}}{\prod_{i \in [n]} (\sum_{j \in [m]} p^{t-1}_j x^{t-1}_{ij})^{B_i}} \Bigg) \geq \log(1+B\epsilon) \geq \min_i B_i \epsilon^2/2.3$ as $\log(1+x) \geq x^2/2.3$ for all $x \in [0,1]$.

      \item \textbf{Case $|\sum_{j \in [m]} p^{t-1}_jx^{t-1}_{ij} - B_i| >  B_i\epsilon$:}  %First note that if $\sum_{i \in [n]} \frac{\beta_i}{\beta'_i} < n(1-\epsilon^2)$, we can use the same analysis from the above case, for our claim. So we assume otherwise. 
      Let $i$ be an agent such that $\sum_{j \in [m]} p^{t-1}_{j}x^{t-1}_{i j} = B_i(1 + \epsilon')$, where $\epsilon' \in \{-\theta, \theta \}$ for some $\theta > \epsilon$. Observe that $\sum_{\ell \in [n] \setminus \{i\}} (\sum_{j \in [m]} p^{t-1}_jx^{t-1}_{ \ell j}) < B- B_i-B_i\epsilon'$, as $\sum_{i \in [n]}(\sum_{j \in [m]} p^{t-1}_jx^{t-1}_{ij}) = \sum_{i \in [n]} B_i\frac{\beta^{t-1}_i}{\beta^{t-2}_i} \leq B$, and $\sum_{j \in [m]} p^{t-1}_{j}x^{t-1}_{i j} = B_i(1 + \epsilon')$. This implies that $\prod_{\ell \in [n] \setminus \{i\}} (\sum_{j \in [m]} p^{t-1}_j x^{t-1}_{ \ell j})^{B_i} \leq \prod_{\ell \in [n] \setminus \{i\}} (B_{\ell} \cdot (1-\frac{B_i\epsilon'}{B-B_i}))^{B_{\ell}}$. Before, we proceed, we mention the following useful fact,

      \begin{fact}
          \label{technical}
           For all real $x$, we have, $\log(e^x/(1+x)) \geq x^2/2.3$ for all $x \in [-1,1]$.
      \end{fact}

      \begin{proof}
      We have,
      \begin{align*}
          & \hspace{20pt} \frac{\prod_{i \in [n]} B_i^{B_i}}{\prod_{i \in [n]} (\sum_{j \in [m]} p^{t-1}_j x^{t-1}_{ij})^{B_i}} \\ 
          &> \frac{\prod_{i \in [n]} B_i^{B_i}}{(B_i(1+\epsilon'))^{B_i} \prod_{\ell \in [n] \setminus \{i\}} (B_{\ell} \cdot (1 - \frac{B_i\epsilon'}{B-B_i}))^{B_{\ell}}}\\
          &=\frac{1}{(1 + \epsilon')^{B_i} (1-\frac{B_i\epsilon'}{B-B_i})^{(B-B_i)}}\\
          &\geq \frac{1}{(1 + \epsilon')^{B_i} e^{-\frac{B_i\epsilon'}{B-B_i} \cdot (B-B_i)}}\\  &= \frac{1}{(1 + \epsilon')^{B_i}e^{-B_i\epsilon'}}\\
          &=\bigg( \frac{1}{(1+\epsilon')e^{-\epsilon'}} \bigg)^{B_i} 
      \end{align*}
      \end{proof}
   \end{itemize}
\end{proof}
Now, observe that \
\begin{align*}
     \log \Bigg( \frac{\prod_{i \in [n]} B_i^{B_i}}{\prod_{i \in [n]} (\sum_{j \in [m]} p^{t-1}_j x^{t-1}_{ij})^{B_i}} \Bigg) &> \log \Bigg( \bigg(\frac{1}{(1+\epsilon')e^{-\epsilon'}}\bigg)^{B_i} \Bigg)\\
     &=B_i \log \Bigg(\frac{e^{\epsilon'}}{1+\epsilon'} \Bigg)\\
     &\geq B_i \frac{\epsilon'^2}{2.3} &\text{(from Fact~\ref{technical})}\\
     &= \min_i B_i \frac{\theta^2}{2.3} = \min_i B_i \frac{\epsilon^2}{2.3}. 
\end{align*}

We are now ready to prove our main result.

%\RM{The statement of the following theorem should convey the final result. Something like, Given a chores market, the GFW method converges to an $\epsilon$-approximate CE in $O(...)$ many steps. }
\begin{theorem}
\label{Theorem-convergence}
    Let $T \geq \frac{2.3(f(\beta^*) - f(\beta^0))}{\epsilon^2 \min_i B_i }$. Then there exists a $t \leq T$, such that $(p^t,x^t)$ is a $\epsilon$-CE 
\end{theorem}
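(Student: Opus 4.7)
The plan is to leverage Claim~\ref{progress}, which already bounds the per-iteration improvement of the log-Nash-welfare whenever the current iterate fails to be an $\epsilon$-CE, and to convert that into a per-iteration improvement of the dual objective $f(\beta) = -\sum_i B_i \log \beta_i$. Then a simple telescoping argument across $T$ iterations yields the statement.

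For the translation, I would use Claim~\ref{distutility-tech}, which says $\sum_j d_{ij} x^s_{ij} = B_i/\beta^{s-1}_i$ at every iteration $s \geq 1$. Taking logs and summing gives
$$\sum_i B_i \log\bigl(\textstyle\sum_j d_{ij} x^s_{ij}\bigr) = \sum_i B_i \log B_i + f(\beta^{s-1}).$$
Combining this with the identity
$$\frac{\prod_i (\sum_j d_{ij} x^t_{ij})^{B_i}}{\prod_i (\sum_j d_{ij} x^{t-1}_{ij})^{B_i}} = \frac{\prod_i B_i^{B_i}}{\prod_i (\sum_j p^{t-1}_j x^{t-1}_{ij})^{B_i}}$$
derived just above Claim~\ref{Allocation-claim}, and taking logs, Claim~\ref{progress} turns into the clean statement: \emph{if $(p^s, x^s)$ is not an $\epsilon$-CE, then $f(\beta^s) - f(\beta^{s-1}) \geq \gamma$}, where $\gamma := \min_i B_i \cdot \epsilon^2/2.3$.

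Next I would argue by contradiction. Suppose that $(p^t, x^t)$ fails to be an $\epsilon$-CE for every $t \in \{1, \dots, T\}$. Then the per-iteration bound telescopes to
$$f(\beta^T) - f(\beta^0) = \sum_{t=1}^T \bigl[f(\beta^t) - f(\beta^{t-1})\bigr] \geq T\gamma \geq f(\beta^*) - f(\beta^0),$$
using the hypothesized lower bound on $T$. Hence $f(\beta^T) \geq f(\beta^*)$, which forces $f(\beta^T) = f(\beta^*)$, making $(p^T, \beta^T)$ a global maximizer of \eqref{chores dual redundant}, and thus a KKT point. By Theorem~\ref{thm:dual EG redundant kkt correspondence}, the associated $(p^T, x^T)$ is then an exact CE, hence in particular an $\epsilon$-CE, contradicting the assumption. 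Therefore some $t \leq T$ must already yield an $\epsilon$-CE.

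The main obstacle is the bookkeeping of primal-dual indices: Claim~\ref{progress} is naturally stated in terms of the $x$-dual variables at consecutive iterations, while the telescoping needs a statement about the $f(\beta^\cdot)$ sequence. These two views are tied together only through Claims~\ref{distutility-tech} and~\ref{disutilities-outflows}, and care must be taken that the shift between ``iterate index'' and ``$\beta$ index'' is consistent. Once this translation is locked in, the argument mirrors the standard convex-maximization analysis from Theorem~\ref{thm:stepsize-convergence-under-generalized-strong-convexity}, but with the generic Bregman-divergence bound replaced by the sharper, problem-specific improvement afforded by Claim~\ref{progress}.
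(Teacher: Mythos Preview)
Your overall strategy---translate Claim~\ref{progress} into a per-step gain $f(\beta^s) - f(\beta^{s-1}) \geq \gamma$ via Claim~\ref{distutility-tech} and the displayed product identity, then telescope and argue by contradiction---is exactly the paper's approach, and the index bookkeeping you flag is handled correctly.

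Where your endgame diverges from the paper's, it acquires a gap. From $f(\beta^T) = f(\beta^*)$ you conclude that $(p^T,\beta^T)$ is a KKT point and invoke Theorem~\ref{thm:dual EG redundant kkt correspondence} to say that ``the associated $(p^T,x^T)$'' is an exact CE. But the allocation furnished by that theorem is a Lagrange multiplier for \eqref{chores dual redundant} at $(p^T,\beta^T)$, whereas the $x^T$ in the statement you are proving is the optimal dual of the LP solved at iteration $T$, whose objective is linearized around $\beta^{T-1}$, not $\beta^T$. By Claims~\ref{distutility-tech} and~\ref{disutilities-outflows}, that specific $x^T$ satisfies $\sum_j p^T_j x^T_{ij} = B_i\,\beta^T_i/\beta^{T-1}_i$; and since you have just argued $f(\beta^T) - f(\beta^{T-1}) \geq \gamma > 0$, necessarily $\beta^T \neq \beta^{T-1}$, so $(p^T,x^T)$ is \emph{not} an exact CE and you have no control over its $\epsilon$-quality. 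The contradiction does not close as written. (One can patch this by running one more GFW step and using strict convexity of $f$ to force $\beta^{T+1}=\beta^T$, but that only yields the conclusion at $T+1$ rather than $T$.)

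The paper closes the contradiction more directly: the lower bound of Claim~\ref{progress} is in fact strict, because both cases in its proof start from strict violations ($\sum_i x_{ij} < 1-\epsilon$ or $\lvert\sum_j p_j x_{ij} - B_i\rvert > B_i\epsilon$) and the strictness propagates through. Telescoping then gives $f(\beta^T) - f(\beta^0) > T\gamma$, hence $T < 2.3\,(f(\beta^*)-f(\beta^0))/(\epsilon^2\min_i B_i)$, which contradicts the hypothesis on $T$ without any KKT detour.
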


\begin{proof}
    Assume otherwise, i.e., for all $t \in [T]$,  $(p^t,x^t)$ is not a $\epsilon$-CE. Recall that we have,
    \begin{align*}
        \frac{\prod_{i \in [n]} (\sum_{j \in [m]} d_{ij}x^t_{ij})^{B_i}}{\prod_{i \in [n]} (\sum_{j \in [m]} d_{ij}x^{t-1}_{ij})^{B_i}} = \frac{\prod_{i \in [n]} B_i^{B_i} }{\prod_{i \in [n]}(\sum_{j \in [m]}p^{t-1}_j x^{t-1}_{ij})^{B_i}}
    \end{align*}
   By~\cref{distutility-tech}, we can replace  $\sum_{j \in [m]} d_{ij}x^t_{ij}$ with $1/\beta^{t-1}_i$, implying,
   \begin{align}
       \label{eqn:prod-beta-comp}
        \frac{\prod_{i \in [n]} (1/\beta^{t-1}_i)^{B_i}}{\prod_{i \in [n]} (1/\beta^{t-2}_i)^{B_i}} = \frac{\prod_{i \in [n]} B_i^{B_i} }{\prod_{i \in [n]}(\sum_{j \in [m]}p^{t-1}_j x^{t-1}_{ij})^{B_i}}
    \end{align}
   Since $(p^{t-1},x^{t-1})$ is not a $\epsilon$-CEEI, we have $ \log \bigg( \frac{\prod_{i \in [n]} B_i^{B_i} }{\prod_{i \in [n]}(\sum_{j \in [m]}p^{t-1}_j x^{t-1}_{ij})^{B_i}} \bigg) \geq \min_i B_i \cdot \epsilon^2/2.3$ by~\cref{progress}. Taking logs on both sides of~\cref{eqn:prod-beta-comp}, we have
   \begin{align}
      \label{eq:beta-teslescoping}
       f(\beta^{t-1}) - f(\beta^{t-2})  \geq  \min_i B_i \cdot \epsilon^2/2.3
   \end{align}
   Summing up~\cref{eq:beta-teslescoping} from $t=2$ to $T+1$, we have $f(\beta^T) - f(\beta^0) > \min_i B_i \cdot T\epsilon^2/2.3$. Since $f(\beta^*) \geq f(\beta^T)$, this implies that $f(\beta^*) - f(\beta^0) > \min_i B_i \cdot T\epsilon^2/2.3$, implying that $T < 2.3(f(\beta^*) - f(\beta^0))/(\epsilon^2 \cdot \min_i B_i)$, which is a contradiction. 
\end{proof}

\begin{remark}
     This analysis also gives us additional insights: for instance, any iteration where the solution of the GFW is not a good approximation of a CE, will see a large improvement in the objective function, implying that GFW iterations with bad approximate solutions are small. 
\end{remark}

\label{sec:polytime-conv-approxi}

\section{Approximate CE and Exact CE can be far Apart}
\label{example-CE}
 We show that approximate CE and Exact CE can be far apart. Consider the following chores market: 2 agents $a_1$ and $a_2$, and 2 chores $c_1$ and $c_2$. Set $d_{11} = 1$, and $d_{12} = M$ for $M \gg 1$. Set $d_{21} = 1-\epsilon$, and $d_{22} = 1 + \epsilon$ for $\epsilon \ll 1$. It is easy to verify that the only CEEI here is  
\begin{itemize}
    \item $p_1 = 2/(M+1)$ and $p_2 = 2M/(M+1)$,
    \item $x_{11} = 1$, $x_{21} = (M-1)/(2M)$, and 
    \item $x_{21} = 0$, and $x_{22} = (M+1)/(2M)$.
\end{itemize}

 Note that $\beta_1 = 2/(M+1)$, and $\beta_2 = 2M/((M+1)(1+\epsilon))$. However, the following is an $\epsilon$-approximate CEEI, with $\beta_1=\beta_2 = 1$, which is far from the only exact CEEI!

 \begin{itemize}
    \item $p_1 = 1-\epsilon$ and $p_2 = 1+\epsilon$,
    \item $x_{11} = 1-\epsilon$, $x_{21} = 0$, and 
    \item $x_{21} = 0$, and $x_{22} = 1$.
\end{itemize}

\section{Additional numerical experiments}
\label{sec:appendix experiments}

\begin{figure}[H]
    \centering
    \makebox[0pt][r]{\makebox[30pt]{\raisebox{40pt}{\rotatebox[origin=c]{90}{uniform(0,1)}}}}%
    \includegraphics[width=0.28\linewidth]{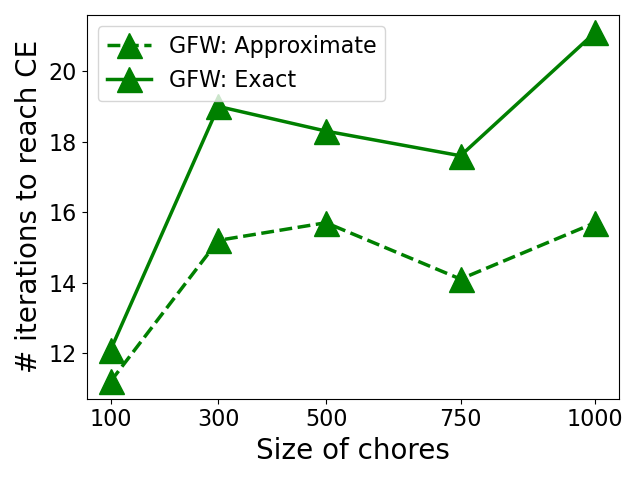}
    \includegraphics[width=0.28\linewidth]{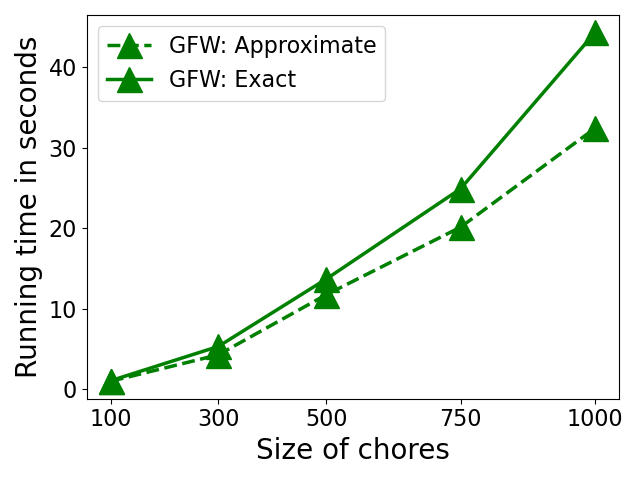}
    \includegraphics[width=0.28\linewidth]{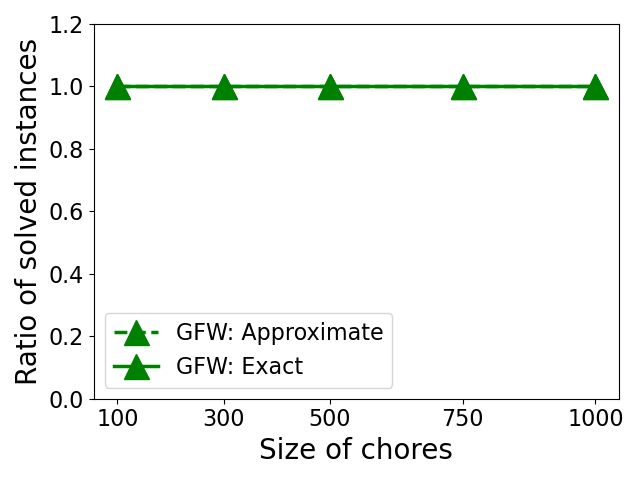}
    \makebox[0pt][r]{\makebox[30pt]{\raisebox{40pt}{\rotatebox[origin=c]{90}{lognormal}}}}%
    \includegraphics[width=0.28\linewidth]{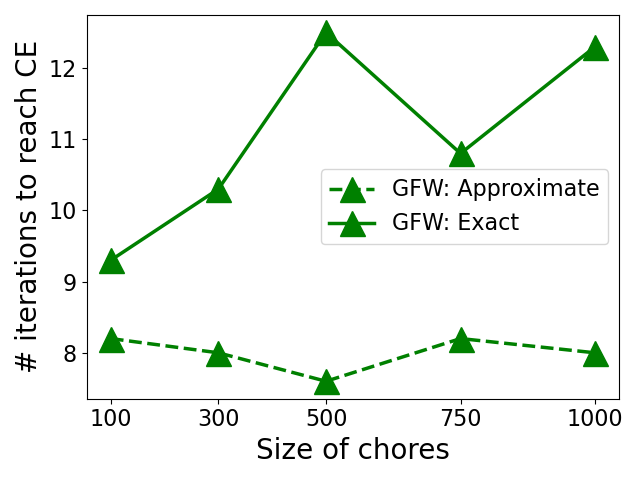}
    \includegraphics[width=0.28\linewidth]{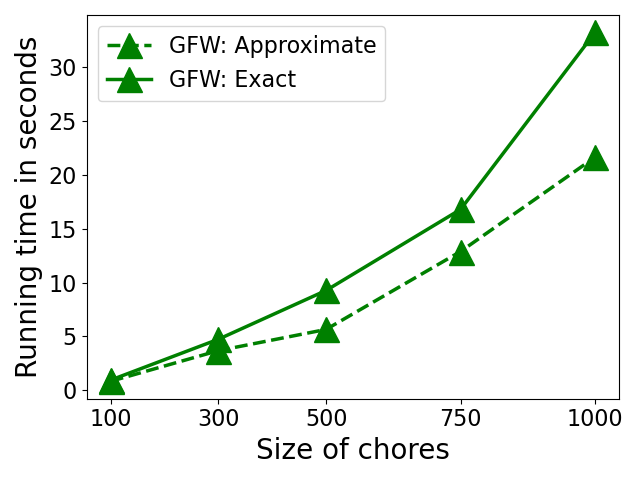}
    \includegraphics[width=0.28\linewidth]{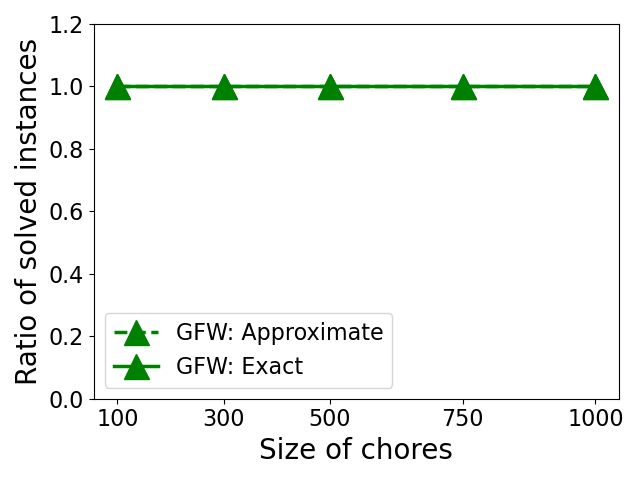}
    \makebox[0pt][r]{\makebox[30pt]{\raisebox{40pt}{\rotatebox[origin=c]{90}{truncated normal}}}}%
    \includegraphics[width=0.28\linewidth]{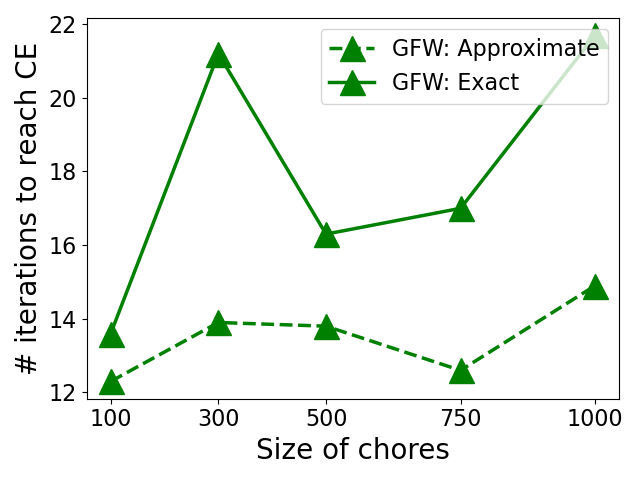}
    \includegraphics[width=0.28\linewidth]{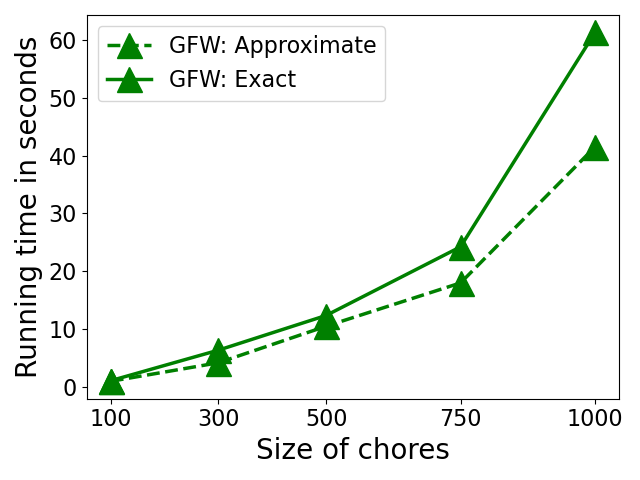}
    \includegraphics[width=0.28\linewidth]{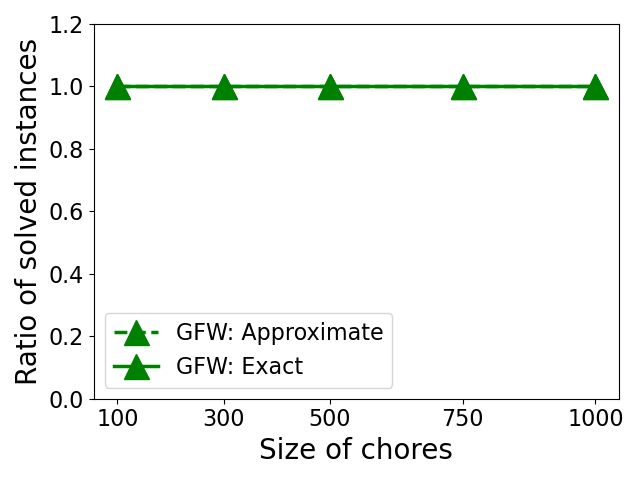}
    \makebox[0pt][r]{\makebox[30pt]{\raisebox{40pt}{\rotatebox[origin=c]{90}{exponential}}}}%
    \includegraphics[width=0.28\linewidth]{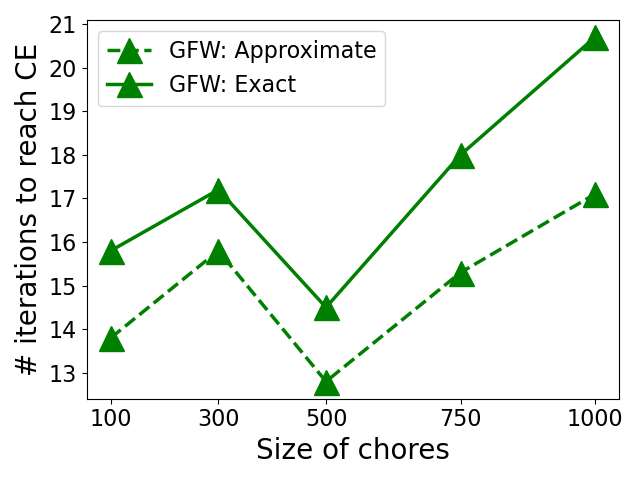}
    \includegraphics[width=0.28\linewidth]{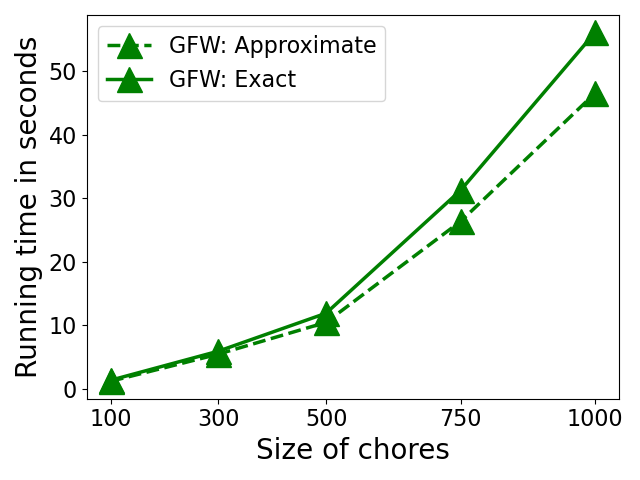}
    \includegraphics[width=0.28\linewidth]{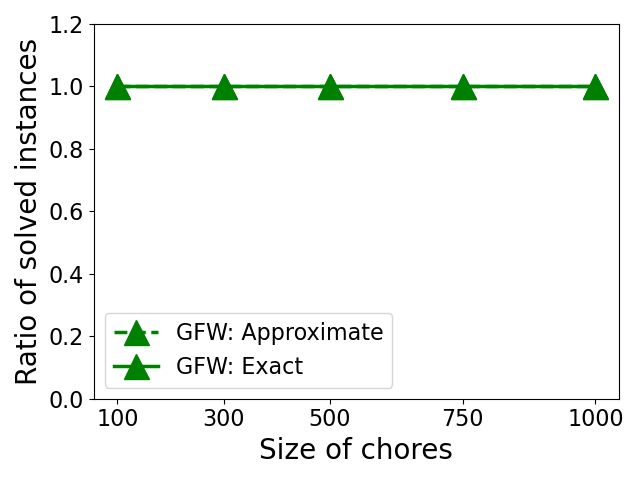}
    \makebox[0pt][r]{\makebox[30pt]{\raisebox{40pt}{\rotatebox[origin=c]{90}{randint(1,1000)}}}}%
    \includegraphics[width=0.28\linewidth]{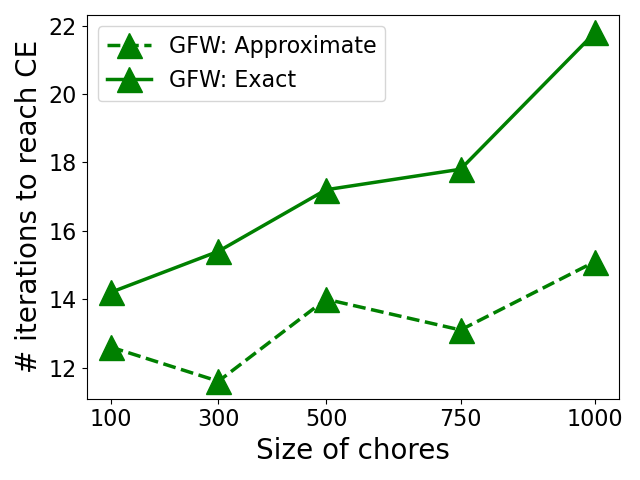}
    \includegraphics[width=0.28\linewidth]{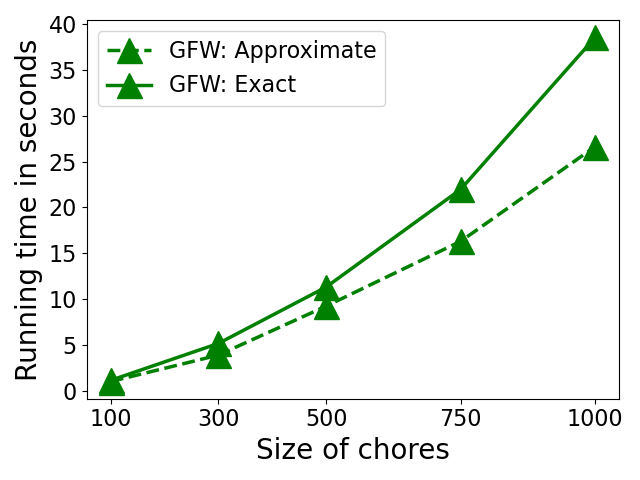}
    \includegraphics[width=0.28\linewidth]{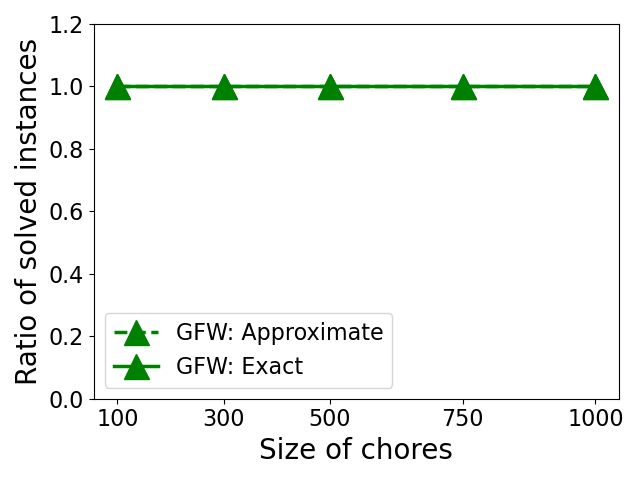}
    \caption{
    Numerical results for 100 buyers and up to 1000 chores.
    % on iid randomly generated disutility matrix instances. 
    % Each row corresponds to a different disutility distribution (noted on the left).
    % Left: Average number of iterations to solve a given instance size.
    % Middle: Average wall clock running time to solve a given instance size (using Gurobi for both algorithms).  
    % Right: Fraction of instances solved for each algorithm. 
    GFW solves every instance for every distribution. 
    }
    \label{fig:random chores large size instances}
\end{figure}

\begin{figure}[H]
    \centering
    \makebox[0pt][r]{\makebox[30pt]{\raisebox{40pt}{\rotatebox[origin=c]{90}{uniform(0,1)}}}}%
    \includegraphics[width=0.28\linewidth]{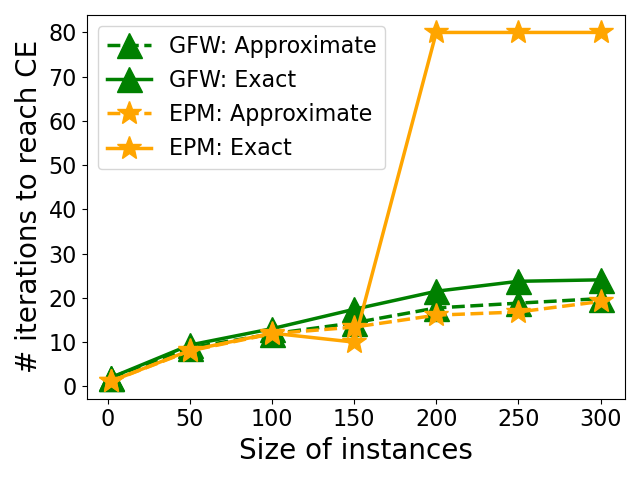}
    \includegraphics[width=0.28\linewidth]{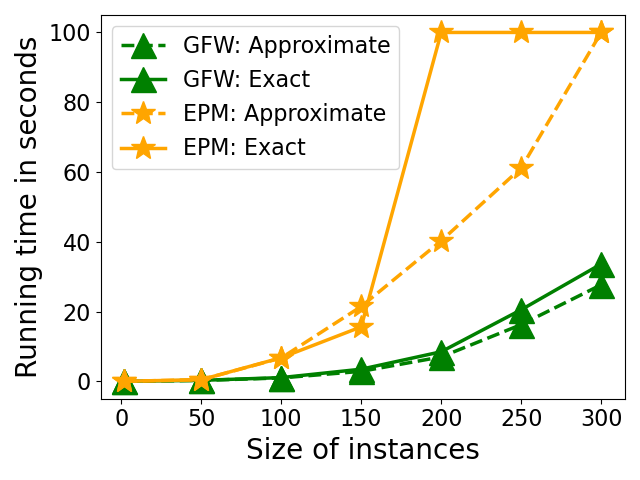}
    \includegraphics[width=0.28\linewidth]{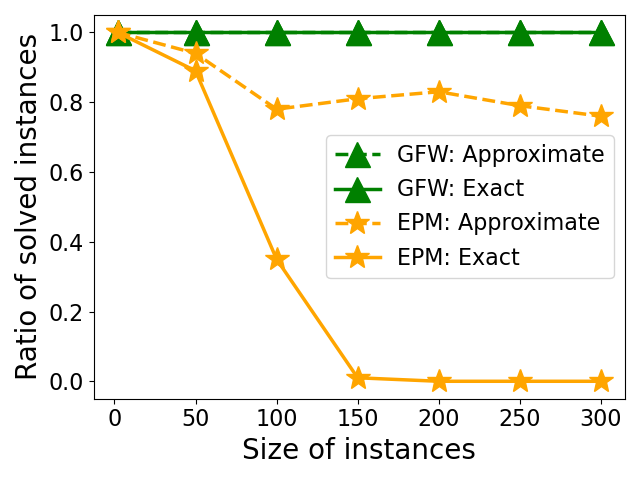}
    \makebox[0pt][r]{\makebox[30pt]{\raisebox{40pt}{\rotatebox[origin=c]{90}{lognormal}}}}%
    \includegraphics[width=0.28\linewidth]{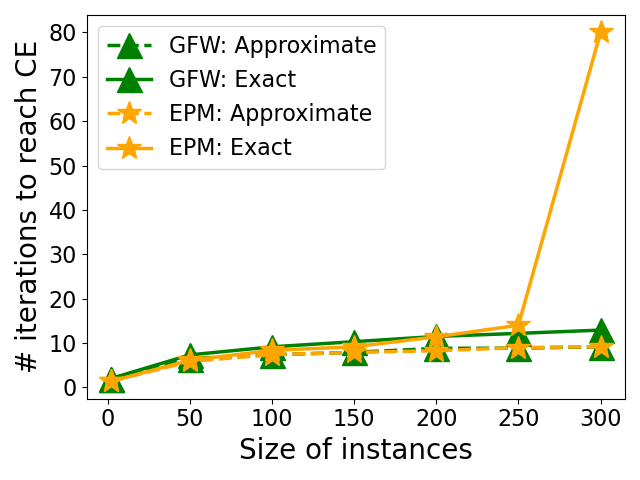}
    \includegraphics[width=0.28\linewidth]{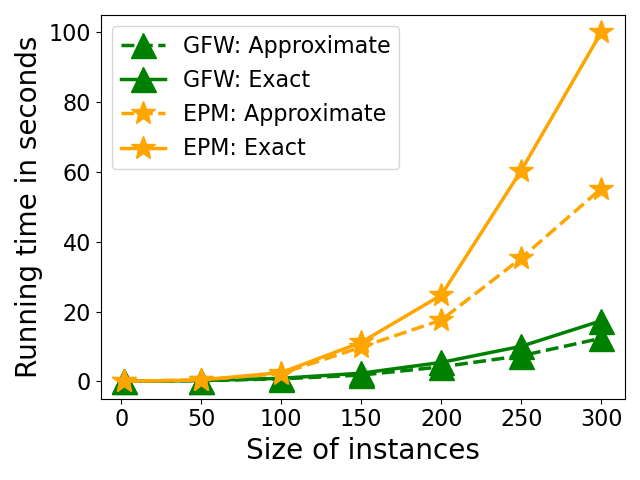}
    \includegraphics[width=0.28\linewidth]{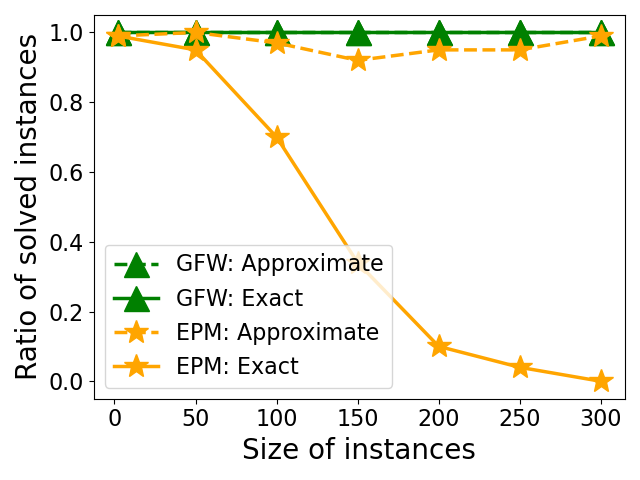}
    \makebox[0pt][r]{\makebox[30pt]{\raisebox{40pt}{\rotatebox[origin=c]{90}{truncated normal}}}}%
    \includegraphics[width=0.28\linewidth]{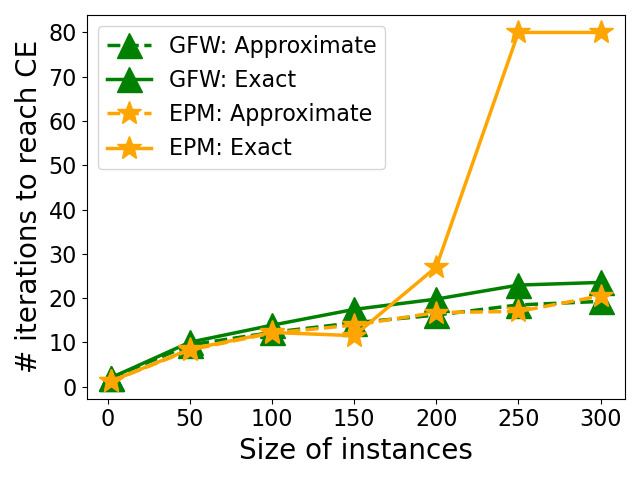}
    \includegraphics[width=0.28\linewidth]{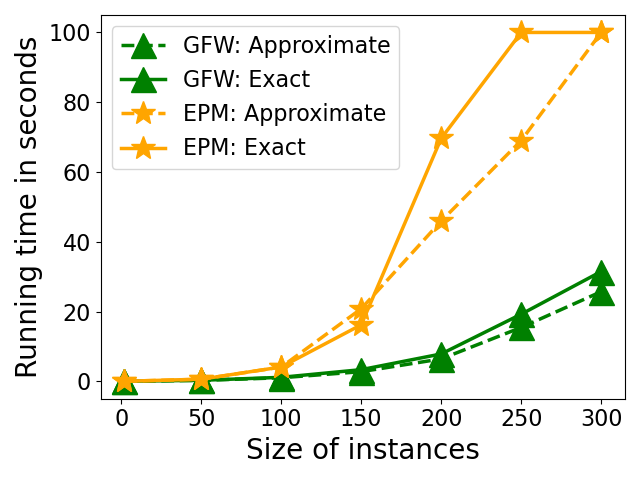}
    \includegraphics[width=0.28\linewidth]{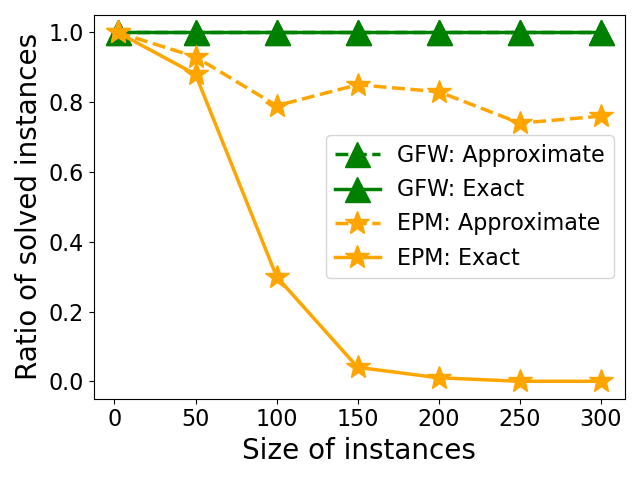}
    \makebox[0pt][r]{\makebox[30pt]{\raisebox{40pt}{\rotatebox[origin=c]{90}{exponential}}}}%
    \includegraphics[width=0.28\linewidth]{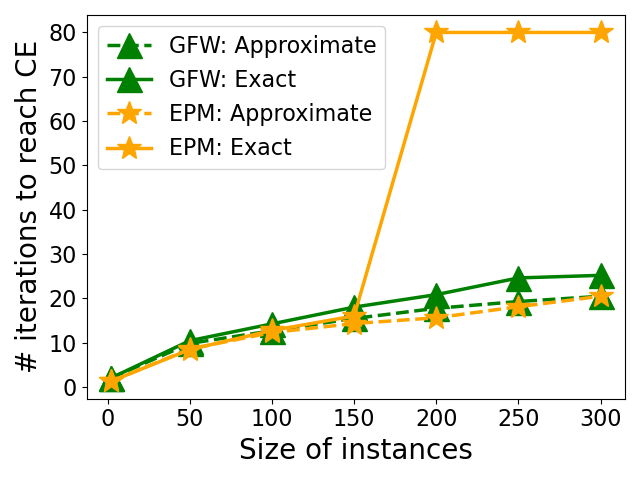}
    \includegraphics[width=0.28\linewidth]{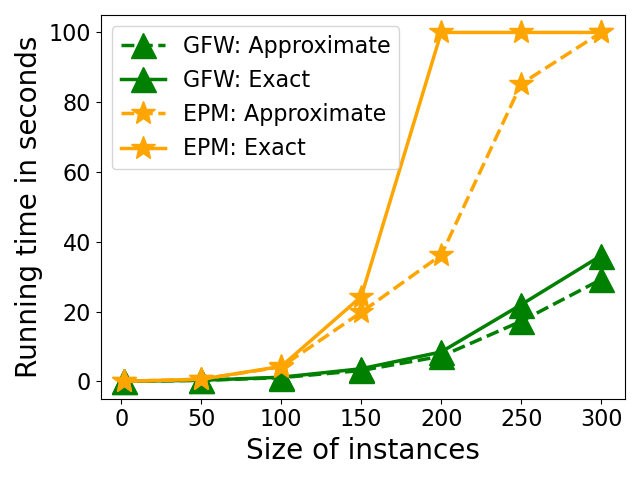}
    \includegraphics[width=0.28\linewidth]{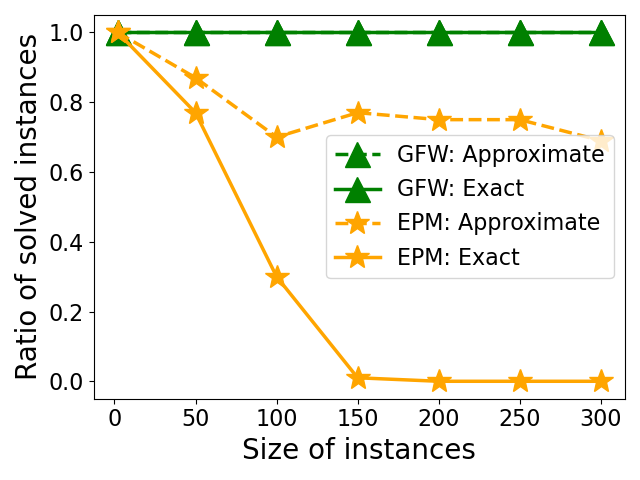}
    \makebox[0pt][r]{\makebox[30pt]{\raisebox{40pt}{\rotatebox[origin=c]{90}{randint(1,1000)}}}}%
    \includegraphics[width=0.28\linewidth]{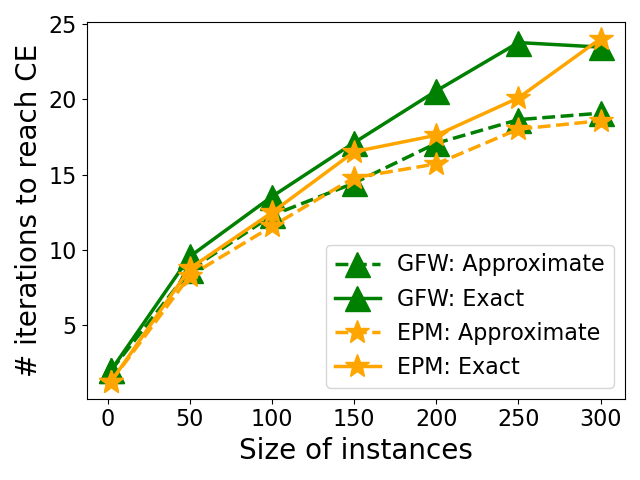}
    \includegraphics[width=0.28\linewidth]{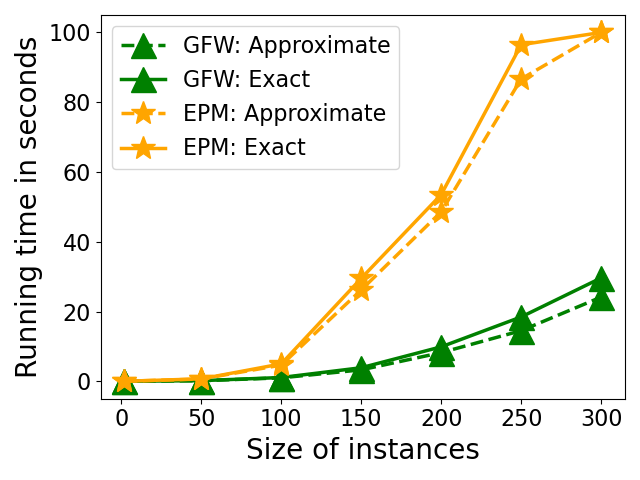}
    \includegraphics[width=0.28\linewidth]{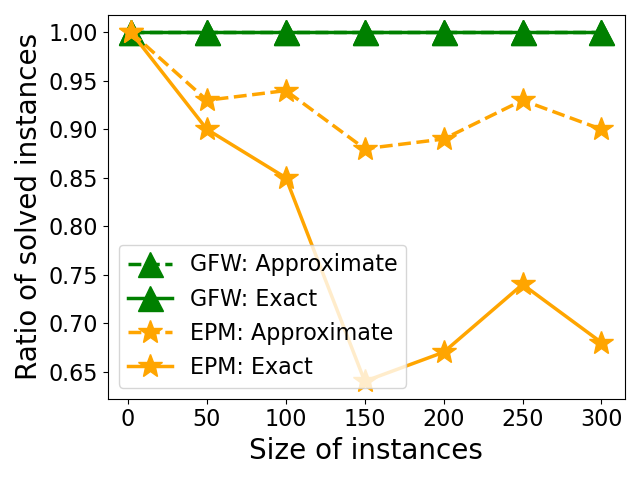}
    \caption{
    Numerical results with very small Gurobi tolerance parameters.}
    \label{fig:random chores low tolerance parameters}
\end{figure}

\end{document}